\theoremstyle{plain}
\newtheorem{theorem}{Theorem}[section]
\newtheorem{lemma}[theorem]{Lemma}
\theoremstyle{definition}
\newtheorem{defn}[theorem]{Definition}
\newcommand{\secretlabel}[1]{
\color{white}
\begin{equation}
\label{#1}
\end{equation}
\color{black}
\vspace{-41pt}
}
\newcommand{\Hom}{\mathrm{Hom}}
\newcommand{\Coh}{\mathrm{Coh}}
\renewcommand{\H}{\mathbf{H}}
\newcommand\swap{\mathrm{swap}}
\newcommand{\llangle}{\langle \hspace{-2pt} \langle}
\newcommand{\rrangle}{\rangle \hspace{-2pt} \rangle}
\newcommand\n{\ensuremath{\mathrm{n}}}
\newcommand\bigotimesn[1]{ \bigotimes \hspace{-2pt} \raisebox{-2pt}{${}_{#1}$} }
\newcommand{\cat}[1]{\ensuremath{\mathbf{#1}}}
\renewcommand{\a}[1]{a^{\phantom{\dag}}_{#1}}
\newcommand{\id}[1]{\mathrm{id}_{#1}}
\newcommand{\name}[1]{\ulcorner #1 \urcorner}
\newcommand\ts{\ensuremath{s}}
\newcommand{\pa}{\:\!}
\newcommand{\pb}{\;\!}
\newcommand{\ma}{\pb\!}
\newcommand{\mb}{\pa\!}
\renewcommand{\to}{\mbox{\begin{diagram}[width=4pt] {} & \rTo\end{diagram}}\ma}
\newcommand{\centregraphics}[1]{\begin{array}{c}\includegraphics{#1}\end{array}}
\begin{document}

\title{A categorical framework for the \\quantum harmonic oscillator}
\author{Jamie Vicary
\\
Imperial College London
\\
\texttt{jamie.vicary@imperial.ac.uk}
}

\date{June 5, 2007}
\maketitle

\begin{abstract}

This paper describes how the structure of the state space of the quantum harmonic oscillator can be described by an adjunction of categories, that encodes the raising and lowering operators into a commutative comonoid. The formulation is an entirely general one in which Hilbert spaces play no special role.

Generalised coherent states arise through the hom-set isomorphisms defining the adjunction, and we prove that they are eigenstates of the lowering operators. Surprisingly, generalised exponentials also emerge naturally in this setting, and we demonstrate that coherent states are produced by the exponential of a raising morphism acting on the zero-particle state. Finally, we examine all of these constructions in a suitable category of Hilbert spaces, and find that they reproduce the conventional mathematical structures.
\end{abstract}

\section{Introduction}

With \cite{csqp}, a research programme was begun to describe many of the familiar aspects of quantum mechanics --- such as Hilbert spaces, unitary operators, states, inner products, superposition and entanglement --- in terms of the structure of a category. This programme suggests that we do quantum mechanics in an entirely new way: rather than explicitly working with the mathematics of Hilbert spaces, we should construct the category of Hilbert spaces and identify those parts of the categorical structure which are needed for the quantum mechanics that we want to do. Our quantum mechanics can then be done abstractly, in terms of this categorical structure. Such an approach leads to deeper insights into the mathematical structures which are necessary for quantum mechanics, as well as providing new avenues for its generalisation.

In this paper, we describe a way to extend the categorical description of quantum mechanics to encompass the quantum harmonic oscillator, one of the most basic and important quantum systems. We accomplish this with the theory of adjunctions, itself one of the most basic and important tools of category theory. The approach is based on the well-known fact, first discussed in \cite{llfock}, that the symmetric Fock space over a given single-particle Hilbert space is given by the canonical free commutative monoid object over that Hilbert space. In fact, we shall see that much more can be obtained from the free commutative monoid construction; every part of the adjunction structure corresponds naturally to some aspect of the traditional mathematical treatment of the quantum harmonic oscillator. A key observation is that the  natural isomorphisms
\begin{equation*}
F(A \oplus B) \simeq F(A) \otimes F(B)
\end{equation*}
should be unitary, where $F(A)$ represents the Fock space over a Hilbert space $A$, and $\oplus$ and $\otimes$ represent direct sum and tensor product respectively. The standard free commutative monoid functor will not, however, give rise to a unitary natural isomorphism in general.

We include introductions to the necessary category theory, and to the necessary physics, in the form of self-contained sections which can be read or skipped as desired.
\section{Categorical tools for quantum mechanics\label{tools}}

\subsection{Summary}

In this section we introduce the category theory that is useful for describing finite-dimensional quantum mechanics, which we will be using throughout this paper to work with the quantum harmonic oscillator. The structure that we will define by the end of this section is that of $\dag$-compact-closed categories with $\dag$-biproducts.

All of the structures in this section can be applied to the category $\cat{FdHilb}$, which has finite-dimensional Hilbert spaces as objects and bounded linear operators as morphisms. As we introduce each element of the structure, we will show how it can be formulated in this category, and describe how it captures some aspect of quantum mechanics as discussed in \cite{csqp}.

\subsection{Symmetric monoidal structure}
\label{symmon}
A symmetric monoidal category \cat{C} has a functor
\begin{equation*}
\otimes : \cat{C}\times \cat{C} \to \cat{C}
\end{equation*}
called the tensor product, and a monoidal unit object $I$, such that for all objects $A$, $B$ and $C$ in $\cat{C}$, the following isomorphisms exist:
\begin{align*}
\rho_A &: A \otimes I \simeq A
\\
\lambda_A &: I \otimes A \simeq A
\\
\alpha_{A,B,C} &: A \otimes (B \otimes C) \simeq (A \otimes B) \otimes C
\\
\mathrm{swap}^{\otimes}_{A,B} &: A \otimes B \simeq B \otimes A
\end{align*}
We also require that the $\swap ^\otimes$ isomorphisms satisfy the equation
\begin{equation*}
\swap ^{\otimes} _{B,A} \circ \swap ^\otimes _{A,B} = \id{A} \otimes \id{B}
\end{equation*}
for all objects $A$ and $B$. We require that all of these isomorphisms are natural; in other words, that they can be expressed as the stages of natural transformations. They must also be compatible with each other, so that any diagram built solely from these isomorphisms and their inverses is commutative. These isomorphisms imply that, up to isomorphism, $\otimes$ is a commutative, associative monoidal operation on our category, with unit object $I$. For readability we shall often exploit the coherence theorem \cite{cwm}, which proves that any monoidal category is equivalent to one for which all structural isomorphisms are identities; this allows us to neglect these isomorphisms (except for the $\swap ^\otimes$ isomorphism) when it is convenient. A good general reference for the theory of symmetric monoidal categories is \cite{cwm}.

The category $\cat{FdHilb}$ is a symmetric monoidal category, with the operation $\otimes$ given by the tensor product of Hilbert spaces. The tensor unit $I$ is the one-dimensional Hilbert space $\mathbb{C}$, the complex numbers. Observing that we can access the multiplicative monoid structure of the complex numbers by the endomorphisms of $\mathbb{C}$, we will refer to the endomorphism monoid $\Hom(I,I)$ in any monoidal category as the monoid of \emph{scalars}. It can be proved that the scalars in any monoidal category are commutative \cite{coherenceccc}.

Also, we note that the vectors in a finite-dimensional Hilbert space $A$ are in correspondence with bounded linear operators $\mathbb{C} \to A$ in $\cat{FdHilb}$. Since a state of a Hilbert space is given by a nonzero element of that Hilbert space, we can generalise the notion of state directly: given an arbitrary object $B$ in a symmetric monoidal category \cat{C}, the \emph{states} of $B$ are the nonzero morphisms $\Hom _\cat{C} (I, B)$. (We will encounter an abstract formulation of zero morphisms in section \ref{biproducts}.)

\subsection{Graphical representation}

Monoidal categories have a useful graphical representation. We represent objects of the category as  vertical lines, and the tensor product operation as lines placed side-by-side. Morphisms are junction boxes, with lines coming in and lines going out. For example, given objects $A$, $B$, and $C$, and morphisms $f: A\otimes B \to B \otimes A$ and $g: C \otimes A \to I$, we represent the composition
\begin{equation*}
(\id{B} \otimes g) \circ ( \id{B} \otimes \mathrm{swap} ^\otimes _{A,C}) \circ (f \otimes \id{C}) : A \otimes B \otimes C \to B
\end{equation*}
using the following diagram:
\begin{equation*}
{
\psfrag{f}{\hspace{-5pt}\raisebox{-1pt}{$f$}}
\psfrag{g}{\hspace{-5pt}\raisebox{0pt}{$g$}}
\psfrag{A}{$A$}
\psfrag{B}{$B$}
\psfrag{C}{$C$}
\includegraphics{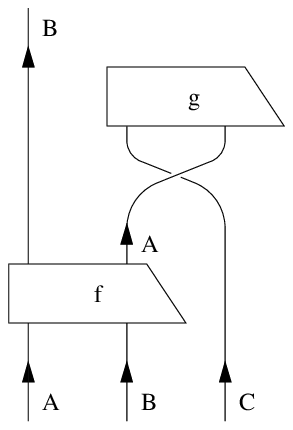}
}
\end{equation*}
The diagram is read from bottom to top. We represent the junction boxes using a wedge shape as we will later be rotating and reflecting them, and we want to break the symmetry so that their orientation can be identified. The tensor unit $I$ is represented by a blank space; in other words, it is not represented at all. The $\mathrm{swap}^{\otimes}$ isomorphism is represented by a crossing of lines, and because we are working in a symmetric monoidal category, it makes no difference which line goes over which. This graphical representation makes the structural isomorphisms $\rho$, $\lambda$, $\alpha$ and $\mathrm{swap} ^\otimes$ intuitive, and so is often a useful way to understand complex constructions in monoidal categories.

\subsection{Biproducts\label{biproducts}}

In some categories, products and coproducts become unified in a particularly elegant way.
\begin{defn} A category has a \emph{zero object}, written $0$, when it has isomorphic initial and terminal objects. Each hom-set $\Hom(A,B)$ then gains a \emph{zero morphism}, written $0_{A,B}$, which is the unique morphism in the hom-set that factors through the zero object.
\end{defn}
\begin{defn} A category has \emph{biproducts} if and only if, for all objects $A$ and $B$, the unique arrow $w_{A,B}:A + B \to A \times B$ that satisfies the equations
\begin{align*}
\pi_A \circ w_{A,B} \circ i_A &= \id{A}
\\
\pi_B \circ w_{A,B} \circ i_A &= 0_{A,B}
\\
\pi_A \circ w_{A,B} \circ i_B &= 0 _{B,A}
\\
\pi_B \circ w_{A,B} \circ i_B &= \id{B}
\end{align*}is an isomorphism, where $i_A$ and $i_B$ are the coproduct injections and $\pi_A$ and $\pi_B$ are the product projections.
\end{defn}

\noindent In a category with biproducts, we represent the isomorphic product and coproduct by the symbol $\oplus$, and call it a biproduct.

A category with biproducts is \emph{enriched over commutative monoids}; in other words, any hom-set $\Hom(A,B)$ carries the structure of a commutative monoid, with unit $0_{A,B}$. We interpret the monoid action as being addition, and we define it in the following way:
\begin{equation}
\begin{diagram}[height=25pt]
\label{biproductsum}
A&\rTo^{f+g}&B\\
\dTo<{\Delta_A}&&\uTo>{\nabla _B}\\
A\oplus A&\rTo_{f\oplus g}&B\oplus B
\end{diagram}
\end{equation}
Here, $\Delta_A$ is the diagonal for the product, and $\nabla_A$ is the codiagonal for the coproduct.

In a category with biproducts, we can always choose the canonical injections and projections to satisfy some useful properties, and when we talk about `the' injections and projections, it will be these well-behaved ones which are meant. For all objects \mbox{$B:=A_1\oplus A_2\oplus\ldots\oplus A_N$}, we can choose projection morphisms $\pi_n:B\to A_n$ and injection morphisms $i_n:A_n\to B$, such that the following properties hold:
\begin{enumerate}
\item $\pi_{n} \circ i_{n}=\id{A_n}$ for all $n$;
\item $\pi_{m}\circ i_{n}=0_{A_n,A_m}$ when $n\neq m$;
\item $\sum_{n=1}^N(i_{n}\circ \pi_{n})=\id{B}$, where this sum is as defined in diagram (\ref{biproductsum}).
\end{enumerate} 
In fact, in a category which is enriched over commutative monoids, this can be taken as the definition of a biproduct structure.

Naturality of the diagonal and codiagonal morphisms implies that in a category with biproducts, composition is linear. In other words, for all $f,f':A \to B$ and \mbox{$g,g':B \to C$}, we have
\begin{equation*}
g \circ (f + f') = (g \circ f) + (g \circ f')
\qquad\qquad
(g+g') \circ f = (g \circ f) + (g ' \circ f)
\end{equation*}

The scalars in the category interact nicely with a biproduct structure. Not only can we add elements of hom-sets, but we can also multiply them by scalars in a well-defined way.
\begin{defn} For any morphism $f:A \to B$, for any objects $A$ and $B$, and for any scalar $s: I \to I$, the scalar multiple $s \cdot f$ is defined in the following way\footnote{Of course, there are other equivalent definitions.}:
\begin{equation}
\begin{diagram}[midshaft,height=25pt]
A & \rTo ^{s \cdot f} & B
\\
\dTo <{\lambda _A ^{-1}} & & \uTo>{\lambda _B}
\\
I \otimes A & \rTo_{s \otimes f} & I \otimes B
\end{diagram}
\end{equation}
\end{defn}
\noindent In fact, the scalars interact with the biproduct structure in such a way that each hom-set gains the structure of a \emph{commutative semimodule}, a weakening of the notion of a vector space. In particular, for all scalars $s$ and $p$ and all morphisms $f$, we have
\begin{equation*}
(s \circ p) \cdot f = s \cdot (p \cdot f).
\end{equation*}
Scalar multiplication is also well-behaved with respect to composition, meaning that for any $g$ with $g \circ f$ well-defined, we have
\begin{equation*}
g \circ (s \cdot f) = (s \cdot g) \circ f = s \cdot (g \circ f).
\end{equation*}

The category $\cat{FdHilb}$ has biproducts, given by the direct sum of Hilbert spaces. In this case, the $i_n$ are injections of subspaces, and the $\pi_n$ are projections of subspaces. In an arbitrary symmetric monoidal category with biproducts, the commutative monoid structure induced on the hom-sets gives us a way to add states: given arbitrary \mbox{$\phi,\psi:I \to A$}, the superposition is given by \mbox{$\phi+\psi:I \to A$}. In $\cat{FdHilb}$, this agrees with the usual notion of addition of vectors.

\subsection{$\dag$-categories}
\label{daggerstructure}
\begin{defn} A category \cat{C} is a \emph{$\dag$-category} if it is endowed with a contravariant endofunctor $\dag:\cat{C} \to \cat{C}$, which is the identity on objects, and which satisfies $\dag \circ \dag = \id{\cat{C}}$.
\end{defn}
\noindent A category might admit multiple endofunctors satisfying this condition, but if we refer to it as a $\dag$-category then we must have a particular one in mind, which we will refer to as $\dag$. We also note that any $\dag$-category must be isomorphic to its opposite.

We can use this structure to adapt some useful terminology from the mathematics of Hilbert spaces.
\begin{defn} For any morphism $f$, we call $f ^\dag$ its \emph{adjoint}.
\end{defn}
\begin{defn} A morphism $f:A \to A$ is \emph{self-adjoint} if it satisfies $f ^\dag = f$. 
\end{defn}
\begin{defn} A morphism $f:A \to B$ is an \emph{isometry} if it satisfies $f^\dag \circ f = \id{A}$; in other words, if its adjoint is its retraction.
\end{defn}
\begin{defn} A morphism $f:A \to B$ is \emph{unitary} if it satisfies $f^\dag \circ f = \id{A}$ and $f \circ f^\dag = \id{B}$; in other words, if both $f$ and $f ^\dag$ are isometries. In this case, $A$ and $B$ are of course isomorphic.
\end{defn}

If a $\dag$-category has additional structure, we will often require that the additional structure be compatible with the $\dag$ functor.

\begin{defn} A $\dag$-category has \emph{$\dag$-biproducts} iff it has biproducts, such that the canonical projections and injections are related by the $\dag$ functor.
\end{defn}
\begin{defn} A \emph{symmetric monoidal $\dag$-category} is defined in the obvious way, but with the extra constraints that the canonical isomorphisms associated to the symmetric monoidal structure be unitary.
\end{defn}
\begin{defn}[Notation due to Selinger \cite{selingeridempotents}]
An equaliser $d:D \to A$ is a \mbox{\emph{$\dag$-equaliser}} if $d$ is an isometry (note that it must automatically be monic, by the properties of the equaliser.)
\end{defn}

Finally, we note that natural transformations defined between functors connecting $\dag$-categories may themselves admit a notion of adjoint.
\begin{defn} For $\dag$-categories \cat{C} and \cat{D}, a functor $J:\cat{C} \to \cat{D}$ is \emph{compatible} with the $\dag$-structures iff for all morphisms $f$ in \cat{C}, we have $J(f^\dag) = (J(f)) ^\dag$, where we employ the $\dag$ on \cat{C}\ and \cat{D} respectively.
\end{defn}
\begin{lemma}
Given a natural transformation $n: J \dot{\to} K$ for functors $J,\,K:\cat{C} \to \cat{D}$ compatible with $\dag$-structures on \cat{C} and \cat{D}, then $n$ has an adjoint natural transformation $n ^\dag : K \dot{\to} J$, defined at each stage $A$ of $\cat{C}$ by $(n^\dag) _A := (n_A)^\dag$.
\end{lemma}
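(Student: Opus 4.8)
The plan is to verify two things: that the stagewise assignment $(n^\dag)_A := (n_A)^\dag$ has the correct type to constitute a natural transformation from $K$ to $J$, and that the resulting family is natural. The first point is immediate from the fact that $\dag$ is the identity on objects: since $n_A : J(A) \rightarrow K(A)$, we have $(n_A)^\dag : K(A) \rightarrow J(A)$, which is exactly the type demanded of the component at $A$ of a transformation from $K$ to $J$.

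For naturality, fix an arbitrary morphism $g : B \rightarrow A$ of $\cat{C}$; what must be shown is
\[
(n^\dag)_A \circ K(g) = J(g) \circ (n^\dag)_B ,
\]
that is, $(n_A)^\dag \circ K(g) = J(g) \circ (n_B)^\dag$. The idea is to derive this square by applying the $\dag$ functor to a suitable naturality square of $n$ itself. Because $\dag \circ \dag = \id{\cat{C}}$, we may write $g = f^\dag$ for the morphism $f := g^\dag : A \rightarrow B$, and naturality of $n$ at $f$ gives $n_B \circ J(f) = K(f) \circ n_A$ as an equation of morphisms $J(A) \rightarrow K(B)$. Applying the contravariant functor $\dag$ reverses the order of each composite and produces $J(f)^\dag \circ (n_B)^\dag = (n_A)^\dag \circ K(f)^\dag$. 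Now compatibility of $J$ and $K$ with the $\dag$-structures lets us rewrite $J(f)^\dag = J(f^\dag) = J(g)$ and $K(f)^\dag = K(f^\dag) = K(g)$, so the equation becomes $J(g) \circ (n_B)^\dag = (n_A)^\dag \circ K(g)$, which is precisely the naturality square required of $n^\dag$ at $g$. Since $g$ was arbitrary, the family $n^\dag$ is natural.

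There is no real obstacle here --- the computation is essentially forced --- and the only points needing care are bookkeeping ones. One must keep track of the fact that $\dag$ is \emph{contravariant}, so that applying it to the naturality square of $n$ yields the square with all arrows reversed, which is exactly what makes it land on a valid naturality square for a transformation from $K$ to $J$ rather than on something ill-typed. One must also invoke the compatibility hypothesis at precisely the right step, in order to turn the $J(f)^\dag$ and $K(f)^\dag$ that $\dag$ literally outputs into the $J(f^\dag)$ and $K(f^\dag)$ that are indexed by genuine morphisms of $\cat{C}$; without compatibility the lemma is false in general. Finally, although it is not part of the statement, it is worth recording that the construction is involutive: since $((n_A)^\dag)^\dag = n_A$ at every stage, we have $(n^\dag)^\dag = n$. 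This justifies the name ``adjoint'' and shows that passing from $n$ to $n^\dag$ is a bijection between the natural transformations from $J$ to $K$ and those from $K$ to $J$.
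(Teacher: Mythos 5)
Your proof is correct: the paper states this lemma without giving a proof, treating it as routine, and your argument --- applying the contravariant $\dag$ of $\cat{D}$ to the naturality square of $n$ at $f := g^\dag$ and then using compatibility of $J$ and $K$ with the $\dag$-structures to rewrite $J(f)^\dag = J(g)$ and $K(f)^\dag = K(g)$ --- is precisely the standard verification the author had in mind, with the typing of the components handled correctly via $\dag$ being the identity on objects. Your closing observation that the construction is involutive, $(n^\dag)^\dag = n$, is a harmless and accurate addition.
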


\subsection{Duals and $\dag$-compact closure}
\label{dualssection}

A symmetric monoidal category has \emph{duals}, or equivalently is \emph{compact-closed}, if for every object $A$ there exists a second object $A^*$ and morphisms
\begin{equation*}
\zeta _A:I\to A\otimes A^* \quad\quad\quad\quad \theta _A:A^*\otimes A \to I
\end{equation*}
which satisfy the following diagrams:
\begin{gather}
\label{dualeq1}
\begin{diagram}[midshaft,width=59pt,height=15pt]
A & \rTo^{\zeta_A \otimes \id{A}} & A \otimes A^* \otimes A & \rTo^{\id{A} \otimes \theta _A} & A
\\
\dEq & & & & \dEq
\\
A & \rTo^{\id{A}} & & & A
\end{diagram}
\\\nonumber
\\
\label{dualeq2}
\begin{diagram}[midshaft,width=55pt,height=15pt]
A^* & \rTo^{\id{A^*} \otimes \zeta _A} & A^* \otimes A \otimes A^* & \rTo^{\theta _A \otimes \id{A^*}} & A^*
\\
\dEq & & & & \dEq
\\
A^* & \rTo^{\id{A^*}} & & & A^*
\end{diagram}
\end{gather}
These equations imply that $A^*$ is unique up to isomorphism, and that $(A^*)^* \simeq A$. However, it is useful to talk about $A^*$ as if it were unique, and to use $(A^*)^*=A$ as if it held as an equation, knowing that what we do will only hold up to isomorphism.
We refer to $A^*$ as \emph{the dual} of $A$.

We can use the duals to define a natural isomorphism of hom-sets
\begin{equation}\label{homsetiso}
S_{A, B, C} : \Hom(A\otimes B,C) \simeq \Hom(B,C\otimes A^*)
\end{equation}
for all objects $A$, $B$ and $C$, as shown in the following diagrams for any $f:A\otimes B \to C$ and $f':B \to C \otimes A^*$ related by the isomorphism:
\begin{gather}
\begin{diagram}[midshaft,width=90pt,height=25pt]
B & \rTo^{f' = S_{A, B, C}(f)} & C \otimes A^*
\\
\dTo < {\id{B} \otimes \zeta _A} & & \uTo>{f \otimes \id{A^*}}
\\
B \otimes A \otimes A^* & \rTo_{\mathrm{swap} ^\otimes _{B,A} \otimes \id{A^*}} & A \otimes B \otimes A^*
\end{diagram}
\\\nonumber
\\
\begin{diagram}[midshaft,width=90pt,height=25pt]
A \otimes B & \rTo^{f = S ^{-1} _{A, B, C}(f')} & C
\\
\dTo < {\mathrm{swap} ^\otimes _{A,B}} & & \uTo >{\id{C} \otimes \theta _A}
\\
B \otimes A & \rTo_{f' \otimes \id{A}} & C \otimes A^* \otimes A
\end{diagram}
\end{gather}
Here, the morphisms of the form $\mathrm{swap} ^\otimes  _{A,B}$ are the symmetry isomorphisms that make up part of the symmetric monoidal structure. Equations (\ref{dualeq1}) and (\ref{dualeq2}) ensure that the diagrams are inverse to each other in the correct way.

The graphical representation of the dual structure is especially powerful. If an object $A$ is represented by a line with arrow oriented up the page, then its dual $A^*$ has an arrow pointing down the page, and vice-versa. The duality morphisms $\zeta _A$ and $\theta _A$ then take the form of lines which loop back on themselves:
\begin{equation*}
\begin{array}{cc}
{
\psfrag{A}{$A$}
\psfrag{As}{$A^*$}
\includegraphics{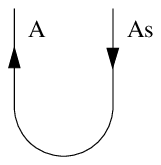}
}
&
{
\psfrag{A}{$A$}
\psfrag{As}{$A^*$}
\includegraphics{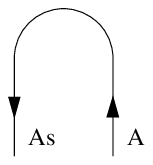}
}
\\
\qquad
\zeta _A: I \to A \otimes A^*
\qquad
&
\qquad
\theta _A:A^* \otimes A \to I
\qquad
\end{array}
\end{equation*}
Equations (\ref{dualeq1}) and (\ref{dualeq2}) then simply state that curves in the graphical representation can be straightened out:
\begin{equation*}
\psfrag{A}{$A$}
\psfrag{As}{$A^*$}
\begin{array}{c}
\includegraphics{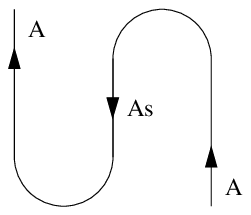}
\hspace{-9pt}
\end{array}
=
\begin{array}{c}
\includegraphics{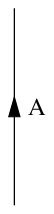}
\end{array}
\qquad\qquad
\begin{array}{c}
\includegraphics{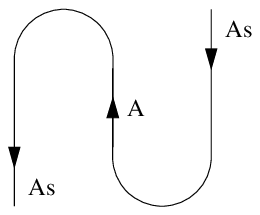}
\hspace{-17pt}
\end{array}
=
\begin{array}{c}
\includegraphics{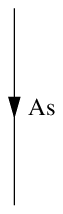}
\end{array}
\end{equation*}
This is very intuitive; when working in monoidal categories with duals, it is often much easier to work with expressions graphically rather than symbolically, as the eye can easily spot simplifications.

The isomorphism (\ref{homsetiso}) also has a straightforward interpretation in the graphical representation. To obtain $S_{A,B,C}(f): B \to C \otimes A^*$ from $f:A \otimes B \to C$, one simply `bends around' the line representing the object $A$:
\begin{equation*}
\psfrag{A}{$A$}
\psfrag{As}{$A^*$}
\psfrag{B}{$B$}
\psfrag{C}{$C$}
\psfrag{f}{\raisebox{-1pt}{\hspace{-8pt}$f$}}
\begin{array}{ccc}
\includegraphics{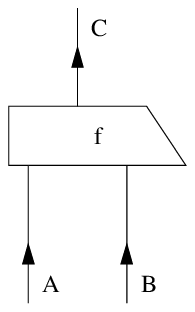}
&\qquad&
\includegraphics{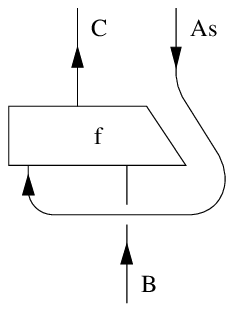}
\hspace{-10pt}
\\
\quad
f: A \otimes B \to C
\quad
&\qquad&
\quad
\theta_{A,B,C}(f): B \to C \otimes A^*
\quad
\end{array}
\end{equation*}
The inverse to the isomorphism bends the line around again, and by the `straightening-out' rule this gives back the original morphism $f$.

There are many things we can achieve by bending lines. For example, for any morphism \mbox{$h:A \to B$}, we could bend around both the $A$ line and the $B$ line. This gives a morphism \mbox{$h^*: B^* \to A^*$}, which we call the \emph{transpose} or \emph{dual} of $h$. This operation is involutive, so $(h^*)^* = h$. It is also functorial, in the sense that it defines a contravariant functor
\[
(-)^*:\cat{C^\mathrm{op}} \to \cat{C}
\]
satisfying $((-)^*)^* = \id{\cat{C}}$. We call this the \emph{duality functor}. In our graphical representation, there is a simple way to represent the duality:
\begin{equation*}
\psfrag{As}{$A^*$}
\psfrag{Bs}{$B^*$}
\psfrag{h}{\hspace{-5pt}\raisebox{-2pt}{$h$}}
\psfrag{hs}{\hspace{6pt}\raisebox{-2pt}{$h^*$}}
\begin{array}{c}
\includegraphics{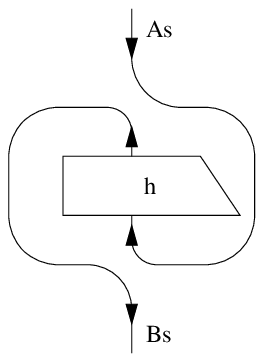}
\end{array}
=
\begin{array}{c}
\includegraphics{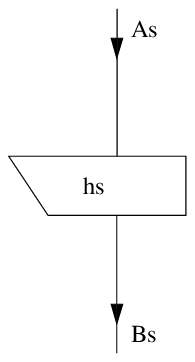}
\end{array}
\end{equation*}
We `straighten out the lines', rotating the junction box for $h$ by $180 ^\circ$ as we do so.

Given $h:A \to B$, we could also choose to bend around just the $A$ line, giving a morphism $\name{h}:I \to B \otimes A^*$. We call this the \emph{name} of $h$. Given a second morphism $k:B \to C$, we can perform the composition $k \circ h$ in terms of the names $\name{h}$ and $\name{k}$ by applying the equation $(\id{C} \otimes \theta _B \otimes \id{A^*}) \circ (\name{k} \otimes \name{h})=\name{k \circ h}$, which is simple to prove using the graphical representation:
\begin{equation*}
\psfrag{k}{\hspace{-5pt}\raisebox{-2pt}{$k$}}
\psfrag{h}{\hspace{-5pt}\raisebox{-2pt}{$h$}}
\psfrag{kch}{\hspace{-10pt}\raisebox{-1.5pt}{$k \circ h$}}
\psfrag{A}{$A$}
\psfrag{As}{$A^*$}
\psfrag{C}{$C$}
\psfrag{B}{$B$}
\psfrag{Bs}{$B^*$}
\begin{array}{c}
\begin{diagram}[height=25pt,labelstyle=\scriptstyle]
C \otimes A^*
\\
\uTo<{\id{C} \otimes \theta_B \otimes \id{A^*}}
\\
C \otimes B^* \otimes B \otimes A^*
\\
\uTo<{\name{k} \otimes \name{h}}
\\
I
\end{diagram}
\end{array}
\hspace{25pt}
=
\begin{array}{c}
\hspace{-5pt}
\includegraphics{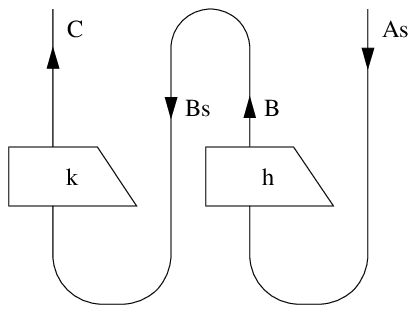}
\hspace{-24pt}
\end{array}
=
\begin{array}{c}
\hspace{-5pt}
\includegraphics{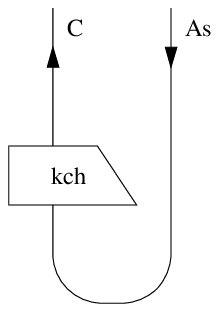}
\hspace{-24pt}
\end{array}
=
\begin{array}{c}
\begin{diagram}[height=25pt,labelstyle=\scriptstyle]
C \otimes A^*
\\
\uTo<{\name{k \circ h}}
\\
I
\end{diagram}
\end{array}
\end{equation*}

A compact-closed category is \emph{$\dag$-compact-closed}, or is a \emph{$\dag$-category with duals}, if there also exists  a \emph{conjugation functor}
\begin{equation}
(-)_* : \cat{C} \to \cat{C}
\end{equation}
satisfying $((-)_*)_* = \id{\cat{C}}$, which is compatible with the duality functor, in the sense that $((-)_*)^* = ((-)^*)_*$. This composite is then called the \emph{adjoint functor} or \emph{$\dag$-functor}, as described in section \ref{daggerstructure}, and is denoted in the following way:
\begin{equation}
(-)^\dag := ((-)^*)_*.
\end{equation}
We extend our graphical notation in a standard way to represent $(-)_*$ as flipping over a vertical axis, and changing the direction of arrows. A vertical-axis flip commutes with a $180 ^\circ$ rotation to produce a horizontal-axis flip, and so we use a horizontal-axis flip along with a change of arrow direction to represent $(-)^\dag$.

All of these structures appear in $\cat{FdHilb}$. The dual of a Hilbert space $A$ given by is its dual in the usual sense, the Hilbert space of bounded linear operators $A \to \mathbb{C}$. Given any basis $v_n$ of $A$, we obtain a basis of $A^*$ from the linear operators $v_n^* := (-,v_n) _A$, where $({-},{-}) _A$ is the inner product on $A$. We can then define the unit and counit as $\zeta _A =\sum_n v_n \otimes v_n^*$ and $\theta _A \circ ( v_n^* \otimes v_m) = \delta_{nm}$ respectively. These are bounded linear operators, and so are valid morphisms in the category\footnote{Unfortunately, they are only bounded because the underlying Hilbert spaces are finite-dimensional. The infinite-dimensional case raises significant difficulties, which we discuss in section \ref{conventional}.}. The transpose of a morphism is the matrix transpose in the usual sense, and the conjugation functor $(-)_*$ is complex conjugation. This then produces the adjoint functor $(-)^\dag$ as the familiar matrix conjugate-transpose operation.

Given the existence of an isomorphism $A \simeq A^*$, we can interpret the unit \mbox{$\zeta_A  : I \to A \otimes A^*$} as preparation of a particular Bell entangled state, and the counit $\theta _A : A^* \otimes A \to I$ as performing a Bell measurement. As discussed in \cite{csqp}, this allows use of the graphical calculus to aid design of quantum algorithms, such as the quantum teleportation protocol.

\subsection{Internal monoids and comonoids}

A conventional monoid is built from a set $S$ of elements, along with a multiplication map \mbox{$g:S \times S \to S$}, where $\times$ is the cartesian product, and a unit map $u:1 \to S$, where 1 is the one-element set. The associativity, unit, and commutativity laws can all then be phrased as categorical diagrams which $g$ and $u$ must satisfy.

However, this can be generalised: given any monoidal category \cat{C}, we can replace the cartesian product $\times$ in this definition with the monoidal product $\otimes$, and the one-element set $1$ with the monoidal unit object $I$. In a symmetric monoidal category, an \emph{internal commutative monoid} $(A,g,u)_+$ consists of an object $A$, a multiplication morphism \mbox{$g:A \otimes A \to A$} and a unit morphism $u:I \to A$, which satisfy associativity, unit and commutativity diagrams. If the context is clear, we shall often just refer to them as monoids. We will use the following graphical representation for the multiplication and unit morphisms for a monoid:
\begin{equation*}
\begin{array}{c}
\includegraphics{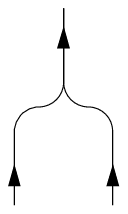}
\\
g:A \otimes A \to A
\end{array}
\qquad\qquad
\begin{array}{c}
\includegraphics{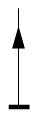}
\\
u: I \to A
\end{array}
\end{equation*}
Here, the vertical lines are all instances of the object $A$.  In terms of this graphical representation, the associativity, unit and commutativity laws are as follows:
\secretlabel{assoc}
\secretlabel{unit}
\secretlabel{comm}
\begin{gather*}
\begin{array}{c}
\begin{array}{c}
\includegraphics{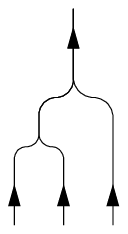}
\end{array}
\hspace{-5pt}
=
\hspace{-5pt}
\begin{array}{c}
\includegraphics{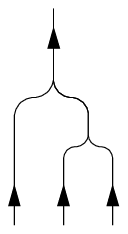}
\end{array}
\\
\textrm{Associativity law (\ref{assoc})}
\end{array}
\qquad
\begin{array}{c}
\begin{array}{c}
\includegraphics{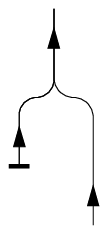}
\end{array}
\hspace{-5pt}
=
\hspace{-5pt}
\begin{array}{c}
\includegraphics{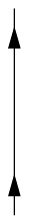}
\end{array}
\hspace{-5pt}
=
\hspace{-5pt}
\begin{array}{c}
\includegraphics{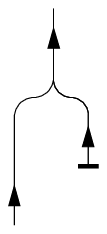}
\end{array}
\\
\textrm{Unit laws (\ref{unit})}
\end{array}
\qquad
\begin{array}{c}
\begin{array}{c}
\includegraphics{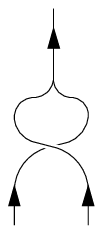}
\end{array}
\hspace{-5pt}
=
\hspace{-5pt}
\begin{array}{c}
\includegraphics{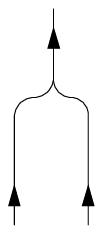}
\end{array}
\\
\textrm{Commutativity law (\ref{comm})}
\end{array}
\end{gather*}
The dual notion is an \emph{internal comonoid}. An internal comonoid $(A,h,v) _\times$ consists of an object $A$, a comultiplication morphism $h:A \to A \otimes A$ and a counit morphism $v:A \to I$, such that these morphisms satisfy the coassociativity, counit and cocommutativity laws, which are just the associativity, unit and commutativity laws with the arrows reversed. The subscript $\times$ for a comonoid and $+$ for a monoid is inspired by the behavior of products and coproducts in category theory: an object together with its diagonal and terminal morphisms forms a comonoid, and an object together with its codiagonal and initial morphisms forms a monoid.

We require morphisms of internal monoids to preserve the multiplication and the unit, just as for homomorphisms of conventional monoids. Given a symmetric monoidal category \cat{C}, and two internal commutative monoids $(A,g,u)_+$ and $(B,h,v)_+$, a morphism $m:A \to B$ is a morphism of comonoids if and only if the following diagram commutes:
\begin{equation}
\begin{diagram}[width=50pt,height=20pt]
A\otimes A & \rTo^{m \otimes m} & B\otimes B \\
\dTo<g&&\dTo>h \\
A & \rTo^m & B \\
\uTo<u &&\uTo>v \\
I & \lEq & I \\
\end{diagram}
\vspace{5pt}
\end{equation}
The dual definition, of a morphism of comonoids, is obtained by reversing all the arrows in this diagram. We use the notion of morphism of monoids to define \cat{C_+}, the category of internal commutative monoids in the symmetric monoidal category \cat{C}, which has internal commutative monoids in \cat{C} as objects and morphisms of monoids as arrows. We can similarly construct \cat{C_\times}, the category of internal cocommutative comonoids in \cat{C}.

We now look at this in the context of all the structure that we have developed, and consider a category \cat{C} with the structure of a $\dag$-category with biproducts. We can make the following observations about internal monoids and comonoids in \cat{C}, all of which have reasonably simple proofs based on the contents of this section.

Firstly, using the $\dag$-structure, it is clear that every monoid $(A,g,u)_+$ gives rise to a comonoid \mbox{$(A,g^\dag,u^\dag) _\times$}. If \mbox{$m:(A,g,u)_+ \to (B,h,v)_+$} is a morphism of monoids, then the adjoint morphism ${m^\dag} : (B,h^{\dag}, {v^\dag}) _\times \to (A,g^\dag,u^\dag)_\times$ is a morphism of comonoids. It follows that the categories \cat{C_\times} and \cat{C_+} are opposite to each other.

We now consider the symmetric monoidal structure. We can use the structural isomorphisms to define a commutative comonoid on the monoidal unit:
\begin{equation}
\quad I_\times := (I, \lambda_I^{-1}, \id{I}) _\times.
\end{equation}
For any comonoid $(A,g,u)_\times$, the only morphism of comonoids $(A,g,u) _\times \to I_\times$ is $u$ itself; in other words, $I_\times$ is the terminal object in \cat{C_\times}. In fact, the category \cat{C_\times} has finite products, with binary product and projections defined as follows:
\begin{align*}
(A,g,u)_\times \times (B,h,v)_\times &\simeq \big( A \otimes B, (\id{A} \otimes \swap ^\otimes _{A,B} \otimes \id{B}) \circ (g \otimes h) , u \otimes v \big) _\times
\\
p _{(A,g,u) _\times} &= \id{A} \otimes v
\\p _{(B,h,v) _\times} &= u \otimes \id{B}
\end{align*}
The biproduct structure of \cat{C} gives rise to an initial object in \cat{C_\times}. We can define a comonoid on the zero object in the following way:
\begin{equation}
\quad 0_\times := (0, 0_{0,0\otimes 0}, 0_{0,I})_\times .
\end{equation}
For any comonoid $(A,g,u)_\times$, there is only one morphism of comonoids $0 _\times \to (A,g,u) _\times$, given by the zero morphism $0_{0,A}$, and so $0_\times$ is the initial object in \cat{C_\times}. Also, there is no morphism $(A,g,u)_\times \to 0_\times$ unless $(A,g,u) _\times$ is isomorphic to $0_\times$. If the tensor product distributes naturally over the biproduct in \cat{C}, then $\cat{C} _\times$ will in fact have coproducts, but we will not define these here as we do not need them for this paper.

The category $\cat{C} _\times$ therefore bears a clear resemblance to the structure of many categories of spaces, such as the category of sets: it is a category with distinct products and coproducts, such that the only object with a morphism to the initial object is the initial object itself. This gives us the motivation to think of \cat{C_\times} as a category of spaces, and for all comonoids $(A,g,u) _\times$, to think of morphisms $I _\times \to (A,g,u) _\times$ in $\cat{C} _\times$ as representing its \emph{points}. In the special case of commutative comonoids which are dual to commutative C*-algebras, this is a well-used construction: the points of the comonoid are precisely the elements of the spectrum of the C*-algebra\footnote{We note that internal monoids in \cat{FdHilb} will not, in general, give rise to C*-algebras: elements of the monoid lack a canonical involution, and the Banach algebra condition $||a \,b|| \leq ||a|| \, ||b||$ will not necessarily be satisfied. In fact, the Banach algebra condition seems quite unnatural in this context.}.

If we can find both a comonoid structure $(A,g,u) _\times$ and a monoid structure $(A,h,v) _+$ on the same object $A$, then we can consider the compatibility of the comonoid morphisms $g$ and $u$ with the monoid morphisms $h$ and $v$. If $g$ and $u$ are both morphisms of monoids, then we say that the combined structure $(A,g,u,h,v) _{\times +}$ is a \emph{bialgebra}. This requires the definition of a monoid on $A \otimes A$; we choose this monoid to have unit given by $v \otimes v$, and multiplication given by $(h \otimes h) \circ (\id{A} \otimes \swap_A ^\otimes \otimes \id{A})$. (This is equivalent to a similar specification of a comonoid on $A \otimes A$, and requiring that $h$ and $v$ be morphisms of comonoids.) Diagrammatically, the compatibility conditions that arise are as follows:
\begin{align*}
\begin{diagram}[midshaft,height=25pt,width=50pt]
A \otimes A \otimes A \otimes A & \rTo^{ \hspace{-50pt} \id{A} \otimes \swap _A ^\otimes \otimes \id{A} \hspace{-50pt} } & & & A \otimes A \otimes A \otimes A
\\
\uTo <{g \otimes g} & & & & \dTo>{h \otimes h}
\\
A \otimes A & \rTo^h & A & \rTo^g & A \otimes A
\end{diagram}
&&
\begin{diagram}[midshaft,height=25pt,width=25pt]
A & \rTo^u & I
\\
\uTo<v & \ruEq &
\\
I & &
\end{diagram}
\end{align*}
\begin{align*}
\begin{diagram}[midshaft,height=25pt,width=30pt]
A & \rTo^g & A \otimes A
\\
\uTo< v & & \uTo>{v \otimes v}
\\
I & \rTo _\simeq & I \otimes I
\end{diagram}
&&
\begin{diagram}[midshaft,height=25pt,width=30pt]
A & \lTo^h & A \otimes A
\\
\dTo< u & & \dTo>{u \otimes u}
\\
I & \rTo _\simeq & I \otimes I
\end{diagram}
\end{align*}

\section{The conventional quantum harmonic oscillator}

\label{conventionalqho}
The quantum harmonic oscillator is one of the simplest quantum systems that can be studied. It is also one of the most important; in particular, quantum field theory can be interpreted as a perturbation on top of an infinite-dimensional harmonic oscillator. The state space of the conventional quantum harmonic oscillator is known as Fock space, a Hilbert space of countably-infinite dimension. We will examine it in detail in section \ref{conventional}, to see how it arises as a special case of the more general categorical description. In this section we will give a brief introduction to the necessary physics, focusing on giving an intuitive description of the state space of the quantum harmonic oscillator and the tools that physicists use to work with it.

The quantum harmonic oscillator is the name given to the mathematical model describing a quantum particle trapped in a quadratic potential. Energy levels of this system are not continuous, as they would be for the classical harmonic oscillator; they are discrete, given by $E_n := hf (n + \frac{1}{2})$ for all natural numbers $n \geq 0$, where $h$ is Planck's constant and $f$ is the characteristic frequency of the system. In particular, we note that the $n=0$ state has non-zero energy, which is surprising from the point of view of classical mechanics. Specifying a state of the quantum harmonic oscillator amounts to specifying an amplitude for the system to be found in any of these energy levels. The state space is clearly countably-infinite dimensional, since there is a countable number of energy levels for the system.

Inspired by quantum field theory, we will refer to the different energy levels of each harmonic oscillator as counting numbers of \emph{particles}; for example, a harmonic oscillator with energy $E_3$ is in a three-particle state. The lowest energy state is the zero-particle state, or the vacuum. This is a useful language, but perhaps one that should not be taken too seriously without better motivation. We will explore the particle interpretation further in section \ref{pathcounting}.

We shall now describe \emph{symmetric Fock space}. This is the state space of a compound system formed from a countable number of harmonic oscillators, but with the extra requirement that permutation of particles is a symmetry of the state space; our particles are symmetrically distinguishable\footnote{The alternative, which we do not consider here, is that the particles be antisymmetrically indistinguishable, which implies a change of sign under every permutation.}. So, given the symmetric Fock space over two harmonic oscillators, a possible state is one in which there are two particles on the first oscillator; that is, the first oscillator has energy $E_2$ and the second oscillator has energy $E_0$. But there is no state in the space for which a first particle is on the first oscillator and a second particle is on the second oscillator, since such a state is not invariant under interchange of the particles. There \emph{is} a state with two particles for which each oscillator has energy $E_1$, but no information can be gathered as to which particle is where.

In general, a state of symmetric Fock space is fully described by specifying, separately for each possible total number of particles, a complex amplitude for each way that this number of particles could be distributed between all of the available oscillators. Note that such a specification contains no information about which particle is where, and so respects the symmetric indistinguishability criterion. If we write $A$ for the single-particle subspace of this state space, sufficient to describe the possible ways that a single particle could be distributed among the oscillators, then we see that the total symmetric Fock space, written $F(A)$, has the following structure:
\begin{align*}
F(A) &= \mathbb{C} \oplus A \oplus (A \! \otimes_s \! A) \oplus (A\! \otimes_s\! A\! \otimes_s \! A) \oplus \cdots.
\end{align*}
 The symbol $\oplus$ represents disjoint union (or direct sum) of state spaces, and the symbol $\otimes _s$ represents symmetric tensor product. We refer to each $n$-fold symmetric tensor product of $A$ as the $n$-particle subspace of $F(A)$. The zero-particle subspace $\mathbb{C}$ consists of a single complex number, since if there are no particles, there is no information to give about how these particles are arranged; one need only give the amplitude that this is the case.

To work with symmetric Fock space, physicists have developed a number of tools. The most important are the raising and lowering linear operators, $a_\phi ^\dag : F(A) \to F(A)$ and $a _\phi: F(A) \to F(A)$ respectively, sometimes known as the creation and annihilation operators. Here, $\phi$ is a state of $A$, the single-particle subspace. Applying the raising operator $a_\phi ^\dag$ to an $n$-particle state of $F(A)$ --- that is, a state which is zero except in the $n$-particle subspace --- creates an $(n+1)$-particle state of $F(A)$, with the same particle content as before, except for the addition of a new particle described by the state $\phi$. The lowering operator $a_\phi$ performs the adjoint to this process, turning an $(n+1)$-particle state into an $n$-particle state by removing a particle in the state $\phi$. If there was no amplitude to find a particle in the state $\phi$ in the first place, then applying $a_ \phi$ will annihilate the state, giving zero. Also, applying the lowering operator to the zero-particle state will give zero.

These raising and lowering operators satisfy various commutation relations, describing the different effects created by applying them in different orders. If we add two particles with the raising operators $a _\phi ^\dag$ and $a _\psi ^\dag$, it should not matter in which order we choose to apply them, and similarly for the case of two different lowering operators. So we expect the following commutation relations to hold for all $\phi$ and $\psi$:
\begin{align*}
\big[a _\psi, a_\phi \big] &\subset 0
\\
\big[a _\psi ^\dag, a_\phi ^\dag \big] &\subset 0
\end{align*}
The subset symbols indicate equality on the domain of the left-hand side of the equation, since the raising and lowering operators, being unbounded, will not be everywhere-defined. However, if we apply $a _\phi$ and $a _\psi ^\dag$, it is not clear that the order should be unimportant. In fact, the physics of the harmonic oscillator tells us that the commutation relation should be as follows:
\begin{equation*}
\big[a ^{\phantom{\dag}}_\phi, a_\psi ^\dag \big] \subset (\phi,\psi) \, \id{}
\end{equation*}
Here, $(\phi, \psi)$ is the inner product of the two single-particle states, linear in $\psi$ and antilinear in $\psi$, and $\id{}$ is the identity on the Hilbert space. The raising and lowering operators are obtained as a combination of the position and momentum operators, and this commutation relation arises directly from Heisenberg's uncertainty relation, which states that measurements of position and momentum do not commute. An intriguingly direct argument is presented in \cite{mortonspecies}, where this commutation relation is related to the fact that there is one more way to first put a ball into a box and then take a ball out, then there is to first take a ball out of a box and then put one in.

Finally, we describe a family states of symmetric Fock space known as the coherent states. These are parameterised by the single-particle states; for each state $\phi$ of $A$, we write $\Coh(\phi)$ for the corresponding coherent state of $F(A)$. We can construct $\Coh(\phi)$ explicitly in the following way:
\begin{equation*}
\Coh(\phi) := 1 \oplus \phi \oplus \left( \frac {1} {\sqrt{2!}} \,\phi \! \otimes_s \! \phi \right)\oplus \left( \frac {1} {\sqrt{3!}} \, \phi \! \otimes_s \! \phi \! \otimes_s \! \phi \right) \oplus \cdots
\end{equation*}
These states always have finite norm; we have \mbox{$||\Coh(\phi)|| ^2= e ^{||\phi|| ^2}$}. Physically, coherent states contain an indeterminate number of particles, all of which are in the same state. There are other states with these properties that are not of this form, but the defining feature of the coherent states are that they satisfy the following equation, for all single-particle states $\phi$ and $\psi$:
\begin{equation*}
a_\phi \, \Coh(\psi) = (\phi,\psi)\, \Coh(\psi)
\end{equation*}
The coherent states are eigenstates for the lowering operators; intuitively, removing a particle from a coherent state only modifies the state by a factor. Two more interesting properties enjoyed by the coherent states is that they can be copied and deleted: there exists linear maps $d:F(A) \to F(A) \otimes F(A)$ and $e:F(A) \to \mathbb{C}$ such that
\begin{align*}
d\,\Coh(\phi) &= \Coh(\phi) \otimes \Coh(\phi)
\\
e \, \Coh(\phi) &= 1
\end{align*}
for all single-particle states $\phi$. For these reasons, the coherent states are often thought of as having classical properties. Finally, we note that the coherent states in the form given can be constructed using the following identity, where we define $v$, the `vacuum state', to be the state in which there is an amplitude of $1$ to find zero particles, and no amplitude to find any greater number:
\begin{equation*}
\exp(a ^\dag _\phi) \, v = \Coh(\phi)
\end{equation*}
This provides an interesting connection between the zero-particle state, the raising operators, the coherent states and the exponential function.

This completes our tour of the classical treatment of the quantum harmonic oscillator. It would seem that the most crucial part is the structure of symmetric Fock space; the rest of the structure could be regarded merely as tools developed by physicists to aid its study. However, we shall see that the categorical approach efficiently reproduces (and indeed generalises) all of the structures and techniques described in this section: not only symmetric Fock space, but also the zero- and single-particle subspaces, the raising and lowering operators and their commutation relations, the coherent states and their copying and deleting maps, the exponential function, and all the equations relating them which we have described.

\section{A categorical description of the quantum\\ harmonic oscillator}
\label{formulation}

\subsection{The categorical framework}
We begin with our categorical description of the quantum harmonic oscillator.

\begin{defn}
\label{defhoa}
Given a symmetric monoidal $\dag$-category \cat{C} with finite $\dag$-biproducts, an \emph{harmonic oscillator adjunction} $\llangle Q , \eta , \epsilon \rrangle$ is a right adjoint $Q : \cat{C} \to \cat{C} _\times$ for the forgetful functor \mbox{$R : \cat{C}_\times \to \cat{C}$}, with unit $\eta : \id{\cat{C} _\times} \dot{\to} Q \circ R$ and counit $\epsilon : R \circ Q \dot{\to} \id{\cat{C}}$, that has the following properties:
\begin{enumerate}
\item The functor $Q$ preserves finite products unitarily;
\item The natural transformation $\epsilon ^\dag$ is an isometry at every stage; that is, $\epsilon \circ \epsilon ^\dag = \id{\id{\cat{C}}}$;
\item The endofunctors $R \circ Q : \cat{C} \to \cat{C}$ and $\dag: \cat{C} \to \cat{C}$ commute.
\end{enumerate}
\end{defn}

\noindent For each object $A$ in \cat{C}, we interpret $Q(A)$ in $\cat{C} _\times$ as the \emph{harmonic oscillator} constructed over $A$. Defining the endofunctor $F: \cat{C} \to \cat{C}$ as
\begin{equation}
\quad F := R \circ Q ,
\end{equation}
we interpret $F(A)$ as the state space of the comonoid $Q(A)$, obtained by `forgetting' the comultiplication and counit morphisms. States of $Q(A)$ are therefore given by non-zero morphisms \mbox{$\phi : I \to F(A)$} in \cat{C}, following the general framework of categorical quantum mechanics as discussed in section \ref{symmon}. $F$ can be thought of a generalised Fock space functor, or a generalised `second quantisation' functor. We also see that the adjunction $R \dashv Q$ gives $F$ the structure of a comonad\footnote{This allows us to update the old adage: ``First quantization is a mystery, but second quantization is a comonad.''}.

A related approach, developed in parallel to this work by another author \cite{differentialfiore}, is to consider the comonad $(F,\epsilon, R \eta Q)$ as primary rather than the adjunction. This is a more general framework, but one in which the counit morphisms $R \eta _{(A,g,u) _\times}$ will not be available for all commutative comonoids $(A,g,u) _\times$\footnote{In terms of the comonad rather than the adjunction, morphisms corresponding to the $R \eta _{(A,g,u) _\times}$ arise as \emph{coalgebras} for the comonad. If there exists a coalgebra for every comonoid, then the category of coalgebras (the co-Eilenberg-Moore category) will be equivalent to the category of comonoids, and the comonad must arise from an adjunction in precisely the sense of definition \ref{defhoa}.}. We will make substantial use of these morphisms later in the paper, and so the current construction is more convenient for our purposes.

To work with the comultiplication and counit of each categorical harmonic oscillator $Q(A)$ more easily, we make the following definition for the remainder of the paper:
\begin{equation}
\big (F(A), d_A, e_A \big) _\times := Q(A).
\end{equation}
We will explore in the next few sections of this paper how this structure, along with the unit and counit natural transformations, endow $Q(A)$ with many of the properties of a conventional quantum harmonic oscillator.

Before we go any further, we must make clear what it means in definition \ref{defhoa} for $Q$ to preserve products unitarily. Products in \cat{C} are given by the $\dag$-biproduct structure, and in $\cat{C} _\times$ they are given by the underlying tensor product in \cat{C}, as discussed in section \ref{tools}:
\begin{equation*}
\quad (A,g,u) _\times \times (B,h,v) _\times \simeq \big(A \otimes B, (\id{A} \otimes \swap ^\otimes _{A,B} \otimes \id{A}) \circ (g \otimes h), u \otimes v \big) _\times .
\end{equation*}
As $Q$ is a right adjoint it must preserve finite products up to unique natural isomorphism, and therefore for all $A$ and $B$ in \cat{C} there exist in $\cat{C} _\times$ unique natural isomorphisms
\begin{align}
k_{A,B} &: Q(A \oplus B) \to Q(A) \times Q(B)
\\
k_0 &: Q(0) \to I _\times
\end{align}
which make the following diagrams commute:
\begin{equation*}
\begin{diagram}[midshaft,height=20pt]
Q(A) & \lTo^{Q( \pi _A) } & Q(A \oplus B) & \rTo ^{Q (\pi _B)} & Q(B)
\\
\dEq & & \uTo < {k _{A,B} ^{-1}} \dTo >{k _{A,B}} & & \dEq
\\
Q(A) & \lTo _{ \id{Q(A)} \times e_B} & Q(A) \times Q(B) & \rTo _{ e_A \times \id{Q(B)}} & Q(B)
\end{diagram}
\qquad\quad
\begin{diagram}[midshaft,height=20pt]
Q(0)
\\
\uTo <{k_0 ^{-1}} \dTo >{k_0}
\\
I_ \times
\end{diagram}
\end{equation*}
We include the trivial right-hand diagram for completeness. For $Q$ to preserve products unitarily means that the natural isomorphisms $k_{A,B}$ and $k_0$ are unitary, when viewed as morphisms in \cat{C}\ using the forgetful functor $R$:
\begin{align*}
R k_{A,B} ^{-1} = (R k_{A,B}) ^\dag
&&
R k_{0} ^{-1} = (R k_0) ^\dag
\end{align*}

We can use the natural isomorphisms $k_{A,B}$ and $k_0$ to obtain explicit expressions for the comultiplication $d_A$ and counit $e_A$ associated to each categorical harmonic oscillator $Q(A)$.
\begin{lemma}
\label{autoed}
For all objects $A$ in \cat{C}, we have
\begin{align}
\label{eeqn}
e_A &= Rk_0 \circ F (0 _{A,0} )
\\
\label{deqn}
d_A &= Rk_{A,A} \circ F( \Delta_A )
\end{align}
where we view $e_A$ and $d_A$ as morphisms in \cat{C} rather than $\cat{C} _\times$.
\end{lemma}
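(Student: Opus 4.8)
The plan is to exploit the fact that the comultiplication $d_A$ and counit $e_A$ are themselves the components of the (unique) product structure on $\cat{C}_\times$ that makes $Q$ into a product-preserving functor, and to identify these components by naturality of $k_{A,B}$ and $k_0$ together with the universal property of the product. First I would recall the explicit formulas from section \ref{tools}: the projections of the categorical product $Q(A)\times Q(A) = \big(F(A)\otimes F(A), \ldots\big)_\times$ are $p_{Q(A)} = \id{F(A)}\otimes e_A$ and $p'_{Q(A)} = e_A \otimes \id{F(A)}$, and the terminal object is $I_\times$ with unique morphism $e_A : Q(A)\to I_\times$. The key input is that, since $Q$ is a right adjoint, $k_{A,B}$ and $k_0$ are the \emph{canonical} comparison isomorphisms, which means they are compatible with the projection/terminal morphisms on the nose — this is exactly the content of the two commuting diagrams displayed just before the lemma.

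For equation (\ref{eeqn}): apply the functor $Q$ to the unique morphism $0_{A,0}: A \to 0$ in \cat{C}, giving $Q(0_{A,0}) : Q(A) \to Q(0)$ in $\cat{C}_\times$. Post-composing with the comparison isomorphism $k_0 : Q(0)\to I_\times$ yields a morphism $Q(A)\to I_\times$ in $\cat{C}_\times$; but the unique such morphism is $e_A$ (since $e_A$ is the counit of the comonoid $Q(A)$, and $I_\times$ is terminal in $\cat{C}_\times$). Hence $k_0 \circ Q(0_{A,0}) = e_A$ as morphisms in $\cat{C}_\times$. Now apply the forgetful functor $R$ (which is faithful and sends composition to composition), using $F = R\circ Q$, to obtain $Rk_0 \circ F(0_{A,0}) = Re_A = e_A$ in \cat{C}, which is (\ref{eeqn}).

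For equation (\ref{deqn}): the diagonal $\Delta_A : A \to A\oplus A$ in \cat{C} is characterised by $\pi_A \circ \Delta_A = \pi'_A\circ\Delta_A = \id{A}$. Applying $Q$ and then $k_{A,A}$, we get a morphism $k_{A,A}\circ Q(\Delta_A) : Q(A)\to Q(A)\times Q(A)$ in $\cat{C}_\times$; I would show it satisfies the two equations $p_{Q(A)}\circ(k_{A,A}\circ Q(\Delta_A)) = p'_{Q(A)}\circ(k_{A,A}\circ Q(\Delta_A)) = \id{Q(A)}$ by feeding the defining diagram for $k_{A,A}$ (the big rectangle with $Q(\pi_A)$, $\id{}\times e_B$, etc.) the morphism $Q(\Delta_A)$ and using functoriality of $Q$ on $\pi_A\circ\Delta_A = \id{A}$. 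By the universal property of the product in $\cat{C}_\times$, the unique morphism $Q(A)\to Q(A)\times Q(A)$ satisfying both equations is the diagonal of $Q(A)$, which is precisely $d_A$ (the comultiplication of the comonoid, since the comonoid comultiplication \emph{is} the product diagonal in $\cat{C}_\times$). Thus $k_{A,A}\circ Q(\Delta_A) = d_A$ in $\cat{C}_\times$; applying $R$ gives (\ref{deqn}).

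The routine parts are the diagram chases verifying the two projection equations for the diagonal case, which follow mechanically from the naturality square for $k_{A,B}$ specialised along $\Delta_A$. The one point that needs genuine care — the main obstacle — is pinning down that the comonoid structure maps $d_A$ and $e_A$ really do coincide with the categorical product-diagonal and the terminal morphism in $\cat{C}_\times$, rather than being merely isomorphic to them; this is where one must invoke that the product in $\cat{C}_\times$ was \emph{defined} (in section \ref{tools}) so that the diagonal of $(A,g,u)_\times$ is $g$ and the terminal morphism is $u$, so that $Q(A)=(F(A),d_A,e_A)_\times$ has $d_A$ and $e_A$ as its canonical product diagonal and terminal map by construction. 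Once that identification is made explicit, the uniqueness clauses in the universal properties close the argument.
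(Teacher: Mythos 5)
Your proposal is correct and takes essentially the same route as the paper: equation (\ref{eeqn}) follows from terminality of $I_\times$ together with preservation of the terminal object by $Q$, and equation (\ref{deqn}) from identifying $d_A$ (via the counit law) with the product diagonal of $Q(A)$ in $\cat{C}_\times$ and transporting $Q(\Delta_A)$ along the canonical comparison isomorphism $k_{A,A}$. The explicit projection-equation chase and the identification of the comultiplication with the categorical diagonal that you spell out are exactly the details the paper's proof leaves implicit.
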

\begin{proof}
Since terminal morphisms are unique, and terminal objects are preserved by $Q$, equation (\ref{eeqn}) must hold automatically.
The unit law for $Q(A)$ can be reinterpreted as the expression that $d_A$ is the diagonal for the object $Q(A)$ in $\cat{C} _\times$; in other words, $d_A = \langle \id{Q(A)} ,\id{Q(A)} \rangle$ in $\cat{C} _\times$. It must therefore be related by $c_{A,A}$ to the image under $Q$ of the diagonal $\Delta_A := \langle \id{A}, \id{A} \rangle$ in \cat{C}, implying equation (\ref{deqn}).
\end{proof}

In fact, the commutative comonoids in the image of the functor $Q$ have more structure than is immediately apparent; as discussed in \cite{differentialfiore}, they are bialgebras, for which the multiplication and unit morphisms are given by the adjoint of the comultiplication and counit in \cat{C}.
\begin{lemma}[Fiore]
\label{bialgebralemma}
For each object $A$ in \cat{C}, $\big(F(A), d_A ^{\phantom{\dag}}, e_A ^{\phantom{\dag}}, d_A ^\dag, e_A ^\dag \big) _{\times +}$ is a bialgebra.
\end{lemma}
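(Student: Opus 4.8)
The plan is to peel off the two easy ingredients of the bialgebra structure, reduce the remaining compatibility to a statement about morphisms of comonoids, and then read that statement off from the explicit formulas for $d_A$ and $e_A$ in Lemma~\ref{autoed}.

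First, $(F(A),d_A,e_A)_\times$ is a commutative comonoid because it is $Q(A)$, and consequently $(F(A),d_A^\dag,e_A^\dag)_+$ is a commutative monoid, being the image of the comonoid $Q(A)$ under the $\dag$-functor as recorded in section~\ref{tools}. So the only content is the compatibility of these two structures, i.e.\ the four diagrams displayed in section~\ref{tools}. By the equivalent formulation of the bialgebra axioms given there, it suffices to show that $d_A^\dag : F(A)\otimes F(A)\to F(A)$ and $e_A^\dag : I \to F(A)$ are morphisms of comonoids into $(F(A),d_A,e_A)_\times$, where $F(A)\otimes F(A)$ carries the comonoid with comultiplication $(\id{F(A)}\otimes\swap^\otimes_{F(A),F(A)}\otimes\id{F(A)})\circ(d_A\otimes d_A)$ and counit $e_A\otimes e_A$, and $I$ carries $I_\times$. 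The one genuinely computational point here — and it is short — is to check that this comonoid on $F(A)\otimes F(A)$ is exactly the underlying comonoid of the categorical product $Q(A)\times Q(A)$ in $\cat{C}_\times$, equivalently the $\dag$ of the monoid that the bialgebra definition attaches to $F(A)\otimes F(A)$; this uses only that $\swap^\otimes_{F(A),F(A)}$ is unitary and equal to its own inverse. Likewise $I_\times$ is the $\dag$ of the canonical monoid $(I,\lambda_I,\id{I})_+$. Granting this, once $d_A^\dag$ and $e_A^\dag$ are known to be comonoid morphisms $Q(A)\times Q(A)\to Q(A)$ and $I_\times\to Q(A)$, taking adjoints (the $\dag$-functor sends morphisms of comonoids to morphisms of monoids, dual to the statement in section~\ref{tools}) shows that $d_A$ and $e_A$ are morphisms of monoids in precisely the sense demanded by the bialgebra axioms.

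To exhibit $d_A^\dag$ and $e_A^\dag$ as comonoid morphisms, I would start from Lemma~\ref{autoed}: $d_A = Rk_{A,A}\circ F(\Delta_A)$ and $e_A = Rk_0\circ F(0_{A,0})$. Taking adjoints, and using property~3 of Definition~\ref{defhoa} so that $F(f)^\dag = F(f^\dag)$, together with the unitarity of $k_{A,A}$ and $k_0$ so that $(Rk_{A,A})^\dag = R(k_{A,A}^{-1})$ and $(Rk_0)^\dag = R(k_0^{-1})$, one gets
\begin{align*}
d_A^\dag &= F(\Delta_A^\dag)\circ R(k_{A,A}^{-1}) = R\big(Q(\Delta_A^\dag)\circ k_{A,A}^{-1}\big), \\
e_A^\dag &= F(0_{A,0}^\dag)\circ R(k_0^{-1}) = R\big(Q(0_{0,A})\circ k_0^{-1}\big),
\end{align*}
where $0_{A,0}^\dag = 0_{0,A}$ (both hom-sets to and from the zero object are singletons) and $F = R\circ Q$. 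Now $Q(\Delta_A^\dag)\circ k_{A,A}^{-1} : Q(A)\times Q(A)\to Q(A\oplus A)\to Q(A)$ and $Q(0_{0,A})\circ k_0^{-1} : I_\times\to Q(0)\to Q(A)$ are composites of morphisms in $\cat{C}_\times$, hence morphisms in $\cat{C}_\times$, i.e.\ morphisms of comonoids; applying the forgetful functor $R$ identifies their underlying $\cat{C}$-morphisms as $d_A^\dag$ and $e_A^\dag$, which is exactly what was required.

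I expect the obstacle to be bookkeeping rather than insight. The points to get right are: carefully distinguishing the $\dag$ taken inside $\cat{C}$ from the induced $\dag$ on comonoid and monoid structures; confirming that $\Delta_A^\dag$ is the codiagonal and $0_{A,0}^\dag = 0_{0,A}$ in the $\dag$-biproduct setting; and tracking the three different monoid-or-comonoid structures that wash up on $F(A)\otimes F(A)$ — the product comonoid of $\cat{C}_\times$, the monoid prescribed by the bialgebra definition, and the one obtained by dualising — so that it is manifest they coincide. Nothing here should call for the graphical calculus or any calculation beyond the $\swap$-unitarity check above, since Lemma~\ref{autoed}, property~3 of Definition~\ref{defhoa}, and unitarity of $k_{A,A}$ and $k_0$ supply everything else.
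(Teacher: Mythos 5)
Your proof is correct, but it takes a genuinely different route from the paper's. The paper argues by transport of structure inside \cat{C}: it first invokes the fact that $(A,\Delta_A,0_{A,0},\nabla_A,0_{0,A})$ is a bialgebra with respect to the biproduct monoidal structure (left as a diagram check), and then asserts that this structure is ``inherited'' by $d_A$, $e_A$ and their adjoints via the formulas of lemma \ref{autoed} and the unitaries $Rk_{A,A}$, $Rk_0$ translating between $\oplus$ and $\otimes$. You instead work in $\cat{C}_\times$: from lemma \ref{autoed}, unitarity of $k_{A,A}$ and $k_0$, property 3 of definition \ref{defhoa} (so that $F(f)^\dag=F(f^\dag)$), and $\Delta_A^\dag=\nabla_A$, $0_{A,0}^\dag=0_{0,A}$, you identify $d_A^\dag$ and $e_A^\dag$ as the underlying \cat{C}-morphisms of the $\cat{C}_\times$-composites $Q(\nabla_A)\circ k_{A,A}^{-1}$ and $Q(0_{0,A})\circ k_0^{-1}$, and then observe that the four bialgebra compatibility diagrams are precisely the conditions that these be morphisms of comonoids out of $Q(A)\times Q(A)$ and $I_\times$ --- automatic for morphisms of $\cat{C}_\times$ --- once one checks that the comonoid underlying the product $Q(A)\times Q(A)$, as given by the explicit product formula of section \ref{tools}, coincides (using that $\swap^\otimes_{F(A),F(A)}$ is unitary and self-inverse) with the comonoid prescribed in the bialgebra definition. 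This is a sound reduction: it replaces the paper's reliance on the unproved biproduct-bialgebra fact and its informal inheritance step with the product structure of $\cat{C}_\times$ already recorded in section \ref{tools}, so the only real work is the adjoint bookkeeping you flag; the price is that it leans on the chosen realisation of the product and on the parenthetically stated equivalence of the two formulations of the bialgebra axioms. Your closing step of taking adjoints to recover the ``morphisms of monoids'' formulation is harmless but redundant, since that equivalence already gives the statement.
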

\begin{proof} In a category with biproducts, the diagonal and codiagonal along with terminal and initial morphisms form a bialgebra with respect to the monoidal structure of the biproduct, as can be checked by working through all of the necessary diagrams. This bialgebraic structure is inherited by $d_A$, $e_A$ and their adjoints, since they are formed `naturally' from the biproduct structure in \cat{C}, the unitary isomorphisms $k_{A,B}$ and $k_0$ translating between the biproduct and tensor product structures.
\end{proof}

\noindent
We can also prove that the comultiplication $d_A$ is \emph{additive}, in the following sense.
\begin{lemma}
\label{additivity}
For all morphisms $f$, $g:A \to B$ in \cat{C},
\begin{equation}
\quad F(f +g) = d_B ^\dag \circ ( F(f) \otimes F(g)) \circ d_A.
\end{equation}
where the sum $f+g$ is defined by the $\dag$-biproduct structure.
\end{lemma}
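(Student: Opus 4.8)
My plan is to reduce the statement to the defining properties of the adjunction by expanding the biproduct sum and pushing it through the functor $F$. By diagram (\ref{biproductsum}) the sum is $f+g = \nabla_B \circ (f \oplus g) \circ \Delta_A$, so functoriality of $F = R \circ Q$ gives $F(f+g) = F(\nabla_B) \circ F(f\oplus g) \circ F(\Delta_A)$. Lemma \ref{autoed} rewrites the right-hand factor as $F(\Delta_A) = (Rk_{A,A})^{-1}\circ d_A = (Rk_{A,A})^\dag \circ d_A$, using that $k_{A,A}$ is unitary. For the left-hand factor I would first note that a $\dag$-biproduct structure forces $\nabla_B = \Delta_B^\dag$ (since $\pi_n = i_n^\dag$, one has $\Delta_B^\dag \circ i_n = (\pi_n \circ \Delta_B)^\dag = \id{B}$ for each $n$, which characterises the codiagonal), and then apply property 3 of Definition \ref{defhoa} --- that $F$ commutes with $\dag$ --- to get $F(\nabla_B) = F(\Delta_B)^\dag = d_B^\dag \circ Rk_{B,B}$. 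Substituting both, $F(f+g) = d_B^\dag \circ \big(Rk_{B,B}\circ F(f\oplus g)\circ (Rk_{A,A})^{-1}\big)\circ d_A$, so everything comes down to showing the bracketed composite equals $F(f)\otimes F(g)$.

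For that, I would use naturality of the product-comparison isomorphism $k$. Applied to the morphisms $f,g:A\to B$, naturality says $k_{B,B}\circ Q(f\oplus g) = \big(Q(f)\times Q(g)\big)\circ k_{A,A}$ in $\cat{C_\times}$; applying the forgetful functor $R$ and rearranging gives $Rk_{B,B}\circ F(f\oplus g)\circ (Rk_{A,A})^{-1} = R\big(Q(f)\times Q(g)\big)$. So the lemma reduces to the identity $R\big(Q(f)\times Q(g)\big) = F(f)\otimes F(g)$.

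I expect this last identity to be the only real obstacle, because $R$ does not preserve products in the naive sense: $\otimes$ is not a categorical product in \cat{C}, so one cannot simply say ``$R$ preserves products''. Instead I would argue inside $\cat{C_\times}$. The product comonoid $Q(B)\times Q(B)$ has underlying object $F(B)\otimes F(B)$ and projections with underlying morphisms $\id{F(B)}\otimes e_B$ and $e_B\otimes \id{F(B)}$; moreover $F(f)\otimes F(g)$ is itself a comonoid morphism $Q(A)\times Q(A)\to Q(B)\times Q(B)$ for the product comonoid structures, since a tensor product of comonoid morphisms respects the tensor-product comonoid. Composing $F(f)\otimes F(g)$ with these projections and using that comonoid morphisms preserve counits ($e_B\circ F(f) = e_A$, $e_B\circ F(g) = e_A$) together with the counit law, one checks that $F(f)\otimes F(g)$ has the same composites with the projections as $Q(f)\times Q(g)$ does; since the projections are jointly monic in $\cat{C_\times}$, the universal property forces $F(f)\otimes F(g) = Q(f)\times Q(g)$ in $\cat{C_\times}$, that is, $R\big(Q(f)\times Q(g)\big) = F(f)\otimes F(g)$. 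Collecting the factors then yields $F(f+g) = d_B^\dag \circ (F(f)\otimes F(g))\circ d_A$, as required.
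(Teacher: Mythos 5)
Your proposal is correct and follows essentially the same route as the paper's own proof: decompose $f+g=\nabla_B\circ(f\oplus g)\circ\Delta_A$, expand $d_A$ and $d_B^\dag$ via Lemma \ref{autoed}, and conclude using unitarity and naturality of $k$ together with the $\dag$-compatibility of $F$ (property 3 of Definition \ref{defhoa}). The only difference is that you explicitly verify the identification $R\big(Q(f)\times Q(g)\big)=F(f)\otimes F(g)$ via the universal property of the product in $\cat{C_\times}$, a step the paper's diagram leaves implicit; this is a harmless (indeed welcome) extra bit of care rather than a different argument.
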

\begin{proof}
Using using naturality of $c_{A,A}$ and compatibility of $F$ with $(-) ^\dag$, we obtain:
\begin{equation*}
\mbox{
\begin{diagram}[midshaft,loose,width=0pt,height=15pt,objectstyle=\scriptstyle,labelstyle=\scriptstyle]
F(A) & \rTo^{F( \Delta _A)} & F(A \oplus A) & \rTo^{Rk_{A,A}} & F(A) \otimes F(A) & \rTo^{F(f) \otimes F(g)} & F(B) \otimes F(B) & \rTo^{(Rk_{B,B}) ^\dag} & F(B \oplus B) & \rTo^{F(\Delta _B) ^\dag} & F(B)
\\
& & \dEq & & & & \dEq & & \dEq & & \dEq
\\
& & F(A \oplus A) & \rTo^{K(f \oplus g)} & F(B \oplus B) & \rTo^{Rk_{B,B}} & F(B) \otimes F(B) & & F(B \oplus B) & \rTo^{F(\nabla _B) } & F(B)
\\
& & & & \dEq & & & & \dEq
\\
& & & & F(B \oplus B) & \rEq & & & F(B \oplus B)
\\
\dEq & & & & & & & & & & \dEq
\\
F(A) & \rTo^{\hspace{-15pt}F(f+g)\hspace{-15pt}} & & & & & & & & & F(B)
\end{diagram}
}
\end{equation*}
\end{proof}

Although the natural transformations $e$ and $\epsilon$ are not directly related by any equations in the construction of the harmonic oscillator adjunction, they nevertheless automatically satisfy some compatibility conditions.
\begin{lemma}
\label{orthognormlemma}
The natural transformations $\epsilon$ and $e$ are normalised and orthogonal; that is, at every stage $A$,
\begin{align}
\label{epseps}
\epsilon_A ^{\phantom{\dag}} \circ \epsilon_A ^\dag &= \id{A}
\\
\label{ee}
e_A ^{\phantom{\dag}}\circ e_A^\dag &= \id{I}
\\
\label{eeps}
e_A ^{\phantom{\dag}} \circ \epsilon_A^\dag &= 0_{A,I}
\end{align}
\end{lemma}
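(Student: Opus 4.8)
The plan is to treat the three equations in turn, as each reduces quickly to material already in place: the explicit formula $e_A = Rk_0 \circ F(0_{A,0})$ from Lemma \ref{autoed}, together with the three defining properties of a harmonic oscillator adjunction. Equation (\ref{epseps}) requires no work at all --- read stagewise, it is exactly property 2 of Definition \ref{defhoa}.

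For equation (\ref{ee}) I would substitute $e_A = Rk_0 \circ F(0_{A,0})$ to obtain
\begin{equation*}
e_A \circ e_A^\dag = Rk_0 \circ F(0_{A,0}) \circ F(0_{A,0})^\dag \circ (Rk_0)^\dag .
\end{equation*}
Property 1 makes $k_0$ unitary, so $(Rk_0)^\dag = (Rk_0)^{-1}$; property 3 (compatibility of $F$ with $\dag$) gives $F(0_{A,0})^\dag = F(0_{0,A})$; and functoriality, together with the observation that the unique endomorphism of the zero object is at once $\id{0}$ and $0_{0,0}$, collapses $F(0_{A,0}) \circ F(0_{0,A}) = F(0_{0,0}) = \id{F(0)}$. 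The two copies of $Rk_0$ then cancel, leaving $\id{I}$. (Alternatively, this is just the bialgebra law $u \circ v = \id{I}$ applied to the bialgebra of Lemma \ref{bialgebralemma}.)

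For equation (\ref{eeps}) the idea is to push $\epsilon_A^\dag$ through $F(0_{A,0})$ by naturality. Naturality of $\epsilon : F \dot{\to} \id{\cat{C}}$ at the morphism $0_{0,A} : 0 \to A$ reads $\epsilon_A \circ F(0_{0,A}) = 0_{0,A} \circ \epsilon_0$, and the right-hand side factors through $0$; taking adjoints and using property 3 to rewrite $F(0_{0,A})^\dag$ as $F(0_{A,0})$ gives $F(0_{A,0}) \circ \epsilon_A^\dag = 0_{A,F(0)}$. Composing on the left with $Rk_0$ then yields $e_A \circ \epsilon_A^\dag = Rk_0 \circ 0_{A,F(0)} = 0_{A,I}$, as desired. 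The whole argument is short, and I anticipate no genuine obstacle; the only points needing attention are the directions of the arrows when passing to adjoints, and the small fact that $F$ sends the unique endomorphism of $0$ to an identity. One could equally well run all three computations in the graphical calculus, but the symbolic route seems clean enough to make that unnecessary.
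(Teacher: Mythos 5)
Your proposal is correct and follows essentially the same route as the paper: equation (\ref{epseps}) is exactly property 2 of Definition \ref{defhoa}, equation (\ref{ee}) follows by substituting $e_A = Rk_0 \circ F(0_{A,0})$ from Lemma \ref{autoed} and using unitarity of $Rk_0$ together with $F(0_{A,0}) \circ F(0_{0,A}) = \id{F(0)}$, and equation (\ref{eeps}) follows from naturality of $\epsilon$ (the paper phrases this as naturality of $\epsilon^\dag$, which is your adjoint-taking step packaged into the earlier lemma on adjoints of natural transformations) plus the fact that anything factoring through a zero morphism is zero. The differences are purely presentational, so no further comment is needed.
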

\begin{proof}
Equation (\ref{epseps}) holds by construction, since it was a requirement in definition \ref{defhoa} of a harmonic oscillator adjunction. To tackle equation (\ref{ee}), we use lemma \ref{autoed} to write $e_A$ in terms of $k_0$; we then obtain
\begin{align*}
e_A \circ e_A ^\dag &= Rk_0 \circ F(0_{A,0}) \circ F(0 _{0,A}) \circ (R k_0) ^\dag
\\
&= Rk_0 \circ F(0_{0,0}) \circ (R k_0) ^\dag
\\
&= Rk_0 \circ \id{F(0)} \circ (R k_0) ^\dag
\\
&= \id{I},
\end{align*}
where we have employed unitarity of $R k_0$. For equation (\ref{eeps}), we have
\begin{align*}
e _A \circ \epsilon _A ^\dag &= R k_0 \circ F(0_{A,0}) \circ \epsilon _A ^\dag
\\
&= R k_0 \circ \epsilon_0 ^\dag \circ 0_{A,0}
\\
&= 0_{A,I},
\end{align*}
where we use naturality of $\epsilon ^\dag$ and the fact that only zero morphisms can factor through zero morphisms.
\end{proof}
\noindent
In fact, as is well-known in other contexts \cite{llfock}, we can always write $k_{A,B}$ directly in terms of $e$ and $\epsilon$ in the following way.
\begin{lemma}
\label{gettingk}
Under the canonical hom-set isomorphism
\begin{equation*}
H _{Q(A) \times Q(B), A \oplus B}: \Hom_{\cat{C}} \big( F(A)\otimes F(B), A \oplus B \big) \simeq \Hom_{\cat{C} _\times} \big( Q(A) \times Q(B), Q(A \oplus B) \big)
\end{equation*}
induced by the adjunction, the family of morphisms in \cat{C} given by
\begin{equation*}
r_{A,B} := i_A\epsilon_A \otimes e_B + e_A \otimes i_B \epsilon_B,
\end{equation*}
where $i_A$ and $i_B$ are canonical injections into the biproduct $A \oplus B$, produce the morphisms $k_{A,B} ^{-1}$ in $\cat{C} _\times$; that is, for all $A$ and $B$ in \cat{C},
\begin{equation}
k _{A,B} ^{-1} = H_{Q(A) \times Q(B), A \oplus B} (r_{A,B}).
\end{equation}
\end{lemma}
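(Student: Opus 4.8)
The plan is to transpose the asserted identity across the adjunction $R \dashv Q$ and check it in \cat{C}. Recall that the hom-set bijection is $H_{X,Y}(f) = Q(f) \circ \eta_X$, with inverse $H_{X,Y}^{-1}(g) = \epsilon_Y \circ R(g)$, and that it is natural in both variables. Taking $X = Q(A) \times Q(B)$ (so that $RX = F(A) \otimes F(B)$) and $Y = A \oplus B$, it therefore suffices to prove that $\epsilon_{A \oplus B} \circ R(k_{A,B}^{-1}) = r_{A,B}$ as morphisms $F(A) \otimes F(B) \to A \oplus B$ in \cat{C}.

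Since $A \oplus B$ is a biproduct, the projections $\pi_A$ and $\pi_B$ are jointly monic, so it is enough to show that composing $\pi_A$, and then $\pi_B$, onto each side gives the same morphism. For the left-hand side I would combine three facts: naturality of $\epsilon$, in the form $\pi_A \circ \epsilon_{A \oplus B} = \epsilon_A \circ F(\pi_A) = \epsilon_A \circ R(Q(\pi_A))$; the defining triangle for $k_{A,B}$, which after identifying $Q(A) \times I_\times$ with $Q(A)$ states that $Q(\pi_A) \circ k_{A,B}^{-1}$ is the product projection $p_{Q(A)} : Q(A) \times Q(B) \to Q(A)$; and the explicit description of products in $\cat{C}_\times$ from section \ref{tools}, which gives $R(p_{Q(A)}) = \id{F(A)} \otimes e_B$. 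Composing these and using naturality of the right unitor to suppress the coherence isomorphism yields $\pi_A \circ \epsilon_{A \oplus B} \circ R(k_{A,B}^{-1}) = \epsilon_A \circ (\id{F(A)} \otimes e_B) = \epsilon_A \otimes e_B$. For the right-hand side I would expand $\pi_A \circ r_{A,B}$ using linearity of composition over the biproduct together with $\pi_A \circ i_A = \id{A}$, $\pi_A \circ i_B = 0$, and the vanishing of the tensor of a morphism with a zero morphism: the second summand of $r_{A,B}$ dies, and the first produces $\epsilon_A \otimes e_B$ as well. The two computations involving $\pi_B$ are the mirror image, producing $e_A \otimes \epsilon_B$ on both sides, so joint monicity of $\{\pi_A, \pi_B\}$ gives $\epsilon_{A \oplus B} \circ R(k_{A,B}^{-1}) = r_{A,B}$, and re-transposing gives $k_{A,B}^{-1} = H(r_{A,B})$ as required.

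There is no deep obstacle here; the argument is essentially the classical one referenced in \cite{llfock}, repackaged through the abstract adjunction. The only points requiring care are the bookkeeping of the unitor isomorphisms that I am suppressing whenever I write a mismatched-looking composite such as $\epsilon_A \otimes e_B : F(A) \otimes F(B) \to A$, and correctly reading off from section \ref{tools} that the underlying object of $Q(A) \times Q(B)$ is $F(A) \otimes F(B)$ with projections $\id{F(A)} \otimes e_B$ and $e_A \otimes \id{F(B)}$ --- everything else is the routine interplay of the adjunction's naturality with the $\dag$-biproduct structure.
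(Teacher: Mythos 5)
Your proof is correct, and it is essentially the paper's argument run in the opposite direction through the adjunction. The paper takes the candidate $H_{Q(A)\times Q(B),\,A\oplus B}(r_{A,B})$ and shows that it satisfies the universal characterisation of $k_{A,B}^{-1}$ as the product-comparison morphism, by checking $Q(\pi_A)\circ H(r_{A,B}) = \id{Q(A)}\times e_B$; to verify that equation it transposes down to \cat{C} using the explicit formula $H(r_{A,B}) = Q(r_{A,B})\circ\eta_{Q(A)\times Q(B)}$, naturality of $\epsilon$, and the triangle identity $\epsilon R\circ R\eta=\id{R}$, and then invokes injectivity of the hom-set isomorphism together with uniqueness of the mediating morphism. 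You instead transpose the desired identity itself, reducing it to $\epsilon_{A\oplus B}\circ R(k_{A,B}^{-1}) = r_{A,B}$ in \cat{C}, and verify this componentwise against the jointly monic projections $\pi_A,\pi_B$, reading the defining triangles of $k_{A,B}$ under $R$. The essential computations are the same in both versions: the two components $\epsilon_A\otimes e_B$ and $e_A\otimes\epsilon_B$, and in particular the biproduct step $\pi_A\circ r_{A,B}=\epsilon_A\otimes e_B$, which requires that tensoring with a zero morphism gives zero --- an assumption the paper's own proof uses just as silently as yours does. What your direction buys is a slightly leaner argument: no use of $\eta$, no triangle identity, and no appeal to the universal property of the product in $\cat{C}_\times$, only bijectivity of $H$ and joint monicity of the biproduct projections; the paper's direction is the one you would want if the goal were to construct the comparison morphism from $r_{A,B}$ rather than to identify an already-given $k_{A,B}^{-1}$ with it.
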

\begin{proof}
The morphisms $H_{Q(A) \times Q(B), A \oplus B} (r_{A,B})$ defined in this manner are clearly well-defined morphisms of comonoids. We must show that it mediates between the product structures in $\cat{C} _\times$; it suffices to show that for any $A$ and $B$, $Q(i_A ^\dag) \circ k_{A,B} ^{-1}  = \id{Q(A)} \times e_B$. These are morphisms in $\Hom_{\cat{C} _\times} (Q(A) \times Q(B),Q(A))$, and we apply the hom-set isomorphism once again to view them as morphisms in $\Hom_{\cat{C}} (F(A) \otimes F(B), A)$. We obtain
\begin{align*}
& \hspace{-30pt}\epsilon_A \circ F(i_A ^\dag) \circ R(H_{Q(A) \times Q(B), A \oplus B} (r_{A,B}))
\\
&= \epsilon_A \circ F(i_A ^\dag) \circ F(i_A \epsilon_A \otimes e_B + e_A \otimes i_B \epsilon_B) \circ R\eta _{Q(A) \times Q(B)}
\\
&= \epsilon_A \circ F(\epsilon_A \otimes e_B) \circ R\eta _{Q(A) \times Q(B)}
\\
&=(\epsilon_A \otimes e_B) \circ \epsilon _{F(F(A) \otimes F(B))} \circ R \eta _{Q(A) \times Q(B)}
\\
&= \epsilon_A \otimes e_B.
\end{align*}
But this is equal to $\epsilon_A \circ R(\id{Q(A)} \times e_B)$, and so the product-preservation equation holds.
\end{proof}

\noindent We can use this to prove another very useful result.
\begin{lemma}
\label{edexpand}
At any stage $A$, we can write $\epsilon_A \circ d_A ^\dag : F(A) \otimes F(A) \to A$ as
\begin{align}
\epsilon_A \circ d _A ^\dag &= \epsilon_A \otimes e_A + e_A \otimes \epsilon_A.
\end{align}
\end{lemma}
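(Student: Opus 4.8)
The plan is to unwind $d_A^\dag$ using the explicit formula for $d_A$ from Lemma~\ref{autoed} and then to recognise the result via Lemma~\ref{gettingk}. We have $d_A = Rk_{A,A}\circ F(\Delta_A)$, so, taking adjoints and using that $F$ commutes with $\dag$ (property~3 of Definition~\ref{defhoa}), that $Rk_{A,A}$ is unitary, and that in a $\dag$-biproduct category the diagonal and codiagonal satisfy $\Delta_A^\dag = \nabla_A$, one obtains
\begin{equation*}
\epsilon_A\circ d_A^\dag \;=\; \epsilon_A\circ F(\Delta_A)^\dag\circ (Rk_{A,A})^\dag \;=\; \epsilon_A\circ F(\nabla_A)\circ Rk_{A,A}^{-1}.
\end{equation*}

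Next I would push $\epsilon_A$ past $F(\nabla_A)$ using naturality of the counit $\epsilon: R\circ Q\dot{\to}\id{\cat{C}}$ at the morphism $\nabla_A:A\oplus A\to A$, which gives $\epsilon_A\circ F(\nabla_A) = \nabla_A\circ\epsilon_{A\oplus A}$, so that everything reduces to identifying the morphism $\epsilon_{A\oplus A}\circ Rk_{A,A}^{-1}$. But this is precisely what Lemma~\ref{gettingk} supplies: the canonical hom-set isomorphism $H$ of the adjunction $R\dashv Q$ sends a morphism $g$ of comonoids to $\epsilon\circ R(g)$ in the direction opposite to the one stated there, so the identity $k_{A,B}^{-1} = H_{Q(A)\times Q(B),A\oplus B}(r_{A,B})$ is equivalent to $\epsilon_{A\oplus B}\circ Rk_{A,B}^{-1} = r_{A,B}$, with $r_{A,B} = i_A\epsilon_A\otimes e_B + e_A\otimes i_B\epsilon_B$. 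Specialising to $B=A$ and writing $i_1,i_2:A\to A\oplus A$ for the two injections, we get $\epsilon_A\circ d_A^\dag = \nabla_A\circ r_{A,A} = \nabla_A\circ\bigl(i_1\epsilon_A\otimes e_A + e_A\otimes i_2\epsilon_A\bigr)$. Since $\nabla_A\circ i_1 = \nabla_A\circ i_2 = \id{A}$, and suppressing (as throughout the paper) the coherence isomorphisms $A\otimes I\simeq A$, this collapses to $\epsilon_A\otimes e_A + e_A\otimes\epsilon_A$, as claimed.

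The computation is essentially routine once Lemmas~\ref{autoed} and~\ref{gettingk} are available; the only points requiring care are checking $\Delta_A^\dag = \nabla_A$ (which is immediate from the $\dag$-biproduct axiom that injections and projections are adjoint) and keeping track of the unit isomorphisms in the final step so that each summand genuinely is a morphism $F(A)\otimes F(A)\to A$. I expect the only real obstacle --- a minor one --- to be recognising at the outset that $\epsilon_{A\oplus A}\circ Rk_{A,A}^{-1}$ is exactly the image of $k_{A,A}^{-1}$ under the inverse hom-set isomorphism, so that Lemma~\ref{gettingk} applies directly; failing that, one can reach the same place by expanding $Rk_{A,A}^{-1} = F(r_{A,A})\circ R\eta_{Q(A)\times Q(A)}$, applying naturality of $\epsilon$ a second time, and then invoking the triangle identity $\epsilon_{RC'}\circ R\eta_{C'} = \id{RC'}$.
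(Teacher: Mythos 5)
Your proof is correct and is essentially the paper's own argument: both hinge on Lemma \ref{autoed} (giving $d_A^\dag = F(\nabla_A)\circ Rk_{A,A}^{-1}$ via unitarity of $Rk_{A,A}$, $\Delta_A^\dag=\nabla_A$ and commutation of $F$ with $\dag$), Lemma \ref{gettingk}, and naturality of $\epsilon$. The only difference is organisational --- you slide $\epsilon_A$ past $F(\nabla_A)$ first and absorb the triangle identity into $H^{-1}\circ H=\mathrm{id}$, whereas the paper expands $Rk_{A,A}^{-1}$ as $F(r_{A,A})\circ R\eta_{Q(A)\times Q(A)}$ and invokes the triangle identity explicitly; indeed your fallback remark at the end is precisely the paper's proof.
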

\begin{proof} Using lemmas \ref{autoed} and \ref{gettingk}, along with naturality of $\epsilon$ and one of the adjunction equations, we obtain
\begin{align*}
\epsilon_A \circ d_A ^\dag &= \epsilon_A \circ F( \nabla_A) \circ F(i_A \epsilon_A \otimes e_A + e_A \otimes i_A \epsilon_A) \circ R \eta _{Q(A) \times Q(A)}
\\
&= \epsilon_A \circ F(\epsilon_A \otimes e_A + e_A \otimes \epsilon_A) \circ R \eta _{Q(A) \times Q(A)}
\\
&= (\epsilon_A \otimes e_A + e_A \otimes \epsilon_A) \circ \epsilon_{F(F(A) \otimes F(A))} \circ R \eta _{Q(A) \times Q(A)}
\\
&= \epsilon_A \otimes e_A + e_A \otimes \epsilon_A.
\qedhere
\end{align*}
\end{proof}
\subsection{Graphical representation}

We will develop a graphical representation for the extra structure associated with a categorical harmonic oscillator, which will help us to prove theorems more easily. The most basic structure is the functor $F: \cat{C} \to \cat{C}$. We represent it as a pair of dashed lines, one on each side of its argument, for its action on both objects and morphisms. Functoriality of $F$ then means that graphical components within the dashed lines can be manipulated as if the dashed lines were not there; $F$ acts as an `inert container'. This principle is illustrated in the following diagram, which holds for all $f:A \to B$ and $g:B \to C$ in \cat{C}:
\begin{equation*}
\psfrag{RQA}{\hspace{-55pt}$F(A)$}
\psfrag{RQC}{\hspace{-55pt}$F(C)$}
\psfrag{RQB}{\hspace{-102pt}$F(B)$}
\psfrag{f}{\hspace{-8pt}\raisebox{-1pt}{$f$}}
\psfrag{g}{\hspace{-5pt}\raisebox{0pt}{$g$}}
\psfrag{gcf}{\hspace{-12pt}\raisebox{-1pt}{$g \circ f$}}
\begin{array}{c}
\hspace{27pt}
\includegraphics{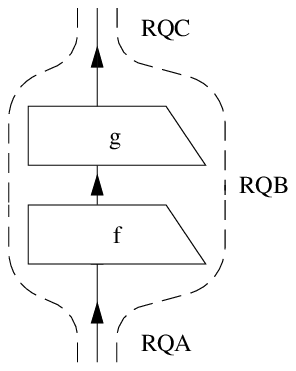}
\hspace{-25pt}
\end{array}
=
\begin{array}{c}
\includegraphics{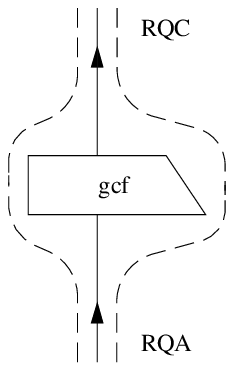}
\end{array}
\end{equation*}

To represent the comultiplication $d_A$ and counit $e_A$ graphically for each comonoid $Q(A)$, we extend the graphical representation for comonoids which we developed in section \ref{tools}. The graphical components that we will use are as follows, defined for all objects $A$ in \cat{C}:
\secretlabel{drep}
\secretlabel{erep}
\begin{equation*}
\begin{array}{c}
\begin{array}{c}
\includegraphics{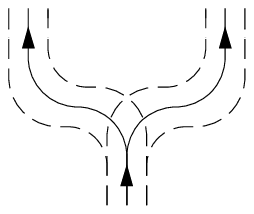}
\end{array}
\\
d_A:F(A) \to F(A) \otimes F(A)
\end{array}
\hspace{50pt}
\begin{array}{c}
\begin{array}{c}
\includegraphics{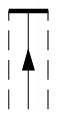}
\end{array}
\\
e_A : F(A) \to I
\end{array}
\end{equation*}
As we have done here, we will often not annotate our diagrams whenever this does not introduce ambiguities. We will also frequently work with the adjoints $d_A ^\dag$ and $e_A ^\dag$ without remark; following the conventions laid out in section \ref{tools}, these are given by flipping the diagrams along a horizontal axis, and then reversing the orientation of the arrows. Of course, these graphical components obey the dual versions of the associativity, unit and commutativity laws (\ref{assoc}), (\ref{unit}) and (\ref{comm}).

Finally, we introduce the representations for the unit and counit natural transformations. For each stage
\begin{align*}
\eta_{(A,g,u) _\times} &: (A,g,u) _\times \to Q(A),
\\
\epsilon_A &:F(A) \to A,
\end{align*}
we employ the following diagrams:
\begin{equation*}
\begin{array}{c}
\psfrag{ec}{$R\eta_{(A,g,u)_\times}$}
\includegraphics{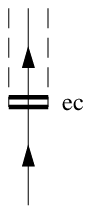}
\hspace{-38pt}
\\
R\eta_{(A,g,u) _\times}:A \to F(A)
\end{array}
\hspace{50pt}
\begin{array}{c}
\includegraphics{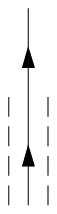}
\\
\epsilon_A : F(A) \to A
\end{array}
\end{equation*}
The graphical representation of $\epsilon_A$ features an unblocked `mouth' for the dashed lines. This is an intuitive way to represent of the naturality of $\epsilon$, which can be written algebraically as \mbox{$\epsilon_B \circ F(f) = f \circ \epsilon_A$} for any $f:A \to B$. In other words, graphical components can be freely moved across $e_A$, into and out of the functor $F$, traversing the `mouth' of the dashed lines:
\begin{equation*}
\psfrag{f}{\hspace{-8pt}\raisebox{-1pt}{$f$}}
\centregraphics{graphics/naturality1}
=
\centregraphics{graphics/naturality2}
\end{equation*}
For $R\eta_{(A,g,u)_\times}$, however, we mark the end of the functor $F$ by a double line; this is `harder' for graphical components to cross, as naturality of $\eta$ implies that only morphisms of comonoids may pass. 

Finally, there are compatibility equations satisfied by the natural transformations $\eta$ and $\epsilon$ which define the adjunction. We summarise these here, along with their graphical representations.
\begin{eqnarray}
\begin{array}{c}
\label{Qadjeqn}
\begin{diagram}[midshaft,height=30pt,width=35pt]
Q & \rTo^{\eta Q} & Q \circ R \circ Q
\\
& \rdTo_{\id{Q}} & \dTo>{Q \epsilon}
\\
& & Q
\end{diagram}
\end{array}
&\hspace{100pt}&
\begin{array}{c}
{
\psfrag{eta}{\hspace{-77pt}$R\eta ^{} _{Q(A)}$}
\includegraphics{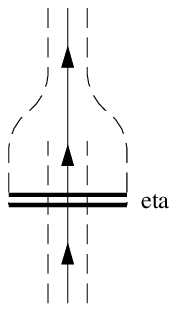}
\hspace{-20pt}
}
\end{array}
=
\begin{array}{c}
{
\includegraphics{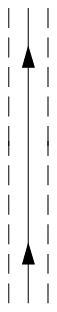}
}
\end{array}
\\
\label{Radjeqn}
\begin{array}{c}
\begin{diagram}[midshaft,height=30pt,width=35pt]
R & \rTo^{R \eta} & R \circ Q \circ R
\\
& \rdTo_{\id{R}} & \dTo>{\epsilon R}
\\
& & R
\end{diagram}
\end{array}
&\hspace{100pt}&
\begin{array}{c}
\hspace{3.5pt}
{
\psfrag{eta}{\hspace{-66pt}$R\eta_{(A,g,u) _\times}$}
\hspace{8pt}
\includegraphics{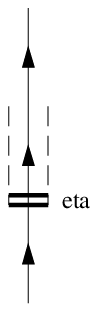}
\hspace{-1pt}
}
\hspace{-7pt}
\end{array}
=
\begin{array}{c}
\hspace{3.5pt}
{
\includegraphics{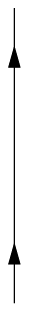}
}
\end{array}
\end{eqnarray}
These are the graphical representations  for arbitrary stages $A$ and $(A,g,u) _\times$ respectively. These adjunction equations are of course entirely category-theoretical, but they will prove essential to the physics, allowing us to demonstrate in theorem \ref{thmcohloweringeigenstate} that coherent states are eigenstates of the lowering operators, and in theorem \ref{cohasexp} that the coherent state can be written as the exponential of the raising operator.

\subsection{Raising and lowering morphisms}

In this section, we will construct generalised raising and lowering operators associated to the categorical harmonic oscillators.

For all morphisms $f : A \to B$ in $\cat{C}$, $Q(f):Q(A) \to Q(B)$ is a morphism of comonoids by the definition of $Q$, and so must satisfy the following diagram drawn in $\cat{C}$:
\begin{equation}
\begin{diagram}[width=50pt,height=20pt]
F(A) \otimes F(A) & \rTo^{F(f) \otimes F(f)} & F(B) \otimes F(B)
\\
\uTo <{d_A} && \uTo >{d_B} \\
F(A) & \rTo ^{F(f)} & F(B) \\
\dTo <{e_A} && \dTo >{e_B} \\
I & \rEq & I \\
\end{diagram}
\vspace{5pt}
\end{equation}
However, this is exactly the condition for $d_A$ and $e_A$ to be the stages of natural transformations
\begin{align*}
d &: F \to F \otimes F
\\
e &: F \to \cat{C}_I
\end{align*}
where $\cat{C}_I:\cat{C} \to \cat{C}$ is the functor that sends all objects in $\cat{C}$ to the monoidal identity object $I$, and all morphisms to $\id{I}$. We therefore have four basic natural transformations, arising from the comonoid structure and from the adjunction, which we can summarise using the following diagram:
\begin{equation}
\label{fourarrows}\vspace{0pt}
\begin{diagram}[width=40pt,height=30pt,midshaft,loose]
&& \hspace{0pt} F\otimes F \hspace{0pt} &\\
&\textbf{Comonoid}\,\,\,\,&\uTo<{d}&&\\
\cat{C}_I & \lTo^{e} & F & \rTo _{R \eta Q} & FF
\\
&&\dTo >{\epsilon}& \,\,\,\, \textbf{Adjunction} &
\\
&&\id{\cat{C}}&&
\end{diagram}
\end{equation}

We will define some new natural transformations, and use them to define the raising and lowering morphisms.
\begin{defn} The \emph{lowering natural transformation} $a: F \dot{\to} F \otimes \id{ \cat{C}}$ is defined as the following composite natural transformation:
\begin{equation}
\begin{diagram}[midshaft,width=40pt,height=15pt]
F & \rTo^a & & & F \otimes \id{ \cat{C}}
\\
\dEq{} & & & & {}\dEq
\\
F & \rTo^d & F \otimes F & \rTo^{\id{F} \otimes \epsilon} & F \otimes \id{\cat{C}}
\end{diagram}
\vspace{10pt}
\end{equation}
The \emph{raising natural transformation} $a^\dag:F \otimes \id{\cat{C}} \dot{\to} F$ is the adjoint to this.
\end{defn}

\begin{defn} The \emph{lowering morphism} $a_{\phi} : F(A) \to F(A)$ associated to the state $\phi : I \to F(A)$ is defined  as follows:
\begin{equation}
\label{lowering}
\begin{diagram}[midshaft,width=40pt,height=15pt]
F(A) & \rTo^{a_{\phi}} & & & F(A)
\\
\dEq{} & & & & {}\dEq
\\
F(A) & \rTo^{a_A} & F(A) \otimes A & \rTo^{\id{F(A)} \otimes \phi^\dag} & F(A)
\end{diagram}
\vspace{10pt}
\end{equation}
The \emph{raising morphism} $a^\dag _{\phi} : F(A) \to F(A)$ associated to $\phi$ is defined similarly:
\begin{equation}
\begin{diagram}[midshaft,width=40pt,height=15pt]
F(A) & \rTo^{a ^\dag _\phi} & & & F(A)
\\
\dEq &&&& \dEq
\\
F(A) & \rTo^{\id{F(A)} \otimes \phi} & F(A) \otimes A & \rTo ^{a ^\dag _A} & F(A)
\end{diagram}
\end{equation}
\end{defn}
\noindent The lowering morphism $a_{\phi}$ has a state as its subscript, and the stage $a_A$ of the lowering natural transformation has an object as its subscript, so they can be differentiated. Of course, the raising and lowering morphisms are related directly by the $\dag$ functor:
\begin{equation*}
\quad a_ \phi ^\dag \equiv (a_{\phi}) ^\dag.
\end{equation*}

The raising and lowering morphisms can perhaps be understood more intuitively by their graphical representations:
\begin{equation*}
\begin{tabular}{ccc}
{\psfrag{p}{$\phi$}
\includegraphics{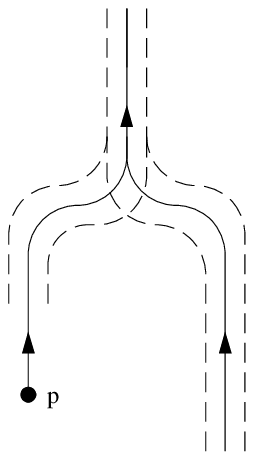}
}
&&
{\psfrag{pdag}{$\phi^\dag$}
\includegraphics{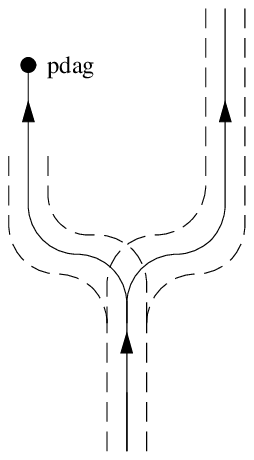}
}
\\
Raising morphism
&&
Lowering morphism
\\
$a ^\dag _\phi : F(A) \to F(A)$
&\qquad\qquad\qquad&
$a_{\phi} : F(A) \to F(A)$
\end{tabular}
\end{equation*}
We apply the raising morphism $a _\phi ^\dag$ to $F(A)$ by creating a new state $\phi$ of $A$, embedding it into $F(A)$, then multiplying with our original instance of $F(A)$. To lower with respect to $\phi$, we comultiply, then extract the single-particle state from one of the legs and take the inner product with $\phi$.

A cornerstone of the conventional analysis of the quantum harmonic oscillator is the set of canonical commutation relations, or CCRs, as described in section \ref{conventionalqho}:
\begin{equation*}
\begin{array}{ccccc}
\left[ a _{\phi}, a _{\psi} ^\dag \right] \subset ( \phi, \psi ) \,\id{}
&\quad&
\left[ a ^\dag_{\phi}, a ^\dag _{\psi} \right] \subset 0
&\quad&
\big[ a _{\phi}, a _{\psi} \big] \subset 0
\end{array}
\end{equation*}
We will show that appropriate categorical versions of these equations hold in our construction.

\begin{theorem}
For all objects $A$ in \cat{C}, and all states $\phi,\psi:I \to A$, the following commutation relations hold:
\begin{align}
\label{ccr}
a _{\phi} \circ a _{\psi} ^\dag &= a _{\psi} ^\dag \circ a _{\phi} + (\phi ^\dag \circ \psi) \cdot\id{F(A)}
\\
\label{ccrraising}
a ^\dag_{\phi} \circ a ^\dag _{\psi} &= a ^\dag _{\psi} \circ a ^\dag_{\phi}
\\
\label{ccrlowering}
a _{\phi} \circ a _{\psi} &= a _{\psi} \circ a _{\phi}
\end{align}
\end{theorem}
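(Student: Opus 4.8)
The plan is to unwind the definitions until the raising and lowering morphisms become, respectively, right multiplication and directional differentiation in the bialgebra $\big(F(A),d_A,e_A,d_A^\dag,e_A^\dag\big)_{\times +}$ supplied by Lemma~\ref{bialgebralemma}, and then to let the bialgebra law do the work. Reading off the definitions (and Lemma~\ref{autoed} only implicitly), one has $a^\dag_\phi = d_A^\dag \circ (\id{F(A)} \otimes (\epsilon_A^\dag \circ \phi))$ and $a_\phi = (\id{F(A)} \otimes (\phi^\dag \circ \epsilon_A)) \circ d_A$. So $a^\dag_\phi$ is multiplication on the right by the ``created element'' $\epsilon_A^\dag \circ \phi$, and $a_\phi$ is ``comultiply, then contract one leg against the functional $\phi^\dag \circ \epsilon_A$''.

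For the two easy relations: associativity of the multiplication $d_A^\dag$ (part of the bialgebra structure) rewrites both $a^\dag_\phi \circ a^\dag_\psi$ and $a^\dag_\psi \circ a^\dag_\phi$ as $d_A^\dag$ applied to the pair of created elements $\epsilon_A^\dag \circ \phi$, $\epsilon_A^\dag \circ \psi$, the two expressions differing only in the order of that pair; commutativity of $d_A^\dag$ (the adjoint of cocommutativity of $d_A$) then identifies them, giving~(\ref{ccrraising}). Equation~(\ref{ccrlowering}) is then obtained for free by applying the $\dag$ functor to~(\ref{ccrraising}), since $a_\phi \circ a_\psi = (a^\dag_\psi \circ a^\dag_\phi)^\dag$ and $a_\psi \circ a_\phi = (a^\dag_\phi \circ a^\dag_\psi)^\dag$.

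The substantive relation is~(\ref{ccr}). I would expand $a_\phi \circ a_\psi^\dag = (\id{F(A)} \otimes (\phi^\dag \circ \epsilon_A)) \circ d_A \circ d_A^\dag \circ (\id{F(A)} \otimes (\epsilon_A^\dag \circ \psi))$ and insert the bialgebra ``hexagon'' $d_A \circ d_A^\dag = (d_A^\dag \otimes d_A^\dag) \circ (\id{F(A)} \otimes \swap^\otimes_{F(A),F(A)} \otimes \id{F(A)}) \circ (d_A \otimes d_A)$ from Lemma~\ref{bialgebralemma}. Pushing the created element through the adjoint of Lemma~\ref{edexpand}, namely $d_A \circ \epsilon_A^\dag = \epsilon_A^\dag \otimes e_A^\dag + e_A^\dag \otimes \epsilon_A^\dag$ (up to unit isomorphisms), splits the expression into two summands, and the monoid unit law $d_A^\dag \circ (\id{F(A)} \otimes e_A^\dag) = \id{F(A)}$ collapses the leg carrying $e_A^\dag$ in each. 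In the first summand the raising acts on the comultiplication leg that is \emph{not} contracted against $\phi$, and the pieces reassemble into $a_\psi^\dag \circ a_\phi$. In the second summand the raising lands on the leg that is contracted; here one computes $\epsilon_A \circ a_\psi^\dag = \epsilon_A \circ d_A^\dag \circ (\id{F(A)} \otimes (\epsilon_A^\dag \circ \psi))$ using Lemma~\ref{edexpand} together with $\epsilon_A \circ \epsilon_A^\dag = \id{A}$ and $e_A \circ \epsilon_A^\dag = 0$ from Lemma~\ref{orthognormlemma}~(\ref{epseps})--(\ref{eeps}), which leaves just $\psi \circ e_A$; substituting this back and using the comonoid counit law $(\id{F(A)} \otimes e_A) \circ d_A = \id{F(A)}$ and the scalar-multiplication conventions of section~\ref{biproducts} turns the second summand into exactly $(\phi^\dag \circ \psi) \cdot \id{F(A)}$.

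I expect the only real difficulty to be bookkeeping rather than ideas: through the hexagon law one must keep straight which of the four tensor legs is the freshly created particle, which is the one contracted against $\phi$, and which carry the (co)units $e_A$, $e_A^\dag$; and one must shepherd the structural $\rho$, $\lambda$, $\swap^\otimes$ isomorphisms and the passage between $\cdot$-scalar multiplication and ordinary composition with a scalar without error. This is exactly where the graphical calculus of section~\ref{dualssection} earns its keep: as string diagrams, (\ref{ccrraising}) and (\ref{ccrlowering}) are bare (co)associativity and (co)commutativity moves, while~(\ref{ccr}) is the familiar picture in which the bialgebra law produces precisely one term where the just-created particle is immediately annihilated against $\phi$, contributing the scalar $\phi^\dag \circ \psi$ --- the categorical incarnation of there being one more way to put a ball into a box and then take one out than to do it in the other order.
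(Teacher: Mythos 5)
Your proposal is correct and follows essentially the same route as the paper's proof: the easy relations from (co)associativity and (co)commutativity of $Q(A)$, and the main relation (\ref{ccr}) from the bialgebra law of lemma \ref{bialgebralemma} combined with lemma \ref{edexpand} and the orthogonality relations of lemma \ref{orthognormlemma}, with two of the resulting terms vanishing and the remaining two giving $a^\dag_\psi \circ a_\phi$ and the scalar term. The only cosmetic differences are that the paper carries out the computation graphically at the level of the natural transformations $a_A$, $a_A^\dag$ (plugging in $\phi$, $\psi$ afterwards) and proves (\ref{ccrlowering}) directly rather than by applying $\dag$ to (\ref{ccrraising}).
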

\noindent We first note that the domain issue in the conventional case is not relevant here; indeed, it is not even expressible. Secondly, we have rearranged the equations to some extent, as we will not in general have a way to express negatives in our category.
\begin{proof}
Given the definitions of the categorical raising and lowering morphisms, equations (\ref{ccrraising}) and (\ref{ccrlowering}) follow straightforwardly from cocommutativity and coassociativity of the comonoid $Q(A)$. We demonstrate this using the graphical representation for the $a ^\dag$ commutation relation; the proof for $a$ is analogous.    
\begin{gather*}
\begin{array}{c}
\begin{diagram}[height=30pt]
F(A)
\\
\uTo < {a _\phi ^\dag}
\\
F(A)
\\
\uTo < {a _\psi ^\dag}
\\
F(A)
\end{diagram}
\end{array}
=
{
\psfrag{f}{$\phi$}
\psfrag{p}{$\psi$}
\centregraphics{graphics/adagcomm1}
=
\centregraphics{graphics/adagcomm2}
}
\\
=
{
\psfrag{f}{$\psi$}
\psfrag{p}{$\phi$}
\centregraphics{graphics/adagcomm2}
=
\centregraphics{graphics/adagcomm1}
}
=
\begin{diagram}[height=30pt]
F(A)
\\
\uTo < {a _\psi ^\dag}
\\
F(A)
\\
\uTo < {a _\phi ^\dag}
\\
F(A)
\end{diagram}
\end{gather*}

\newpage
Our proof of equation (\ref{ccr}) is adapted from \cite{differentialfiore}. We employ lemmas \ref{bialgebralemma}, \ref{orthognormlemma} and \ref{edexpand}, and work at an arbitrary state $A$ in \cat{C}.
\begin{align*}
\psfrag{epsilonA}{}
\psfrag{epsilonAdagger}{}
a_A ^{\phantom{\dag}} \circ a_A^\dag
\quad
&=
\centregraphics{graphics/newcommutator1}
=
\centregraphics{graphics/newcommutator2}
\\
& =\centregraphics{graphics/newcommutator3a}
+
\centregraphics{graphics/newcommutator3b}
\\
& \hspace{30pt}+ \centregraphics{graphics/newcommutator3c}
+ \centregraphics{graphics/newcommutator3d}
\\
& =
\centregraphics{graphics/newcommutator4a}
+ \quad 0 \quad+ \quad 0 \quad+
\centregraphics{graphics/newcommutator4b}
\\
&= \quad a _A^\dag \circ a _A ^{\phantom{\dag}}\quad + \quad \id{A} \otimes \id{F(A)}.
\qedhere
\end{align*}

\end{proof}

\subsection{Physical interpretation and path-counting\label{pathcounting}}

For each object $A$ in \cat{C}, $F(A)$ is the generalised Fock space over $A$. We interpret the state \mbox{$e_A ^\dag : I \to F(A)$} as containing no `particles'; we think of it as the `vacuum state'. The morphism $\epsilon _A ^\dag : A \to F(A)$ injects the `single-particle space' into $F(A)$; if a state of $F(A)$ factors through this arrow, then we say that it is a `single-particle state'. We make this interpretation not only because it mirrors the particle interpretation for Fock space in quantum field theory; in fact, we will argue that it fits naturally with the extra structure that we have available.

The multiplication $d_A ^\dag : F(A) \otimes F(A) \to F(A)$ is interpreted as combining the states of two identical physical systems, to produce a state of a single physical system. Applying the adjoint, we obtain $d_A: F(A) \to F(A) \otimes F(A)$, which we interpret as constructing the superposition of every possible way of splitting the state of the system between two systems; or alternatively, of splitting the particle content of the system between two systems. The identity $d_A \circ \epsilon_A ^\dag = \epsilon_A^\dag \otimes e_A^\dag + e_A^\dag \otimes \epsilon_A^\dag$ now becomes transparent: splitting a single-particle state between two systems can be done in one of two ways, placing the single particle in either the first or second system, the other system being left in the vacuum state.

An even more concrete physical analogy is to think of $d_A : F(A) \to F(A) \otimes F(A)$ as a \emph{beamsplitter}, a half-silvered mirror which splits an obliquely incident stream of photons into two separate outgoing beams, one which passes through the mirror and one which reflects from it. In general, these outgoing beams will be entangled with one another, in state which is a superposition of all of the possible ways that the photon content of the incident beam could be distributed between the two outgoing beams.

We interpret the counit morphism $e_A : F(A) \to I$ as a measurement of the vacuum state. Essentially, it \emph{asserts} that there are no particles present, and in doing so contributes a scalar factor representing the amplitude that this is the case.

We will see that this `particle interpretation' of the components of the categorical structure gives rise to a useful algorithm for evaluating a large class of morphism compositions. The following diagrams must hold in our framework, the first by additivity of the comultiplication and the second by functoriality of $F$:
\secretlabel{superposition}
\secretlabel{following}
\begin{equation*}
\psfrag{n}{\hspace{1pt}\raisebox{-0.5pt}{$n$}}
\psfrag{m}{\hspace{0.5pt}\raisebox{-0.5pt}{$m$}}
\psfrag{npm}{\hspace{3pt}$n\mb+\mb m$}
\begin{array}{ccc}
\begin{array}{c}
\includegraphics{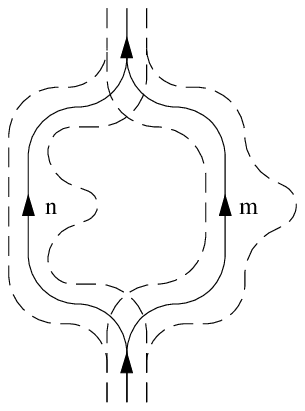}
\end{array}
=
\begin{array}{c}
\includegraphics{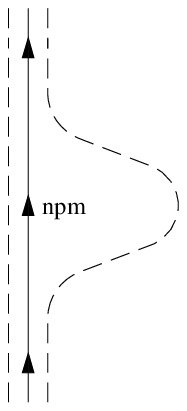}
\end{array}
&\quad\quad&
\begin{array}{c}
\includegraphics{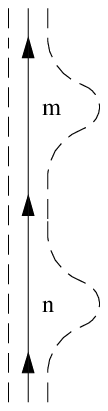}
\end{array}
=
\begin{array}{c}
\psfrag{npm}{\hspace{3pt}$m\circ n$}
\includegraphics{graphics/pathcounting2}
\end{array}
\\
(\ref{superposition})
&&
(\ref{following})
\end{array}
\end{equation*}
The symbols $n$ and $m$ are scalars $I \to I$ in $\cat{C}$, and $n+m$ is their sum as defined by the biproduct structure. The scalars are understood to be evaluated \emph{inside} the functor $F$, in the sense $F(n \cdot \id{A} )$. We can evaluate these diagrams by imagining a single, classical particle passing along them, in the direction of the arrows. For each possible route that the particle could take, we calculate a total amplitude for that route by composing the scalars $m$ or $n$ through which the particle passes. The total amplitude $w$ for the process is the sum of the amplitudes for each of the different routes, and the value for the diagram is $F(w \cdot \id{A})$. Diagram (\ref{superposition}) has two paths, contributing amplitudes $n$ and $m$ respectively; diagram (\ref{following}) has one path, contributing an amplitude $n \circ m$.

This can be considered an application of the \emph{path-integral} approach for quantum mechanics: the amplitude for a quantum process is a given by a sum of the amplitudes for each of the classical ways that the process could be completed. The novel aspect here is that the branching point at which the classical particle must make an exclusive choice is an explicit part of the theory, represented by the comultiplication morphisms $d_A$.

With this evaluation algorithm in mind, we would expect the following diagrams to evaluate to the same morphism, since in each case, there is only one path from each `in' leg to each `out' leg:
\begin{gather}
\centregraphics{graphics/bialgebra1}
=
\centregraphics{graphics/bialgebra2}
\end{gather}
But this is precisely one of the axioms of a bialgebra, which we have already established in lemma \ref{bialgebralemma}.

We can now appreciate the difficulties that would be introduced here if our category had duals. In that case, we would be able to construct closed loops, and there could be an infinite number of paths by which to navigate a finite diagram in our category. However, lacking infinite scalars in general, we would not be able to make sense of this.

\subsection{Coherent states\label{coh}}

The harmonic oscillator adjunction $R\dashv Q$ implies particular isomorphisms of hom-sets, by the basic definition of an adjunction. Specifically, for all $(A,g,u) _\times$ in $\cat{C}_\times$ and all $B$ in $\cat{C}$ we have a natural isomorphism of sets:
\begin{equation}
\label{adjiso}
\quad H_{(A,g,u) _\times,B}:\Hom_\cat{C} ( A,B ) \simeq \Hom_{\cat{C}_\times}((A,g,u) _\times ,Q(B)).
\end{equation}
We recall the explicit definition for this isomorphism in terms of the unit and counit of the adjunction. If morphisms $f:A\to B$ in $\cat{C}$ and $f':(A,g,u) _\times \to Q(B)$ in $\cat{C}_\times$ are related by the isomorphism, then we must have the following:
\begin{align}
\label{iso1}
H_{(A,g,u) _\times,B}(f)&:=Q(f)\circ \eta^{} _{(A,g,u) _\times} = f'
\\
H_{(A,g,u) _\times,B}^{-1}(f')&:=\epsilon_B \circ R(f')=f.
\end{align}

Generally, objects $A$ in $\cat{C}$ do not have a natural cocommutative comonoid structure, so given only an $f:A\to B$ there will be no canonical choice of an $(A,g,u) _\times$ with which to implement equation (\ref{iso1}), producing a  morphism of comonoids in $\cat{C}_\times$. But there are exceptions: in particular, the monoidal unit object $I$ in $\cat{C}$ has a natural cocommutative comonoid structure \mbox{$I_\times := (I,\lambda_I^{\dag},\id{I})$}, which is the terminal object in $\cat{C}_\times$ as discussed in section \ref{tools}.

Making this natural choice of $I_\times$ for the comonoid on $I$, we can establish the following isomorphism for all objects $A$ in \cat{C}:
\begin{equation*}
H_{I_\times,A} : \Hom_{\cat{C}} (I,A) \simeq \Hom_{ \cat{C}_\times}(I_\times,Q(A)).
\end{equation*}
If we interpret elements of $\Hom_\cat{C}(I,A)$ as states of $A$, then this tells us that there is an isomorphism between states of $A$ in $\cat{C}$ and points of the comonoid $Q(A)$ in $\cat{C}_\times$. We can calculate the morphisms of comonoids to which these points correspond: given a state $\phi:I\to A$ in $\cat{C}$, we use equation (\ref{iso1}) to obtain a composite morphism
\begin{diagram}[midshaft,width=60pt]
\quad H_{I_\times,A}(\phi)=I_\times & \rTo^{ \eta^{} _{I_\times}} & (F(I),d_I,e_I) & \rTo^{ Q(\phi) } & (F(A),d_A,e_A) .
\end{diagram}
We interpret this as a generalised coherent state.
\begin{defn} The \emph{coherent state} $\Coh(\phi):I \to F(A)$ associated to the single-particle state $\phi:I \to A$ is given by\footnote{We note that this construction can still be made in the case of a model of linear logic on a category with biproducts, such as that described in \cite{differentialfiore}, where in place of $R \eta ^{} _{I _\times}$ we have \mbox{$\delta_0 : F(0) \simeq I \to FF(0) \simeq F(I)$}, where $\delta:F \dot{\to} FF$ is the comultiplication for the comonad.}
\begin{equation}
\quad \Coh(\phi) := F(\phi) \circ R\eta ^{} _{I_\times} .
\end{equation}
\end{defn}
\noindent We make this interpretation because states of this form have the properties that we expect from coherent states, as described in section \ref{conventionalqho}: they can be copied, and they are eigenstates of the lowering morphisms.

\begin{theorem} Coherent states are copied and deleted by the comultiplication and counit morphisms defined by the harmonic oscillator adjunction, as described by the following equations@
\begin{align}
d_A \circ \Coh(\phi) &= \Coh(\phi) \otimes \Coh(\phi)
\label{copycoh}
\\
e_A \circ \Coh(\phi) &= \id{I}.
\label{deletecoh}
\end{align}
\end{theorem}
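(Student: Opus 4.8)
The plan is to observe that $\Coh(\phi)$ is, by construction, nothing but the underlying $\cat{C}$-morphism of a morphism of comonoids out of the terminal comonoid $I_\times$, and then to read off both equations~(\ref{copycoh}) and~(\ref{deletecoh}) as the two defining conditions for such a morphism. Recall from~(\ref{iso1}) that $H_{I_\times,A}(\phi) = Q(\phi)\circ\eta_{I_\times} : I_\times \to Q(A)$ is a morphism in $\cat{C}_\times$, hence a morphism of cocommutative comonoids, and that applying the forgetful functor $R$ to it returns precisely $F(\phi)\circ R\eta_{I_\times} = \Coh(\phi)$.

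First I would spell out the definition of a morphism of comonoids $m : (X,g,u)_\times \to (Y,h,v)_\times$ from section~\ref{tools} --- namely $h\circ m = (m\otimes m)\circ g$ together with $v\circ m = u$ --- and specialise it to $m = \Coh(\phi)$, with $X = I$ carrying the comonoid structure $I_\times = (I,\lambda_I^{-1},\id{I})_\times$ and $Y = F(A)$ carrying the comonoid structure $Q(A) = (F(A),d_A,e_A)_\times$. The counit condition then reads $e_A\circ\Coh(\phi) = \id{I}$, which is exactly~(\ref{deletecoh}).

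The comultiplication condition reads $d_A\circ\Coh(\phi) = (\Coh(\phi)\otimes\Coh(\phi))\circ\lambda_I^{-1}$. To put this in the stated form~(\ref{copycoh}) I would then discharge the structural isomorphism $\lambda_I : I\otimes I \simeq I$: by the coherence theorem invoked in section~\ref{symmon} we may treat $\lambda_I^{-1}$ as the canonical identification, so the right-hand side is simply $\Coh(\phi)\otimes\Coh(\phi)$ regarded as a state of $F(A)\otimes F(A)$. As a more explicit alternative one can expand $\Coh(\phi) = F(\phi)\circ R\eta_{I_\times}$, use functoriality of $F$, and apply separately that $Q(\phi)$ is a comonoid morphism (so $d_A\circ F(\phi) = (F(\phi)\otimes F(\phi))\circ d_I$ and $e_A\circ F(\phi) = e_I$) and that $\eta_{I_\times}$ is a comonoid morphism (so $d_I\circ R\eta_{I_\times} = (R\eta_{I_\times}\otimes R\eta_{I_\times})\circ\lambda_I^{-1}$ and $e_I\circ R\eta_{I_\times} = \id{I}$), chaining the identities together; this is also the computation one reads directly off the graphical calculus, pushing the `double-line' boundary of $R\eta_{I_\times}$ through the comultiplication and counit.

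There is essentially no hard step here: the whole content is that $\Coh(\phi)$ was manufactured as a point of the comonoid $Q(A)$, and points of a comonoid are by definition compatible with its comultiplication and counit. The one place to be slightly careful is the bookkeeping of the unit isomorphism $\lambda_I$ appearing in~(\ref{copycoh}), which is why I would state explicitly that coherence permits suppressing it.
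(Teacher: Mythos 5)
Your proposal is correct and is essentially the paper's own argument: the paper proves equation (\ref{copycoh}) exactly by your ``explicit alternative,'' chaining the facts that $F(\phi)$ and $R\eta^{}_{I_\times}$ are morphisms of comonoids (with the comultiplication $d_I\circ R\eta^{}_{I_\times}=(R\eta^{}_{I_\times}\otimes R\eta^{}_{I_\times})\circ\lambda_I^\dag$ appearing and the unit isomorphism suppressed by coherence, just as you do). For (\ref{deletecoh}) the paper phrases the step via terminality of $I_\times$ in $\cat{C}_\times$ rather than reading off the counit condition directly, but this is only a cosmetic repackaging of your observation that $\Coh(\phi)$ underlies the comonoid morphism $H_{I_\times,A}(\phi)$.
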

\begin{proof} To prove equation ($\ref{copycoh}$), we use the fact that $F(\phi)$ and $R \eta _{I _\times}$ are morphisms of comonoids.
\begin{diagram}[midshaft,width=50pt,height=15pt]
I & \rTo ^{\hspace{-10pt} \Coh(\phi) \hspace{-10pt}} & & & F(A) & \rTo^{d_A} & F(A) \otimes F(A)
\\
\dEq & & & & \dEq & &
\\
I & \rTo ^{R \eta ^{} _{I _\times}} & F(I) & \rTo^{F( \phi) } & F(A) & &
\\
& & \dEq & & & & \dEq
\\
& & F(I) & \rTo^{d_I} & F(I) \otimes F(I) & \rTo^{F(\phi) \otimes F(\phi)} & F(A) \otimes F(A)
\\
\dEq & & & & \dEq & &
\\
I & \rTo^{ d_0 = \lambda_I ^\dag} & I \otimes I & \rTo^{ R \eta ^{} _{I _\times} \otimes R \eta ^{} _{I _\times}} & F(I) \otimes F(I) & &
\\
\dEq & & & & & & \dEq
\\
I & \rTo^{ \hspace{-20pt}\Coh(\phi) \otimes \Coh(\phi) \hspace{-20pt}} & & & & & F(A) \otimes F(A)
\end{diagram}
For equation (\ref{deletecoh}), we note that $e_A \circ \Coh(\phi) = e_A \circ F(\phi) \circ R \eta ^{} _{I _\times}$ is in the image of the hom-set $\Hom _{\cat{C} _\times} (I _\times,I_\times)$ under the functor $R : \cat{C} _\times \to \cat{C}$. But $I _\times$ is terminal, and functoriality of $R$ then implies $e_A \circ \Coh(\phi) = \id{I}$. 
\end{proof}

\begin{theorem}
\label{thmcohloweringeigenstate}
Coherent states are eigenstates of the lowering morphisms, satisfying the equation
\begin{equation}
\a{\psi} \circ \Coh(\phi)=(\psi^\dag\circ\phi) \cdot \Coh(\phi) .
\end{equation}
\end{theorem}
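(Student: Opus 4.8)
\emph{Proof proposal.}

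The plan is to collapse the lowering morphism $\a{\psi}$ into a single composite built from the comultiplication $d_A$ and the counit $\epsilon_A$, apply it to $\Coh(\phi)$, and then exploit the copying property of coherent states established in the previous theorem. First I would unfold the definitions: from the definition of the lowering natural transformation, $a_A = (\id{F(A)} \otimes \epsilon_A) \circ d_A$, and from the definition of the lowering morphism, $\a{\psi} = (\id{F(A)} \otimes \psi^\dag) \circ a_A$. Composing these and collecting the two maps out of the right-hand tensor factor gives (suppressing the coherence isomorphism $F(A) \otimes I \simeq F(A)$ as usual)
\begin{equation*}
\a{\psi} = \big( \id{F(A)} \otimes (\psi^\dag \circ \epsilon_A) \big) \circ d_A .
\end{equation*}

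Next I would precompose with $\Coh(\phi)$ and invoke the copying equation (\ref{copycoh}), namely $d_A \circ \Coh(\phi) = \Coh(\phi) \otimes \Coh(\phi)$. This yields
\begin{equation*}
\a{\psi} \circ \Coh(\phi) = \big( \id{F(A)} \otimes (\psi^\dag \circ \epsilon_A) \big) \circ \big( \Coh(\phi) \otimes \Coh(\phi) \big) = \Coh(\phi) \otimes \big( \psi^\dag \circ \epsilon_A \circ \Coh(\phi) \big),
\end{equation*}
so that everything reduces to evaluating $\epsilon_A \circ \Coh(\phi)$.

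The key subsidiary fact is that $\epsilon_A \circ \Coh(\phi) = \phi$ --- intuitively, the single-particle content of a coherent state is exactly the state that generated it. To prove this I would write $\Coh(\phi) = F(\phi) \circ R\eta^{}_{I_\times}$, apply naturality of $\epsilon : F \dot{\to} \id{\cat{C}}$ at the morphism $\phi : I \to A$ to obtain $\epsilon_A \circ F(\phi) = \phi \circ \epsilon_I$, and then use the adjunction triangle identity (\ref{Radjeqn}) at the stage $I_\times$, which says $\epsilon_I \circ R\eta^{}_{I_\times} = \id{I}$ since $R(I_\times) = I$. Hence $\psi^\dag \circ \epsilon_A \circ \Coh(\phi) = \psi^\dag \circ \phi$, which is a scalar $I \to I$, and $\Coh(\phi) \otimes (\psi^\dag \circ \phi)$ is, by the very definition of scalar multiplication in section \ref{tools}, equal to $(\psi^\dag \circ \phi) \cdot \Coh(\phi)$. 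This is the claimed identity.

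I do not expect a genuine obstacle here; the only delicate point is the bookkeeping of scalars and of the unit coherence isomorphisms in the final step, together with checking that $\psi^\dag \circ \phi$ may legitimately be handled as a scalar acting on $\Coh(\phi)$ in the semimodule sense of section \ref{tools}. An alternative route of the same length bypasses the copying theorem: push $d_A$ through $F(\phi)$ by naturality of the comultiplication $d$, and then use that $R\eta^{}_{I_\times}$ is a morphism of comonoids --- whose comultiplication on $I_\times$ is $\lambda_I^\dag$ --- together with the same triangle identity. Both arguments are perhaps clearest rendered in the graphical calculus, where the proof is a short sequence of moves.
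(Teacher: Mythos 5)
Your proposal is correct and takes essentially the same route as the paper: both decompose $a_{\psi} \circ \Coh(\phi)$ through $d_A$, duplicate the coherent state (you cite the copying equation (\ref{copycoh}); the paper re-derives it inline from the facts that $F(\phi)$ and $\eta^{}_{I_\times}$ are morphisms of comonoids), and then collapse $\epsilon_A \circ \Coh(\phi) = \phi$ via naturality of $\epsilon$ together with the adjunction triangle identity (\ref{Radjeqn}). The ``alternative route'' you sketch in your final paragraph is in fact precisely the paper's own proof.
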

\begin{proof}
The theorem is proved by the following commuting diagram. In order, we employ that $F(\phi)$ is a morphism of comonoids, that $\eta ^{} _{I_\times}$ is a morphism of comonoids, the construction of the $H_{I_\times,A}$ isomorphism, and the properties of the symmetric monoidal structure.
\begin{diagram}[midshaft,loose,height=15pt,width=40pt]
I & \rTo^{\hspace{-20pt} \Coh(\phi) \hspace{-20pt}} & & & F(A) & \rTo ^{a_{\psi}} & & & & & F(A)
\\
\dEq & & & & \dEq & & & & & & \dEq
\\
I & \rTo^{\eta^{}_{I_\times}} &  F(I) & \rTo{\hspace{-10pt} F(\phi) \hspace{-10pt}} & F(A)
& \rTo^{d_A} & F(A) & \rTo ^{\id{F(A)} \mb\otimes\mb \epsilon_A} & F(A)
\mb\otimes\mb A & \rTo^{\id{A} \mb\otimes\mb \psi^\dag} & F(A) \\
&&\dEq&&&&\dEq&&&&\\
&&F(I)&\rTo^{d_I} & F(I) ^{\otimes 2} & \rTo ^{F(\phi) ^{\otimes 2}} & F(A)
\\
\dEq &&&&\dEq&&&&&&
\\
I&\rTo^{\lambda_I^{-1}}&I \mb\otimes\mb I
&\rTo^{ R\eta_{I_\times}^{\otimes 2} \hspace{-0pt} } & F(I) ^{\otimes 2} &&&&&&\\
&&\dEq&&&&&&\dEq&&\\
&&I \mb\otimes\mb I&\rTo^{\hspace{-40pt} \left( F(\phi) \circ R\eta^{}_{I_\times} \right) \otimes \phi\hspace{-40pt}}&&&&&F(A) \mb\otimes\mb A \\
\dEq&&&&&&&&&&\dEq\\
I & \rTo{ \hspace{-40pt} (\psi^\dag \circ \phi) \cdot \Coh( \phi ) \hspace{-40pt} } &&&&&&&&&F(A)
\end{diagram}
We also give the same proof using the graphical calculus. In this form, it is much more readable, although it has exactly the same content as the previous symbolic proof. Here, the double horizontal parallel lines all represent the morphism \mbox{$R \eta ^{} _{I _\times}: I \to F(I)$}. The explicit use of the adjunction equation (\ref{Radjeqn}) is very clear here.
\begin{equation*}
\psfrag{p}{$\scriptstyle\phi$}
\psfrag{fd}{$\psi^\dag$}
\begin{array}{c}
\includegraphics{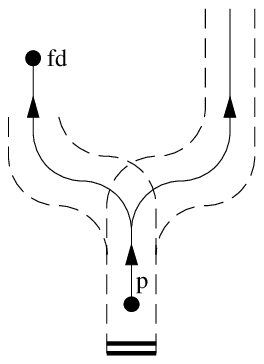}
\end{array}
=
\begin{array}{c}
\includegraphics{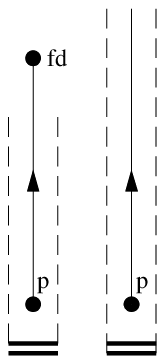}
\end{array}
\hspace{-6pt}
=
\hspace{-6pt}
\begin{array}{c}
\psfrag{bigp}{$\phi$}
\includegraphics{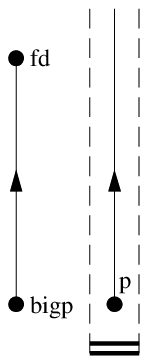}
\end{array}
\end{equation*}
\end{proof}

\section{Morphism exponentials}
\label{endomorphexp}
\subsection{Construction}

Let \cat{C} be a symmetric monoidal $\dag$-category with $\dag$-biproducts. We will see how a harmonic oscillator adjunction $\llangle Q, \eta, \epsilon \rrangle$ can be used to create exponentials of elements of commutative monoids in \cat{C}; that is, we will use commutative monoids $(A,g,u) _+$ to convert states $\phi : I \to A$ in \cat{C} into exponential states $\exp _{(A,g,u) _+}(\phi):I \to A$ in \cat{C}.

Since we will often find ourselves needing to work with with commutative monoids as well as cocommutative comonoids, we first introduce some new notation.

\begin{defn} Given a free cocommutative comonoid functor $Q:\cat{C} \to \cat{C} _\times$, we define the associated free commutative monoid functor to be $Q_+:\cat{C} \to \cat{C} _+$. It is therefore natural to refer to $Q$ itself as $Q _\times$. We also define the associated forgetful functors $R _\times : \cat{C} _\times \to \cat{C}$ and $R_+:\cat{C} _+ \to \cat{C}$. The adjunction $R_\times \dashv Q_\times$ with unit and counit $\eta ^\times$ and $\epsilon ^\times$ induces an adjunction $Q_+ \dashv R_+$ with unit and counit $\epsilon^+$ and $\eta^+$, where for all monoids $(A,g,u) _+$ and \cat{C}-objects $A$, $R_+ \eta^+ _{(A,g,u) _+} = (R_\times \eta^\times _{(A, g^\dag, u^\dag) _\times}) ^\dag$ and $\epsilon^+_A = (\epsilon_A ^\times) ^\dag$.
\end{defn}

\begin{defn} Given a commutative monoid $(A,g,u) _+$ in \cat{C}, and a state $\phi:I \to A$ in \cat{C}, then the \emph{exponential} $\exp _{(A,g,u) _+} (\phi):I \to A$ in \cat{C} is defined in the following way, where we give both the diagrammatic and graphical representations:
\begin{equation}
\begin{array}{c}
\begin{diagram}[midshaft,height=10pt,width=40pt]
I & \rTo^{\exp _{(A,g,u) _+}(\phi)} & & & & & A
\\
\dEq & & & & & & \dEq 
\\
I & \rTo^{R _\times \eta ^{\times} _{I _\times}} & F(I) & \rTo^{F(\phi)} & F(A) & \rTo^{R _+ \eta ^+ _{(A,g,u) _+}} & A
\end{diagram}
\end{array}
\hspace{30pt}
\begin{array}{c}
\psfrag{phi}{$\phi$}
\psfrag{eta2}{$R_+ \eta ^{+} _{(A,g,u) _+}$}
\psfrag{eta1}{$R_\times \eta ^{\times} _{I _\times}$}
\includegraphics{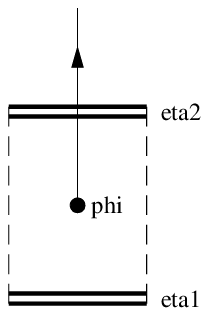}
\end{array}
\hspace{20pt}
\end{equation}
\end{defn}

\noindent The state $\phi$ is evaluated `inside a box' built from the categorical structure. Intuitively, we can think of $\exp _{(A,g,u) _+} (\phi)$ as being given by the infinite sum
\begin{align}
\exp _{(A,g,u) _+} (\phi) = \left(\frac{1}{0!} \cdot u \right) &+ \left( \frac{1}{1!}\cdot \phi \right) + \left(\frac{1}{2!} \cdot g \circ (\phi \otimes \phi) \right)
\nonumber
\\
& \quad+ \left(\frac{1}{3!} \cdot g  \circ (g \otimes \id{A}) \circ (\phi \otimes \phi \otimes \phi)\right) + \ldots,
\end{align}
as will in fact be the case in a suitable category of Hilbert spaces. The correspondence with the conventional notion of exponential is very clear.

This construction has several nice properties. First, we demonstrate that these exponentials behave well under composition, and that the unit for the monoid is given by the exponential of the zero morphism.
\begin{lemma} Exponentials of elements of commutative monoids compose additively; that is, for morphisms $\phi,\psi:I \to A$ and $(A,g,u) _+$,
\begin{equation}
g^\dag \circ \big( \exp(\phi) \otimes \exp(\psi) \big) = \exp(\phi + \psi)
\end{equation}
where the exponentials are defined with respect to the monoid $(A,g,u) _+$, and $\phi + \psi$ is defined by the biproduct structure in the underlying category.
\end{lemma}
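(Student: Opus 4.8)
The plan is to unfold the definition of $\exp$, use Lemma~\ref{additivity} to split $F(\phi+\psi)$ into a ``comultiply, apply, multiply'' sandwich, and then absorb the two structural caps of the exponential, using that one of them is a morphism of comonoids and the other a morphism of monoids. First I would write
\[
\exp_{(A,g,u)_+}(\phi+\psi)\;=\;R_+\eta^+_{(A,g,u)_+}\circ F(\phi+\psi)\circ R_\times\eta^\times_{I_\times},
\]
and apply Lemma~\ref{additivity} to the pair of morphisms $\phi,\psi\colon I\to A$ to replace the middle factor by $d_A^\dag\circ\big(F(\phi)\otimes F(\psi)\big)\circ d_I$.

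The argument then rests on two intertwining identities coming from the adjunction data. On the one hand, $R_+\eta^+_{(A,g,u)_+}$ underlies the counit component $\eta^+_{(A,g,u)_+}\colon Q_+(A)\to(A,g,u)_+$ of the adjunction $Q_+\dashv R_+$, and is therefore a morphism of commutative monoids, from $Q_+(A)$ --- whose underlying object is $F(A)$ with multiplication $d_A^\dag$ and unit $e_A^\dag$, this being the monoid half of the bialgebra of Lemma~\ref{bialgebralemma} and the $\dag$-image of $Q_\times(A)=(F(A),d_A,e_A)_\times$ --- to $(A,g,u)_+$; hence
\[
R_+\eta^+_{(A,g,u)_+}\circ d_A^\dag\;=\;g\circ\big(R_+\eta^+_{(A,g,u)_+}\otimes R_+\eta^+_{(A,g,u)_+}\big).
\]
On the other hand, $R_\times\eta^\times_{I_\times}$ underlies the comonoid morphism $\eta^\times_{I_\times}\colon I_\times\to Q_\times(I)=(F(I),d_I,e_I)_\times$, and since the comultiplication of $I_\times$ is $\lambda_I^{-1}$ this gives $d_I\circ R_\times\eta^\times_{I_\times}=\big(R_\times\eta^\times_{I_\times}\otimes R_\times\eta^\times_{I_\times}\big)\circ\lambda_I^{-1}$. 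Substituting both identities and using bifunctoriality of $\otimes$ to regroup the tensor factors, the interior collapses and one is left with $g\circ\big(\exp(\phi)\otimes\exp(\psi)\big)\circ\lambda_I^{-1}$, which is the asserted identity once the structural isomorphism $\lambda_I$ is suppressed via the coherence theorem, as is done throughout the paper. The same computation reads cleanly in the graphical calculus: inside the exponential box, Lemma~\ref{additivity} splits the dashed channel in two, the bottom cap $R_\times\eta^\times_{I_\times}$ slides past the comultiplication because it is a comonoid morphism, the top cap $R_+\eta^+_{(A,g,u)_+}$ slides past the multiplication because it is a monoid morphism, and the picture separates into two exponential boxes --- one holding $\phi$, one holding $\psi$ --- joined by the multiplication $g$.

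No single step is hard; the substance of the proof is bookkeeping, and the point most in need of care is matching the two intertwining identities to the precise definitions of $\eta^+$, $\eta^\times$ and their $\dag$-relationship, and checking that Lemma~\ref{additivity}, stated for a general source object, really does apply with $I$ in that slot. As a by-product, the analogous argument with units and counits in place of multiplications and comultiplications --- using $F(0_{I,A})=e_A^\dag\circ e_I$, which follows from Lemma~\ref{autoed} and the vanishing of composites through a zero morphism --- yields the companion fact, mentioned before the lemma, that $\exp(0_{I,A})=u$.
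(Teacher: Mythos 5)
Your proposal is correct and takes essentially the same route as the paper: the paper's (graphical) proof likewise combines the additivity of the comultiplication (Lemma \ref{additivity}) with the facts that $R_\times\eta^\times_{I_\times}$ is a morphism of comonoids and $R_+\eta^+_{(A,g,u)_+}$ is a morphism of monoids, exactly the two intertwining identities you use.
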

\begin{proof} We can prove the lemma in a straightforward way using the graphical representation. We employ that $R _\times \eta ^\times _{I _\times}$ is a morphism of comonoids, that $R_+ \eta^+ _{(A,g,u) _+}$ is a morphism of monoids, and the additivity of the comultiplication for comonoids in the image of $Q$.
\begin{gather*}
\psfrag{phi}{$\phi$}
\psfrag{psi}{$\psi$}
\psfrag{pp}{$\phi+\psi$}
\psfrag{leta1}{\hspace{-15pt}$R_\times \eta ^{\times} _{I _\times}$}
\psfrag{leta2}{\hspace{-38pt}$R_+ \eta^+ _{(A,g,u) _+}$}
\psfrag{eta1}{\hspace{0pt}$R_\times \eta ^{\times} _{I _\times}$}
\psfrag{eta2}{$R_+ \eta ^{+} _{(A,g,u) _+}$}
\begin{array}{c}
\includegraphics{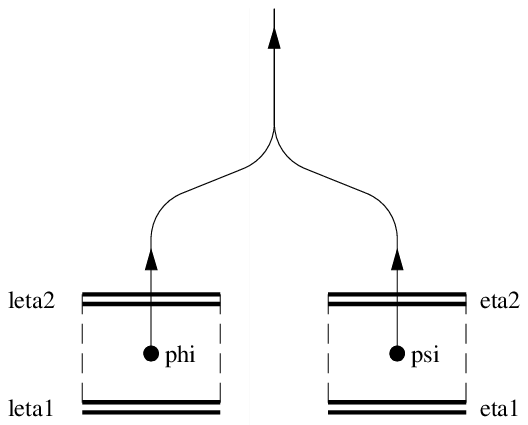}
\end{array}
\hspace{30pt}
=
\begin{array}{c}
\includegraphics{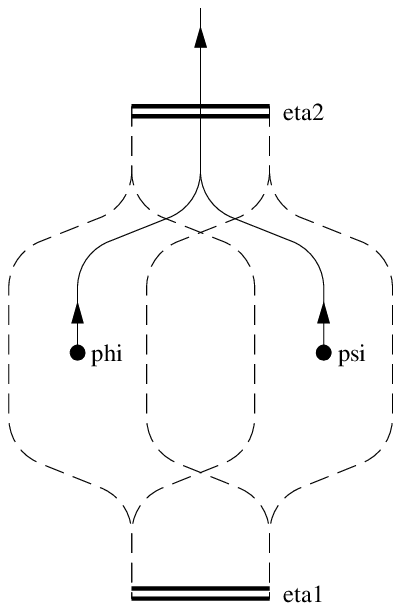}
\end{array}
\\
\psfrag{pp}{\hspace{3pt}$\phi+\psi$}
\psfrag{eta1}{$R_\times \eta ^{\times} _{I _\times}$}
\psfrag{eta2}{$R_+ \eta ^{+} _{(A,g,u) _+}$}
=
\begin{array}{c}
\includegraphics{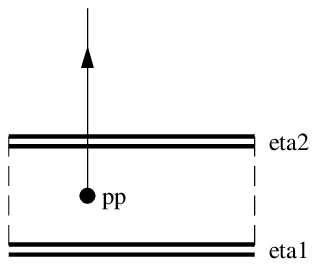}
\end{array}
\end{gather*}
\end{proof}

\begin{lemma} For all monoids $(A,g,u) _+$, we have
\begin{equation}
\exp _{(A,g,u) _+} (0_{I,A}) = u.
\end{equation}
\end{lemma}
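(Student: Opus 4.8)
The plan is to unfold the definition of the exponential and reduce the middle factor $F(0_{I,A})$ using Lemma~\ref{autoed}. By definition,
\[
\exp_{(A,g,u)_+}(0_{I,A}) = R_+\eta^+_{(A,g,u)_+} \circ F(0_{I,A}) \circ R_\times\eta^\times_{I_\times},
\]
so everything hinges on rewriting $F(0_{I,A}):F(I)\to F(A)$. Since \cat{C} has a zero object, $0_{I,A}$ factors as $0_{0,A}\circ 0_{I,0}$ through $0$, and functoriality of $F$ gives $F(0_{I,A}) = F(0_{0,A})\circ F(0_{I,0})$.

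Next I would identify each of these factors using equation~(\ref{eeqn}) of Lemma~\ref{autoed}. Read at the object $I$ it gives $e_I = Rk_0 \circ F(0_{I,0})$, hence $F(0_{I,0}) = (Rk_0)^{-1}\circ e_I$ since $Rk_0$ is unitary. Read at $A$ and then adjointed it gives $F(0_{0,A}) = e_A^\dag \circ Rk_0$ (using compatibility of $F$ with $(-)^\dag$, unitarity of $Rk_0$, and the fact that the adjoint of a zero morphism is the zero morphism the other way). Composing these two expressions, the $Rk_0$ factors cancel, so $F(0_{I,A}) = e_A^\dag \circ e_I$. Substituting back,
\[
\exp_{(A,g,u)_+}(0_{I,A}) = R_+\eta^+_{(A,g,u)_+} \circ e_A^\dag \circ \big( e_I \circ R_\times\eta^\times_{I_\times} \big),
\]
and the bracketed composite is $\id{I}$: it is the image under $R_\times$ of the endomorphism $e_I \circ \eta^\times_{I_\times}$ of the terminal object $I_\times$ of $\cat{C}_\times$, which is forced to be the identity. (Equivalently, this says $\Coh(0_{I,A}) = e_A^\dag$, i.e. the coherent state over the zero state is the vacuum.)

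Finally I would invoke that $R_+\eta^+_{(A,g,u)_+}$ preserves units. It equals $(R_\times\eta^\times_{(A,g^\dag,u^\dag)_\times})^\dag$, and $\eta^\times_{(A,g^\dag,u^\dag)_\times}$ is a morphism of comonoids from $(A,g^\dag,u^\dag)_\times$ to $Q_\times(A) = (F(A),d_A,e_A)_\times$, so it satisfies the counit-preservation law $e_A \circ R_\times\eta^\times_{(A,g^\dag,u^\dag)_\times} = u^\dag$; adjointing this gives $R_+\eta^+_{(A,g,u)_+}\circ e_A^\dag = u$. Hence $\exp_{(A,g,u)_+}(0_{I,A}) = u$. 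Each step is a short unfolding and there is no real obstacle; the only point requiring care is the adjointing of equation~(\ref{eeqn}) at $A$, where one must note that $(F(0_{A,0}))^\dag = F(0_{0,A})$ — which follows from $F$ being compatible with $(-)^\dag$ together with the fact that $0_{0,A}$ is the unique (initial) arrow $0\to A$ and so coincides with $(0_{A,0})^\dag$.
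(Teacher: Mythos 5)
Your proof is correct and follows essentially the same route as the paper's: factoring $0_{I,A}$ through the zero object, using Lemma~\ref{autoed} together with unitarity of $Rk_0$ (the paper inserts $Rk_0^{-1}\circ Rk_0$ rather than solving for $F(0_{I,0})$ and $F(0_{0,A})$, which is the same computation), terminality of $I_\times$ to kill $e_I\circ R_\times\eta^\times_{I_\times}$, and unit-preservation of $R_+\eta^+_{(A,g,u)_+}$ to conclude. The only difference is presentational — you argue equationally where the paper draws one commuting diagram — and your care over $(F(0_{A,0}))^\dag = F(0_{0,A})$ is fine.
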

\begin{proof} We demonstrate the lemma with the following diagram. We introduce the identity in the form $R k_0 ^{-1} \circ R k_0$, and employ the fact that $I _\times$ is terminal in $\cat{C} _\times$, the definition of $e_A ^\dag$ in terms of $F(0_{I,0})$, and the fact that $R _+ \eta ^+ _{(A,g,u) _+}$ is a morphism of monoids.
\begin{diagram}[midshaft,height=15pt]
I & \rTo^{R_\times \eta^{\times} _{I _\times}} & F(I) & \rTo^{F(0 _{I,A})} & & & & F(A) & \rTo ^{R_+ \eta^+ _{(A,g,u) _+}} & A
\\
& & \dEq & & & & & \dEq & &
\\
& & F(I) & \rTo^{F(0 _{I,0})} & F(0) & \rTo^{F( 0 _{0,A})} & & F(A) & &
\\
& & & \ruEq(2,2) & & \rdEq(2,2) & & & &
\\
& & F(0) & \rTo_{R k_0} & I & \rTo_{ R k_0 ^{-1}} & F(0) & & &
\\
\dEq & & & & \dEq & & & \dEq & &
\\
I & \rEq & & & I & \rTo^{e_A ^\dag} & & F(A) & &
\\
& & & & \dEq & & & & & \dEq
\\
& & & & I & \rTo^u & & & & A
\end{diagram}
\end{proof}

Finally, we demonstrate that the morphism exponential construction is natural in the following sense.
\begin{lemma}
\label{naturalexp}
For any pair of commutative monoids $(A,g,u)_+$ and $(B,h,v) _+$, and a morphism of monoids \mbox{$m:(A,g,u) _+ \to (B,h,v) _+$}, the following naturality condition holds for all $\phi: I \to A$:
\begin{equation}
\exp _{(B,h,v) _+} (m \circ \phi) = m \circ \exp_{(A,g,u) _+} (\phi).
\end{equation} 
\end{lemma}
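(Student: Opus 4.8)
The plan is to reduce the statement to the naturality of the counit $\eta^+$ of the induced adjunction $Q_+ \dashv R_+$, exactly as the earlier lemmas in this section do for the other exponential identities. Expanding both sides by the definition of the exponential, and splitting $F(m \circ \phi) = F(m) \circ F(\phi)$ by functoriality of $F$, we obtain
\begin{align*}
\exp_{(B,h,v)_+}(m \circ \phi) &= R_+ \eta^+_{(B,h,v)_+} \circ F(m) \circ F(\phi) \circ R_\times \eta^{\times}_{I_\times}, \\
m \circ \exp_{(A,g,u)_+}(\phi) &= m \circ R_+ \eta^+_{(A,g,u)_+} \circ F(\phi) \circ R_\times \eta^{\times}_{I_\times}.
\end{align*}
Both right-hand sides end in the common factor $F(\phi) \circ R_\times \eta^{\times}_{I_\times} : I \to F(A)$, so it is enough to establish the single identity
\begin{equation*}
R_+ \eta^+_{(B,h,v)_+} \circ F(m) = m \circ R_+ \eta^+_{(A,g,u)_+}
\end{equation*}
of morphisms $F(A) \to B$ in \cat{C}, and then post-compose.

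This identity is precisely the naturality square for $\eta^+ : Q_+ \circ R_+ \dot{\to} \id{\cat{C}_+}$ evaluated at the monoid morphism $m$, pushed down into \cat{C} by the forgetful functor $R_+$: one uses that $R_+$ acts as the identity on the underlying morphism $m$, that $R_+ \circ Q_+ = R_\times \circ Q_\times = F$, and that $R_+$ preserves composition. In the graphical calculus this is the transparent move in which the monoid morphism $m$, placed on top of the exponential box, is pushed through the double line representing $R_+ \eta^+_{(A,g,u)_+}$ — a double line which, by naturality of $\eta^+$, may be crossed exactly by morphisms of monoids — so that it re-labels that line as $R_+ \eta^+_{(B,h,v)_+}$, enters the functor $F$ as $F(m)$, and there amalgamates with $F(\phi)$ by functoriality.

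I expect no genuine obstacle here: the argument is pure category theory, and the only point that calls for a little care is the bookkeeping around the dual pair of adjunctions — in particular remembering that $\eta^+$ is the \emph{counit} of $Q_+ \dashv R_+$ (so that $R_+ \eta^+_{(A,g,u)_+} : F(A) \to A$), and that $R_+ \circ Q_+ = F$, which is immediate from the definition of the induced adjunction together with the fact that $\dag$ is the identity on objects and $F$ commutes with $\dag$. I would present the proof either as a short commuting diagram in the style of the preceding lemmas, or directly in the graphical calculus, where the single move just described makes the whole statement visually obvious.
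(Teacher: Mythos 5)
Your proposal is correct and is exactly the paper's argument: the paper's proof is the one-line remark ``Straightforward, from naturality of $\eta$'', and your expansion of the exponential followed by the naturality square for $\eta^+$ (pushed into $\cat{C}$ by $R_+$, using $R_+\circ Q_+ = F$) is precisely the detail that remark leaves implicit. No gap and no divergence from the paper's route.
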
 
\begin{proof} Straightforward, from naturality of $\eta$.
\end{proof}

\noindent Intuitively, this implies that the exponential construction is only sensitive to the smallest submonoid containing the element to be exponentiated.

Given an element $\alpha_{N}$ of a noncommutative monoid $(N,k,s)_{\n+}$, where the subscript `$\n$' stands for noncommutative, we cannot apply this exponential construction directly, as the counit $\eta ^+$ associated to the harmonic oscillator adjunction is only defined for commutative monoids. However, inspired by lemma \ref{naturalexp}, if we have some commutative monoid $(A,g,u) _+$ and a morphism of monoids $m:(A,g,u) _+ \to (N,k,s) _{n+}$ such that for some $\alpha_A:I \to A$ we have $m \circ \alpha_A = \alpha_N$, then we may define
\begin{equation}
\label{expnoncomm}
\exp_{(M,k,s) _{n+}}(\alpha_M) := m \circ \exp_{(A,g,u) _+} (\alpha_A).
\end{equation}
We calculate the exponential using the commutative monoid, and embed the result into $N$ by using $m$.

\subsection{Endomorphism exponentials}

A common application of exponentials in functional analysis is to construct the exponential of an operator on a Hilbert space, defined using the familiar power series expansion. The analogue in our setting is to construct the exponential of an endomorphism $f:A \to A$.

To apply the generalised exponential construction which we have just developed, it seems that we would require a monoid which `knows' about arrow composition. Such a monoid is canonically present if our category has duals\footnote{This is no great surprise: a monoidal category has duals if, seen as a 2-category in a particular canonical way, each 1-morphism has a categorical adjoint. But each such adjunction then induces a monad in the familiar way, and this monad is precisely the monoid that we define here.}, as defined in section \ref{dualssection}.

\begin{defn}
In a monoidal category with duals, the \emph{endomorphism monoid} $\bigcirc_{\n+} ^A$ on an object $A$ is defined in the following way:
\begin{equation*}
\bigcirc _{\n+} ^A := \big( A \otimes A^*,\id{A} \otimes \theta _A \otimes \id{A} , \zeta_A \big) _{\n+}.
\end{equation*}
\end{defn}

\noindent Intuitively, elements of the monoid are names of endomorphisms on $A$, multiplication is endomorphism composition, and the unit for the monoid is the name of the identity morphism.

We can use this monoid to define endomorphism exponentials.
\begin{defn}
In a symmetric monoidal $\dag$-category with duals and $\dag$-biproducts, and a harmonic oscillator adjunction, the name of the exponential of an arbitrary endomorphism $f:A \to A$ is given by
\begin{equation}
\name{\exp{(f)}} := \exp_{\bigcirc ^A _{\n+}} (\name{f}),
\end{equation}
where the exponential over a noncommutative monoid is defined by equation (\ref{expnoncomm}).
\end{defn}

\noindent To construct the exponential of a particular operator, we would therefore need to find a commutative monoid, which embeds as a monoid into $\bigcirc _{\n+} ^A$, and through which the name of our operator factors. Although this is not straightforward, it intuitively seems likely that each endomorphism should generate a commutative monoid, and that this generated monoid should embed into the full endomorphism monoid. While it may not be straightforward in practice to construct such a commutative monoid, there does exist a canonical embedding into the endomorphism monoid.

\begin{lemma} In a monoidal category with duals, any monoid $(N,k,s) _{n+}$ has a canonical monic embedding $m_{(N,k,s) _{\n+}}:(N,k,s) _{\n+} \to \bigcirc_{\n+} ^N$ into the endomorphism monoid on $N$, given by
\begin{equation}
m_{(N,k,s)_{\n+}}:= (k \otimes \id{N^*}) \circ (\id{N} \otimes \zeta _N).
\end{equation}
\end{lemma}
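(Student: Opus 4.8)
The plan is to read the morphism $m_{(N,k,s)_{\n+}}$ as the map sending an element of $N$ to the \emph{name of its left-multiplication endomorphism}. Concretely, for $n\colon I\to N$ one has $(k\otimes\id{N^*})\circ(\id{N}\otimes\zeta_N)\circ n=\name{L_n}$, where $L_n:=k\circ(n\otimes\id{N})\colon N\to N$, the bending-around of the domain leg of $L_n$ being exactly the composite appearing in the statement. Under this reading the three things to verify are transparent: $m$ preserves the unit, $m$ preserves the multiplication, and $m$ is monic. It is worth keeping in mind that the multiplication of $\bigcirc^N_{\n+}$, namely $\id{N}\otimes\theta_N\otimes\id{N}$, is precisely composition of names in the order for which $\name{f}$ `times' $\name{g}$ equals $\name{f\circ g}$.

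For unit preservation, the unit law (\ref{unit}) for $(N,k,s)_{\n+}$ says $L_s=\id{N}$, so $m\circ s=\name{\id{N}}=\zeta_N$, which is exactly the unit of $\bigcirc^N_{\n+}$. For multiplication preservation one must check $m\circ k=(\id{N}\otimes\theta_N\otimes\id{N})\circ(m\otimes m)$; I would do this graphically by unfolding both sides. On the left, applying $k$ and then $m$ produces two nested $k$-vertices fed by the two original inputs and one leg of a single $\zeta_N$, with the dual leg dangling. On the right, $m\otimes m$ produces two $\zeta_N$'s, and the central $\theta_N$ straightens the dual leg of the first $\zeta_N$ against an $N$-wire coming out of the second $k$-vertex; by the snake identity (\ref{dualeq1}) this replaces that leg by the $N$-wire, and what remains is again two nested $k$-vertices. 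The two resulting expressions agree by associativity (\ref{assoc}); this is just the identity $L_n\circ L_{n'}=L_{k(n,n')}$.

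Finally, to see that $m$ is monic I would exhibit an explicit retraction: let $\rho:=(\id{N}\otimes\theta_N)\circ(\id{N\otimes N^*}\otimes s)\colon N\otimes N^*\to N$, the map which evaluates a named endomorphism at the monoid unit $s$. Expanding $\rho\circ m$, the $\zeta_N$--$\theta_N$ pair (the $\zeta_N$ produced by $m$, capped against $s$) straightens via (\ref{dualeq1}) to leave $k\circ(\id{N}\otimes s)$, which is $\id{N}$ by the unit law (\ref{unit}). Hence $m$ is split monic, in particular monic. The only step involving any real bookkeeping is multiplication preservation, and the one point that needs care there is matching the handedness of $\theta_N$ against the order of composition so that $m$ comes out a homomorphism rather than an antihomomorphism; once that is set up correctly, each identity is a one-line diagrammatic manipulation (a single snake plus a single associator), so I anticipate no genuine obstacle.
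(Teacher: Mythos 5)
Your proposal is correct and takes essentially the same route as the paper: the same split-monic argument (your retraction $\rho$ is just $\id{N}\otimes s^*$ written out explicitly, i.e.\ evaluation of a name at the unit $s$, and the same snake-plus-unit-law computation), the same single-snake-plus-associativity check for multiplication preservation, and the same appeal to the unit law for unit preservation. The only slip is cosmetic: the final tensor factor in the endomorphism-monoid multiplication should be $\id{N^*}$ rather than $\id{N}$.
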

\noindent Note that, in particular, commutative monoids will also have such an embedding.
\begin{proof} The embedding has the following graphical representation:
\begin{equation*}
\begin{diagram}[height=30pt]
N \otimes N ^*
\\
\uTo <{m _{(N,k,s)_{\n+}}}
\\
N
\end{diagram} =
\begin{array}{c}
\psfrag{h}{\raisebox{5pt}{\hspace{-10pt}${}$}}
\includegraphics{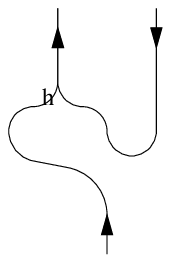}
\end{array}
\end{equation*}
Intuitively, each element of the monoid $(N,k,s) _{\n+}$ is taken to the name of the endomorphism of $A$ which performs multiplication by that element. It is simple to demonstrate that $m_{(N,k,s) _{\n+}}$ is monic, since it has a retraction: $(\id{N} \otimes s^*) \circ m_{(N,k,s) _{n+}} = \id{N}$, although note that this retraction will not be a morphism of monoids in general.

We first show that $m_{(N, k, s) _{n+}}$ preserves multiplication. We must demonstrate that following diagram commutes:
\begin{equation}
\begin{diagram}[midshaft,height=15pt]
N \otimes N & \rTo ^{ m_{(N,k,s) _{\n+}} \otimes m_{(N,k,s) _{\n+}}} & N \otimes N^* \otimes N \otimes N^* & \rTo ^{ \id{N} \otimes \theta _N \otimes \id{N^*}} & N \otimes N^*
\\
\dEq & & & & \dEq
\\
N \otimes N & \rTo ^{k} & N & \rTo ^{m_{(N,k,s) _{\n+}}} & N \otimes N^*
\end{diagram}
\end{equation}
Using the graphical representation, it can be seen that this equation holds, using the dual equations and associativity\footnote{Associativity of $(N,k,s) _+$ is clearly necessary for performing this embedding. For this reason, is does not seem useful to consider the commutative Jordan `algebra' generated by the monoid, as this will fail to be associative in general.} of the multiplication:
\begin{gather*}
\label{needtokillpi}
\begin{diagram}[height=25pt]
N \otimes N ^*
\\
\uTo > {\,\, \id{N} \otimes \theta_N \otimes \id{N}}
\\
N \otimes N^* \otimes N \otimes N^*
\\
\uTo > \,\,m_{(N,k,s) _{\n+}} ^{\otimes 2}
\\
N \otimes N
\end{diagram}
=
\psfrag{pi}{\hspace{-20pt}\raisebox{-2pt}{$\ts ^2_A {}^\dag \circ \ts ^2 _A$}}
\begin{array}{c}
\includegraphics{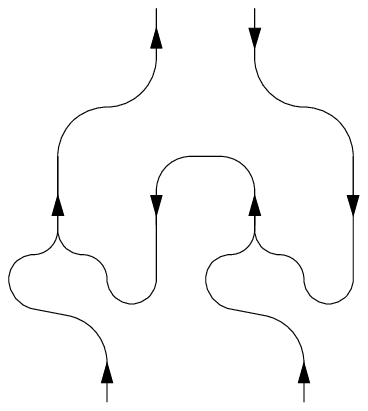}
\end{array}
=
\begin{array}{c}
\includegraphics{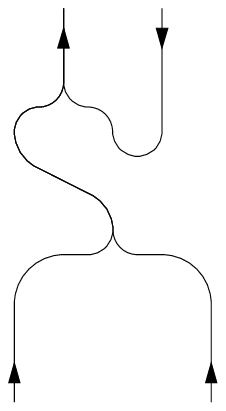}
\end{array}
\hspace{-10pt}
=
\begin{diagram}[height=25pt]
N \otimes N ^*
\\
\uTo > {\,\,m_{(N,k,s) _{\n+}}}
\\
N
\\
\uTo > \,\,k
\\
N \otimes N
\end{diagram}
\end{gather*}
Secondly, we show that $m_{(N,k,s) _{\n+}}$ preserves units, where we employ the unit law for the monoid $m_{(N,k,s) _{\n+}}$:
\begin{equation*}
\begin{diagram}[height=20pt]
N \otimes N^*
\\
\uTo<{m_{(N,k,s) _{\n+}}\,\,}
\\
N
\\
\uTo<{s\,\,}
\\
I
\end{diagram}
=
\begin{array}{c}
\includegraphics{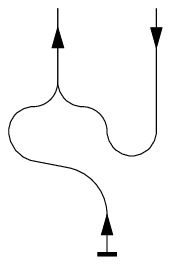}
\end{array}
=
\begin{array}{c}
\includegraphics{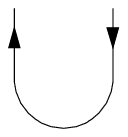}
\end{array}
=
\hspace{5pt}
\begin{diagram}[height=25pt]
N \otimes N^*
\\
\uTo<{\zeta _N}
\\
I
\end{diagram}
\end{equation*}
So $m_{(N,k,s) _{\n+}}$ is a well-defined morphism of monoids, for all monoids $(N,k,s) _{\n+}$.
\end{proof}

We now describe a quantum-mechanical application of this construction.
From the theory of the conventional quantum harmonic oscillator, we expect that for all objects $A$ and for all $\phi : I \to A$ in $\cat{C}$,
\begin{equation}
\label{suggestiveexpeqn}
\quad \Coh(\phi)=\exp(a^\dag_{\phi}) \circ e_A ^\dag,
\end{equation}
where $\exp(a_\phi ^\dag)$ is the endomorphism exponential of $a_\phi ^\dag$.

The name $\name{a _\phi ^\dag}$ of the raising morphism $a _\phi ^\dag : F(A) \to F(A)$ is an element of the endomorphism monoid $\bigcirc _{\n+} ^{F(A)}$. Noting that $\name{a _\phi ^\dag} = m_{Q_+\!(A)} \circ \epsilon ^\dag_A \circ \phi$, we apply equation (\ref{expnoncomm}) to obtain
\begin{align*}
\exp _{\bigcirc _{\n+} ^{F(A)}} \big(\name{a _\phi ^\dag} \big) &= \exp _{\bigcirc _{\n+} ^{F(A)}} \big( m_{Q_+\!(A)} \circ \epsilon ^\dag _A \circ \phi\big)
\\
&= m_{Q_+ (A)} \circ \exp_{Q_{+}\!(A)} \big( \epsilon_A ^\dag \circ \phi \big)
\end{align*}
This simplifies further, using the following lemma.
\begin{lemma} The coherent state is expressed as an exponential in the following way:
\label{cohasexp}
\begin{equation}
\Coh(\phi) = \exp _{Q_+ \! (A)} \big( \epsilon_A ^\dag \circ \phi \big).
\end{equation}
\end{lemma}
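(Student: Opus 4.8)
The plan is to unfold the definitions on both sides of the claimed equation and to observe that $\exp_{Q_+\!(A)}(\epsilon_A^\dag\circ\phi)$ differs from $\Coh(\phi)$ only by a left factor which one of the triangle identities of the adjunction $R_\times\dashv Q_\times$ forces to be the identity morphism. So essentially the whole lemma is a single zig-zag identity dressed in the notation of free commutative monoids, and the only real work is matching things up correctly.

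First I would set up the bookkeeping. Recall that $Q=Q_\times$, so that $Q_\times(A)=(F(A),d_A,e_A)_\times$, and that $Q_+$ is obtained from $Q_\times$ via the $\dag$-duality identifying $\cat{C}_+$ with $\cat{C}_\times^{\mathrm{op}}$; hence $Q_+\!(A)=(F(A),d_A^\dag,e_A^\dag)_+$, so its underlying object is $F(A)$ and $\epsilon_A^\dag\circ\phi:I\to F(A)$ is a legitimate state to feed into the exponential construction. Moreover the stated relation between $\eta^+$ and $\eta^\times$ gives $R_+\eta^+_{Q_+\!(A)}=\big(R_\times\eta^\times_{Q_\times(A)}\big)^\dag$. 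Now unfolding the definition of the morphism exponential and splitting the middle factor by functoriality of $F$,
\[
\exp_{Q_+\!(A)}(\epsilon_A^\dag\circ\phi)=R_+\eta^+_{Q_+\!(A)}\circ F(\epsilon_A^\dag)\circ F(\phi)\circ R_\times\eta^\times_{I_\times}.
\]
Since $\Coh(\phi)=F(\phi)\circ R_\times\eta^\times_{I_\times}$, the lemma reduces to the single equation $R_+\eta^+_{Q_+\!(A)}\circ F(\epsilon_A^\dag)=\id{F(A)}$.

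This last equation is a triangle identity in disguise. The adjunction equation (\ref{Qadjeqn}), read for $R_\times\dashv Q_\times$, states $Q_\times(\epsilon_A)\circ\eta^\times_{Q_\times(A)}=\id{Q_\times(A)}$ in $\cat{C}_\times$; applying the forgetful functor $R_\times$ turns this into $F(\epsilon_A)\circ R_\times\eta^\times_{Q_\times(A)}=\id{F(A)}$ in $\cat{C}$. Taking adjoints of both sides, using that $F$ commutes with $\dag$ (property~3 of Definition~\ref{defhoa}) so that $\big(F(\epsilon_A)\big)^\dag=F(\epsilon_A^\dag)$, and using $R_+\eta^+_{Q_+\!(A)}=\big(R_\times\eta^\times_{Q_\times(A)}\big)^\dag$, yields exactly $R_+\eta^+_{Q_+\!(A)}\circ F(\epsilon_A^\dag)=\id{F(A)}$; substituting this back completes the proof. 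I do not anticipate a genuine obstacle: the mathematical content is one zig-zag identity, and the only thing requiring care is keeping the $\times$-decorated and $+$-decorated data straight. Graphically the argument is immediate, since the composite $R_+\eta^+_{Q_+\!(A)}\circ F(\epsilon_A^\dag)$ straightens out, via (\ref{Qadjeqn}), to a bare $F(A)$ strand.
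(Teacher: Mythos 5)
Your proof is correct and is essentially the paper's own argument: the paper's proof is a one-line appeal to the graphical calculus and the adjunction (triangle) equation (\ref{Qadjeqn}), and your symbolic version simply makes explicit the same reduction to $R_+\eta^+_{Q_+\!(A)}\circ F(\epsilon_A^\dag)=\id{F(A)}$, obtained by daggering $R$ applied to (\ref{Qadjeqn}) and using that $F$ commutes with $\dag$. The bookkeeping of the $+$/$\times$ data is handled correctly, so there is nothing to add.
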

\begin{proof}
Writing this out with the graphical calculus, the result follows straightforwardly from the adjunction equation (\ref{Qadjeqn}).
\end{proof}

Writing the endomorphism exponential $\exp(a _\phi ^\dag)$ in terms of its name, we obtain
\begin{align*}
\exp(a _\phi ^\dag)
&\equiv
(\id{A} \otimes \theta_A) \circ \big( \name{\exp (a_\phi ^\dag)} \otimes \id{A}\big)
\\
&=
(\id{A} \otimes \theta_A) \circ \big(\exp _{\bigcirc _{\n+} ^{F(A)}} \big( \name{a _\phi ^\dag} \big) \otimes \id{A} \big)
\\
&= (\id{A} \otimes \theta_A) \circ \big(\big( m_{Q_+ \! (A)} \circ \Coh(\phi)\big) \otimes \id{A} \big).
\end{align*}
Writing graphically and simplifying, we obtain
\begin{equation}
\label{graphicalexpadagger}
\begin{diagram}[height=40pt]
F(A)
\\
\uTo>{\exp(a _\phi ^\dag) \textrm{}}
\\
F(A)
\end{diagram}
\hspace{8pt}=
\begin{array}{c}
\psfrag{p}{\raisebox{6pt} {\hspace{-3pt}$ \scriptstyle\phi$}}
\psfrag{eta}{\hspace{-1.5pt}$R_\times \eta ^{\times} _{I _\times}$}
\includegraphics{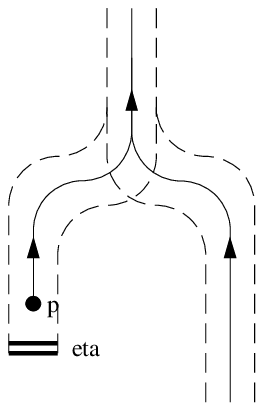}
\end{array}
\end{equation}

\begin{theorem}
The exponential of the raising morphism applied to the zero-particle state produces the coherent state:
\begin{equation*}
\quad \Coh(\phi)= \exp(a^\dag_{\phi}) \circ e_A ^\dag.
\eqno{(\ref{suggestiveexpeqn})}
\end{equation*}
\end{theorem}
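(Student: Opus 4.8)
The plan is to exploit the graphical identity (\ref{graphicalexpadagger}) already obtained for $\exp(a_\phi^\dag)$: compose it on the right with the vacuum state $e_A^\dag : I \to F(A)$ and collapse the resulting diagram by the unit law of the generalised Fock-space monoid. Everything substantive has in fact already been done in the run-up to this theorem, so the proof is essentially a one-line computation once the pieces are correctly identified.

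The first step is to read the right-hand side of (\ref{graphicalexpadagger}) as the endomorphism $d_A^\dag \circ (\Coh(\phi) \otimes \id{F(A)}) : F(A) \to F(A)$, i.e.\ ``left-multiply the input by $\Coh(\phi)$ using the Fock-space multiplication $d_A^\dag$''. This is exactly what the preceding calculation delivers: the name of $\exp(a_\phi^\dag)$ equals $m_{Q_+(A)} \circ \Coh(\phi)$ (using Lemma \ref{cohasexp} to rewrite $\exp_{Q_+(A)}(\epsilon_A^\dag \circ \phi)$ as $\Coh(\phi)$), and by the explicit description of the endomorphism-monoid embedding $m$, for any monoid element $n : I \to F(A)$ the name $m_{Q_+(A)} \circ n$ is precisely the name of the endomorphism $x \mapsto d_A^\dag \circ (n \otimes x)$. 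Here one uses that $\big(F(A), d_A^\dag, e_A^\dag\big)_+ = Q_+(A)$ is a commutative monoid, which is part of Lemma \ref{bialgebralemma}.

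It then remains to compose with the vacuum. We have
\[
\exp(a_\phi^\dag) \circ e_A^\dag = d_A^\dag \circ (\Coh(\phi) \otimes \id{F(A)}) \circ e_A^\dag = d_A^\dag \circ (\Coh(\phi) \otimes e_A^\dag),
\]
and the right unit law for the monoid $\big(F(A), d_A^\dag, e_A^\dag\big)_+$, namely $d_A^\dag \circ (\id{F(A)} \otimes e_A^\dag) = \id{F(A)}$ (up to the suppressed $\rho$ isomorphism), finishes the computation, yielding $\Coh(\phi)$. Graphically this last step is immediate: feeding the monoid unit $e_A^\dag$ into one leg of the multiplication $d_A^\dag$ in the picture (\ref{graphicalexpadagger}) erases that multiplication, and what is left is exactly the diagram defining $\Coh(\phi) = F(\phi) \circ R\eta^{}_{I_\times}$. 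The only point that calls for any care is the identification in the previous paragraph — checking that the $\zeta_{F(A)}$/$\theta_{F(A)}$ pair introduced when forming and then unfolding the name cancels by the zig-zag equations, so that $\exp(a_\phi^\dag)$ really does simplify to $d_A^\dag \circ (\Coh(\phi) \otimes \id{F(A)})$ — and this is routine in the graphical calculus; there is no genuine obstacle beyond bookkeeping, since the physics content has been absorbed into the earlier lemmas.
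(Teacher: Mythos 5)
Your proposal is correct and follows essentially the same route as the paper: the paper's proof likewise reads the graphical form (\ref{graphicalexpadagger}) of $\exp(a_\phi^\dag)$ as multiplication by $\Coh(\phi)$ and then invokes the unit law for the monoid $Q_+(A)$ after composing with $e_A^\dag$. Your extra bookkeeping (unfolding the name via $m_{Q_+(A)}$ and cancelling the $\zeta_{F(A)}$/$\theta_{F(A)}$ pair by the zig-zag equations) simply spells out what the paper leaves implicit in the diagram.
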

\begin{proof}
The proof is immediate from the graphical representation (\ref{graphicalexpadagger}) and the unit law for the monoid $Q_+ (A)$.
\end{proof}

\section{The conventional quantum harmonic oscillator as a special case
\label{conventional}}

\subsection{Constructing the base category \label{constructinghilb}}

In this section, we will construct a categorical harmonic oscillator from a category of Hilbert spaces. The structures that we obtain will be  those  used to study the harmonic oscillator in conventional quantum mechanics. However, constructing the category in which we will work will not be straightforward. We will unavoidably need to work with infinite-dimensional separable\footnote{A separable Hilbert space is one of finite or countably-infinite dimension.} Hilbert spaces, and with unbounded linear maps, but the category of separable Hilbert spaces and all unbounded linear maps is not well-defined, as two unbounded maps $f:A \to B$ and $g:B \to C$ may not have a well-defined composite $g \circ f$. We overcome this problem by working instead with inner product spaces which are not necessarily complete; this will allow us to include  a restricted class of unbounded maps in our category. A second problem is that the trace operation for an infinite-dimensional Hilbert space is unbounded, and that these do not fit into our chosen class of admissible unbounded maps. As a consequence of this, our category will not be compact-closed.

Our category is defined as follows.
\begin{defn}
The category \cat{Inner} has objects given by countable-dimensional complex inner-product spaces. Morphisms are given by everywhere-defined linear maps $f:A \to B$, such that $f^\dag : B \to A$ is also everywhere-defined.
\end{defn}
\noindent Note that this category will contain both complete and non-complete inner-product spaces.
\begin{lemma}
The category \cat{Inner}  is a symmetric monoidal \mbox{$\dag$-category} with $\dag$-biproducts.
\end{lemma}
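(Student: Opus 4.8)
The plan is to verify each piece of structure in turn, taking the linear-algebraic analogues of the finite-dimensional case and checking only that everything stays within \cat{Inner}, i.e. that each morphism we construct is everywhere-defined \emph{and} has everywhere-defined adjoint. First I would confirm that \cat{Inner} is a well-defined category: the identity maps are clearly admissible, and if $f:A\to B$ and $g:B\to C$ are admissible then $g\circ f$ is everywhere-defined (both factors are) and $(g\circ f)^\dag = f^\dag\circ g^\dag$ is everywhere-defined as well; there is no composability pathology of the kind afflicting unbounded maps, precisely because we only admit maps whose adjoints are total. The $\dag$ functor itself is then defined on objects as the identity and on morphisms by $f\mapsto f^\dag$; it is contravariant, identity-on-objects, and involutive by construction, so \cat{Inner} is a $\dag$-category.

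Next I would put the symmetric monoidal structure in place. The monoidal product is the algebraic tensor product of inner-product spaces (no completion — this is the point of working with \cat{Inner} rather than \cat{Hilb}), with $I = \mathbb{C}$. For a morphism $f\otimes g$ one checks it is everywhere-defined on the algebraic tensor product and that $(f\otimes g)^\dag = f^\dag\otimes g^\dag$; the structural isomorphisms $\rho$, $\lambda$, $\alpha$, $\swap^\otimes$ are the obvious canonical maps, each of which is an isometric isomorphism, hence unitary, hence admissible with admissible adjoint. Naturality and the coherence (pentagon, triangle, hexagon, $\swap^\otimes$-involutivity) conditions are inherited from the underlying vector-space constructions exactly as in the finite-dimensional case, so \cat{Inner} is a symmetric monoidal $\dag$-category with the extra constraint that the coherence isomorphisms are unitary.

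Then I would handle the $\dag$-biproducts. The biproduct is the algebraic direct sum $A\oplus B$ with the usual injections $i_A, i_B$ and projections $\pi_A,\pi_B$; since $A\oplus B$ is a \emph{finite} direct sum the relations $\pi_n\circ i_n = \id{}$, $\pi_m\circ i_n = 0$ for $m\neq n$, and $\sum_n i_n\circ\pi_n = \id{}$ hold on the nose, and one checks $\pi_A^\dag = i_A$ with respect to the direct-sum inner product, so the canonical injections and projections are $\dag$-related and we genuinely have $\dag$-biproducts rather than mere biproducts. The zero object is the zero space, and the zero morphisms are the genuine zero linear maps, which are trivially admissible. Each of these verifications is routine, but they should be stated: everywhere-definedness of each structural morphism, everywhere-definedness of its adjoint, and the identity it is required to satisfy.

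The main obstacle — and the reason this lemma is not entirely trivial — is not any one commutative diagram but the persistent bookkeeping that \emph{nothing forces us out of the admissible class}. The subtlety is that \cat{Inner} is engineered to allow some unbounded maps, so one cannot simply invoke ``bounded operators compose and take adjoints'' as in \cat{FdHilb}; instead, at every step one must observe that the particular maps built from $\otimes$, $\oplus$, and the structural isomorphisms are either unitary or genuine inclusions/projections/zero maps, for which everywhere-definedness of both the map and its adjoint is manifest. The one place deserving a moment's care is the tensor product of two unbounded admissible maps $f\otimes g$: one should note that on the \emph{algebraic} tensor product (finite sums of elementary tensors) $f\otimes g$ is defined by the evident formula and $(f\otimes g)^\dag = f^\dag\otimes g^\dag$ agrees with it under the tensor inner product — a computation on elementary tensors — so admissibility is preserved. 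Once this is in hand, every diagram required by the definitions of Section~\ref{tools} commutes because it commutes at the level of underlying linear maps, and the lemma follows.
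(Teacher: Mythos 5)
Your proposal is correct and follows the same route as the paper: tensor product of inner-product spaces with unit $\mathbb{C}$ for the symmetric monoidal structure, the adjoint (total by the definition of \cat{Inner}) for the $\dag$-functor, and finite direct sums with the canonical mutually adjoint injections and projections for the $\dag$-biproducts. The paper's proof is only a three-sentence sketch asserting these choices, so your additional bookkeeping — closure of the admissible class under composition, $(f\otimes g)^\dag = f^\dag\otimes g^\dag$ on the algebraic tensor product, unitarity of the structural isomorphisms — simply fills in details the paper leaves implicit.
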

\begin{proof}
The symmetric monoidal structure is given by the tensor product of inner-product spaces, the monoidal unit object being given by the one-dimensional space $\mathbb{C}$. The $\dag$-structure is given by the adjoint, which by construction is well-defined. The $\dag$-biproduct structure is given by the direct sum of inner-product spaces, along with the canonical normalised and orthogonal projections.
\end{proof}
\noindent It may seem that we have abandoned Hilbert spaces, but we have done so only in a very mild way: every inner-product space $A$ has a canonical embedding into its completion $\bar{A}$, which is a Hilbert space. Moreover, all bounded linear operators $A \to B$ can be extended to bounded linear operators $\bar{A} \to B$, and further to bounded linear operators $\bar{A} \to \bar{B}$. In fact, this new definition makes no difference at all in the finite-dimensional case, since every finite-dimensional inner product space is complete, and therefore a Hilbert space; as a result, \cat{FdHilb} is a full subcategory of \cat{Inner}. However, in the infinite-dimensional case it makes a big difference: our unbounded operators can now be everywhere-defined, which ensures that they will always compose well with other morphisms.

Unfortunately, we will not be able to find duals in this category. The trace on an infinite-dimensional inner-product space $A$, considered as a linear map \mbox{$\mathrm{tr}_A:A^* \otimes A \to \mathbb{C}$} which takes the name of an operator to its trace, is not bounded, as the trace of a normalised operator name can be arbitrarily large. This is not a problem; the set of operator names having finite trace (the trace-class operators) do form an inner-product space. But the adjoint to this map has empty domain: $\mathrm{tr} _A ^\dag (1)=\name{\id{A}}$, but $\mathrm{tr}_A (\name{\id{A}})$ is infinite, and so $\mathrm{tr} _A ^\dag (1)$ is not an element of the space. This causes a problem for the construction of duals, for which we require a trace operator $\theta _A: A^* \otimes A \to \mathbb{C}$ and its transpose, the name of the identity operator $\name{\id{A}} \equiv \zeta _A: \mathbb{C} \to A \otimes A^*$, which we have shown cannot exist. To summarise: although the trace is a well-defined linear operator, it must be left out of our category since we lack an infinite scalar to represent the trace of the identity. Duals do not play a big part in our construction, but this does mean that we will be unable to implement the endomorphism exponential construction.

\subsection{Constructing the adjunction}

Although we are primarily interested in the category \cat{Inner} at this point, the construction of the adjunction that we are going to make can be carried out in a wide class of categories.

We work with a symmetric monoidal $\dag$-category \cat{C}, with countably-infinite $\dag$-biproducts, sufficient countably-infinite sums, and symmetric $\dag$-subspaces (which we define below.) Note that, although we require the existence of a countably-infinite extension to the $\dag$-biproduct structure, we do \emph{not} require the existence of countably-infinite diagonals and codiagonals. Rather, we require for each countable set $A_i$ of objects in \cat{C} the existence of a $\dag$-biproduct object $\bigoplus_i A_i$, along with canonical projection morphisms $\pi_j:\bigoplus_i A_i \to A_j$ for each element of the biproduct. 

\begin{defn}
The \emph{n-fold tensor product functor} $T_n : \cat{C} \to \cat{C}$ is defined on all objects $A$ and all morphisms $f$ as
\begin{align*}
T_n(A) &:= \bigotimesn{n} A =: A ^{\otimes n}
\\
T_n(f) &:= \bigotimesn{n} f =: f ^{\otimes n}
\end{align*}
$A ^{\otimes 0}$ is defined to be $I$, the tensor unit, and $f^{\otimes 0}$ is $\id{I}$.
\end{defn}

\begin{defn} Given a tensor product object $A ^{\otimes n}$, we write $(\swap_A ^\otimes) _{i,j} : A ^{\otimes n} \to A ^{\otimes n}$ for the symmetry isomorphism which exchanges the $i$th and $j$th factors of the tensor product.
\end{defn}

\begin{defn} We define the \emph{$n$-fold symmetric $\dag$-subspace} $A^{\otimes_s n}$ of an object $A$ as the following $\dag$-coequaliser,
\begin{equation*}
\begin{diagram}[midshaft,width=45pt,labelstyle=\scriptstyle]
A^{\otimes n} & \pile{\rTo^{\id{A ^{\otimes n}}}
\\
\rTo^{(\swap ^\otimes _A)_{1,2}}
\\
\vdots
\\
\rTo^{(\swap ^\otimes _A) _{(n-1),n}}} & A^{\otimes n} & \rTo^{s_A ^n} & A^{\otimes_s n} 
\end{diagram}
\end{equation*}
We are taking the coequaliser of all permutations of the tensor factors of $A ^{\otimes n}$, or at least of a generating set for the permutation group. $A^{\otimes_s 0}$ is defined to be $I$, the tensor unit.
\end{defn}

\noindent The $\dag$-coequaliser property implies that for all $A$ and $n$, we have $s_A ^n \circ s_A ^n {} ^\dag = \id{A ^{\otimes_s n}}$, as discussed in section \ref{daggerstructure}.

\begin{lemma} \label{naturalsymmproj} The projections onto the symmetric subspace $s _A ^n {}^\dag \circ s ^n _A : A^{\otimes n} \to A^{\otimes n}$ are natural.
\end{lemma}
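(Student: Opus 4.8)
The plan is to fix a morphism $f : A \to B$ in \cat{C}, abbreviate $P_A := s_A^n{}^\dag \circ s_A^n$ and $P_B := s_B^n{}^\dag \circ s_B^n$ for the two projections, and show that the naturality square $f^{\otimes n} \circ P_A = P_B \circ f^{\otimes n}$ commutes. Rather than verifying this equation head-on, I would instead prove the two one-sided identities
\begin{equation*}
P_B \circ f^{\otimes n} \circ P_A = f^{\otimes n} \circ P_A \qquad\text{and}\qquad P_B \circ f^{\otimes n} \circ P_A = P_B \circ f^{\otimes n},
\end{equation*}
which together yield the claim. Since $P_A$ and $P_B$ are self-adjoint (in fact self-adjoint idempotents) and $(f^{\otimes n})^\dag = (f^\dag)^{\otimes n}$, the second identity is precisely the adjoint of the first applied to $f^\dag : B \to A$ with the roles of $A$ and $B$ exchanged; hence only the first identity requires a genuine argument.

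To establish $P_B \circ f^{\otimes n} \circ P_A = f^{\otimes n} \circ P_A$ I would record two elementary facts. First, naturality of the symmetry isomorphisms gives $(\swap^\otimes_B)_{i,j} \circ f^{\otimes n} = f^{\otimes n} \circ (\swap^\otimes_A)_{i,j}$, so $f^{\otimes n}$ intertwines the transposition swaps. Second, because $s_A^n$ is a $\dag$-coequaliser of the identity together with the generating transposition swaps, $s_A^n \circ (\swap^\otimes_A)_{i,j} = s_A^n$; taking adjoints and using that an elementary transposition swap is self-adjoint (the symmetry isomorphisms are unitary and square to the identity) gives $(\swap^\otimes_A)_{i,j} \circ s_A^n{}^\dag = s_A^n{}^\dag$, so $s_A^n{}^\dag$ — the $\dag$-equaliser obtained by adjointing $s_A^n$ — equalises the same family. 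Combining the two facts, $f^{\otimes n} \circ s_A^n{}^\dag : A^{\otimes_s n} \to B^{\otimes n}$ equalises the transposition swaps on $B^{\otimes n}$, so the universal property of the equaliser $s_B^n{}^\dag$ produces a unique $g : A^{\otimes_s n} \to B^{\otimes_s n}$ with $f^{\otimes n} \circ s_A^n{}^\dag = s_B^n{}^\dag \circ g$.

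The first identity then falls out of a short computation that uses only this factorisation and the coisometry relation $s_B^n \circ s_B^n{}^\dag = \id{B^{\otimes_s n}}$ coming from the $\dag$-coequaliser property:
\begin{equation*}
P_B \circ f^{\otimes n} \circ P_A = s_B^n{}^\dag \circ s_B^n \circ f^{\otimes n} \circ s_A^n{}^\dag \circ s_A^n = s_B^n{}^\dag \circ s_B^n \circ s_B^n{}^\dag \circ g \circ s_A^n = s_B^n{}^\dag \circ g \circ s_A^n = f^{\otimes n} \circ s_A^n{}^\dag \circ s_A^n = f^{\otimes n} \circ P_A .
\end{equation*}
Applying this identity with $f^\dag$ in place of $f$ and taking adjoints gives the second one-sided identity, and the two combine to the naturality square; the degenerate case $n = 0$ is immediate since $s_A^0 = \id{I}$. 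I expect the one genuine subtlety to be the $\dag$-(co)equaliser bookkeeping — remembering that $s_A^n$ is a split epimorphism with section $s_A^n{}^\dag$, that $s_A^n{}^\dag$ equalises the swaps (which is exactly what lets $f^{\otimes n}$ be factored through the equaliser on the $B$-side), and resisting the temptation to attack the naturality square directly, which does not pass through the universal properties nearly as cleanly. Note in particular that the argument deliberately avoids any symmetriser formula $\tfrac{1}{n!}\sum_\sigma \sigma$, since the scalar $1/n!$ need not be available in \cat{C}.
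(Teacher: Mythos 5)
Your proposal is correct and follows essentially the same route as the paper: the paper's own (very terse) proof rests on exactly your key observation that $f^{\otimes n} \circ s_A^n{}^\dag \circ s_A^n$ is a cone for the equaliser diagram defining $s_B^n{}^\dag$, hence factors through it, with the other half of the naturality square obtained by the same argument applied to $f^\dag$ and daggered. Your write-up merely makes explicit the $\dag$-bookkeeping (self-adjointness of the transpositions, the coisometry relation, and the adjoint symmetry) that the paper leaves implicit.
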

\begin{proof} This follows from the fact that, for all arrows $f:A \to B$, $f ^{\otimes n} \circ \ts ^n_A {}^\dag \circ \ts ^n _A$ is a cone for the equaliser diagram defining $\ts _B ^n {}^\dag$.
\end{proof}

\begin{defn} The \emph{n-fold symmetric tensor product functor} $S_n:\cat{C} \to \cat{C}$ is defined in the following way, for all objects $A$ and all morphisms $f:A \to B$ in \cat{C}:
\begin{align*}
S_n(A) & := A ^{\otimes _s n}
\\
S_n(f) & := s ^n _B \circ f ^{\otimes n} \circ s ^n _A {}^\dag
\end{align*}
\end{defn}
\noindent Functoriality of composition follows from lemma \ref{naturalsymmproj}.

\begin{defn}
The \emph{symmetric subspace natural projection} $s^n: T_n \dot{\to} S_n$ is defined at each stage $A$ as $(s ^n) _A := s ^n _A$. It can be shown to be natural using lemma \ref{naturalsymmproj}.
\end{defn}

\begin{defn} The \emph{commutative ladder functor} $L _\cat{C} : \cat{C} \to \cat{C}$ on a symmetric monoidal $\dag$-category \cat{C} with $\dag$-biproducts and symmetric $\dag$-subspaces is defined as
\begin{align}
L := \bigoplus_{n=0} ^{\infty} S _n
\end{align}
where the $\dag$-biproduct is understood as being applied componentwise for each stage, so for all objects $A$ in \cat{C}, $G(A)= \bigoplus_{n=0} ^\infty S_n(A) = S_0 (A) \oplus S_1(A) \oplus \cdots$..
\end{defn}
\begin{defn}
The ladder functor $L$ has natural projections onto its components, which are derived from the $\dag$-biproduct structure. The projectors form the \emph{ladder projection natural transformations}, for all natural numbers $n$:
\begin{equation}
p ^n : L \dot{\to} S_n.
\end{equation}
Note that this is not an exponential notation; $n$ is just an index.
\end{defn}
\begin{defn} 
Any cocommutative comonoid $(A,g,u)_\times$ in $\cat{C}_\times$ has a series of \emph{$n$-fold comultiplication maps} $g^{n-1}:A\to A^{\otimes n}$ for all $n\geq 0$. The exponential notation is only suggestive; for the comultiplication $g:A\to A\otimes A$, of course, we cannot construct $g\circ g$. However, we \emph{can} construct $(g\otimes \id{A})\circ g:A\to A\otimes A\otimes A$. The associativity axiom means that this morphism is the same as $(\id{A}\otimes g)\circ g$, so there is no need to make a choice here. For $n\geq2$, we define
\begin{equation}
\quad g^{n-1}:= \left( g \otimes \id{A}^{(n-2)} \right) \circ \cdots \circ (g\otimes\id{A})\circ g\quad .
\end{equation}
For the special case of $n=0$, we define $g^{-1}:=u:A\to I$; and for $n=1$, we define $g^0:=\id{A}$. Also notice that $g^1\equiv g$, which gives the notation some intuitive consistency. If the notion of an $n$-fold comultiplication seems mysterious, that is only because we are not used to working with comonoids: it is no different to using the associative multiplication of a monoid to perform $n$-fold multiplication.
\end{defn}

\begin{defn}
\label{ladderstyle}
An adjunction $\langle R, P, \kappa, \lambda \rangle$, where $R: \cat{C} _\times \to \cat{C}$ is the forgetful functor, is \emph{ladder-style} if the right adjoint $P:\cat{C} \to \cat{C} _\times$, the unit and the counit are of the following form, for all objects $A$ and $B$ and all morphisms $f:A \to B$ in \cat{C}:
\begin{align*}
P(A) &= \big( L _\cat{C} (A), b_A, c_A \big) _\times
\\
P(f) &= L _\cat{C} (f)
\\
\\
b_A &= \sum_{m,n=0} ^\infty B_{m,n} \cdot \big( p ^m _A {}^\dag \otimes p_A ^n {}^\dag \big) \circ \big( s_A ^m \otimes s_A ^n \big) \circ \big( s^{m+n} _A \big) ^\dag \circ p^{m+n} _A
\\
c_A &= C \cdot p^0 _A
\\
\\
\kappa _{(A, g, u) _\times} &= \sum _{n=0} ^\infty K_n \cdot p^n _A {}^\dag \circ s^n _A \circ g^{n-1}
\\
\lambda_A &= L \cdot p^1 _A
\end{align*}
The objects $B_{m,n}$, $C$, $K_n$ and $L$ are all scalars in $\Hom _\cat{C} (I, I)$, and are functions of the natural numbers $m$ and $n$ where appropriate.
\end{defn}

\noindent These scalars cannot be freely chosen, as the counit, coassociativity and cocommutative equations, the adjunction equations, and the requirement that $\kappa _{ (A, g, u) _\times}$ is a morphism of comonoids will all give constraints on their values. There is also the need to ensure that all morphisms actually exist in the category $\cat{C}$, which will not necessarily be the case for all choices of scalars, given that all countable sums will not necessarily be defined. We will not enter into a complete analysis of the possible values that the scalars can take, but it is worth mentioning that setting all scalars equal to $\id{I}$ does satisfy all of the constraints, and so will provide an adjunction between \cat{C} and $\cat{C} _\times$ if all the required sums exist. This gives rise to the `canonical' free cocommutative comonoid construction.

\begin{lemma} If a ladder-style adjunction $\langle R, P, \kappa, \lambda \rangle$ is a harmonic oscillator adjunction, then the extra constraints imposed on the scalar coefficients, in addition to those arising from the condition that the adjunction be well-defined, are as follows:
\begin{enumerate}
\item Both $C$ and $L$ are unitary.
\item $\displaystyle B ^{\phantom{\dag}}_{n,m} \circ B_{n,m} ^\dag = \frac {(n+m)!} {n!\, m!}$.
\end{enumerate}
\end{lemma}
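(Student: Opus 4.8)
The plan is to run the three clauses of Definition~\ref{defhoa} against the ladder-style data. Write $\langle R,P,\kappa,\lambda\rangle$ for the ladder-style adjunction, so that $P=Q$, $\kappa=\eta$, $\lambda=\epsilon$, $F=R\circ P=L_{\cat{C}}$, $d_A=b_A$, $e_A=c_A=C\cdot p^0_A$ and $\epsilon_A=\lambda_A=L\cdot p^1_A$; throughout we use that scalars commute (so a scalar $s$ with $s\circ s^\dag=\id{I}$ is unitary) and that scalars pull through composition and through the functor $F$. Clause~3 imposes nothing: from $L_{\cat{C}}(f)=\bigoplus_n s^n_B\circ f^{\otimes n}\circ(s^n_A)^\dag$ and distributivity of $\dag$ over $\dag$-biproducts, $(L_{\cat{C}}(f))^\dag=\bigoplus_n s^n_A\circ (f^\dag)^{\otimes n}\circ (s^n_B)^\dag=L_{\cat{C}}(f^\dag)$, i.e.\ $F\circ\dag=\dag\circ F$. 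Clause~2 supplies the $L$-part of Condition~1: since $(p^1_A)^\dag$ is the $\dag$-biproduct injection of the $S_1(A)=A$ summand, $p^1_A\circ(p^1_A)^\dag=\id{A}$, hence $\epsilon_A\circ\epsilon_A^\dag=(L\circ L^\dag)\cdot\id{A}$, which is $\id{A}$ exactly when $L\circ L^\dag=\id{I}$, so $L$ is unitary. The $C$-part follows from equation~(\ref{ee}) of Lemma~\ref{orthognormlemma}, valid since we have a harmonic oscillator adjunction: $e_A\circ e_A^\dag=(C\circ C^\dag)\cdot\id{I}=\id{I}$ forces $C\circ C^\dag=\id{I}$, so $C$ is unitary. (Equivalently, $C$ unitary is Clause~1 at the zero object, where $k_0$ is the counit $c_0=C\cdot p^0_0$ and $L_{\cat{C}}(0)\simeq S_0(0)=I$.)

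The substance is Condition~2, which is precisely Clause~1 for general $A,B$, i.e.\ unitarity of $k_{A,B}$. I would use Lemma~\ref{gettingk} to write $Rk_{A,B}^{-1}=F(r_{A,B})\circ R\eta_{Q(A)\times Q(B)}$ with $r_{A,B}=i_A\epsilon_A\otimes e_B+e_A\otimes i_B\epsilon_B$, and then decompose both sides along the $\dag$-biproducts. As in the classical identity $\mathrm{Sym}^n(V\oplus W)=\bigoplus_{j+k=n}\mathrm{Sym}^jV\otimes\mathrm{Sym}^kW$, here $F(A\oplus B)=\bigoplus_n\bigoplus_{j+k=n}(A^{\otimes_s j}\otimes B^{\otimes_s k})$ and $F(A)\otimes F(B)=\bigoplus_{j,k}(A^{\otimes_s j}\otimes B^{\otimes_s k})$, and $Rk_{A,B}^{-1}$ is block-diagonal for these decompositions, as follows from naturality and the product-preservation diagrams defining $k_{A,B}$. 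On the $(n,m)$-block, $\eta=\kappa$ contributes the coefficient $K_{n+m}$ through its $(n{+}m)$-fold comultiplication, whose iterate collapses the comultiplication scalars of $b$ to $B_{n,m}$, while $r_{A,B}$ contributes only the now-known unitary scalars $L$ and $C$; thus the block is multiplication by $K_{n+m}\circ B_{n,m}$ up to isometric symmetrizer maps. The decisive point is combinatorial: re-sorting the $\binom{n+m}{n}$ summands of type $A^{\otimes n}\otimes B^{\otimes m}$ inside $(A\oplus B)^{\otimes_s(n+m)}$ onto the single summand $A^{\otimes_s n}\otimes B^{\otimes_s m}$ carries a multiplicity factor of norm-square $(n+m)!/(n!\,m!)$. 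Requiring the block to be unitary, while the well-definedness constraints already pin the $K_n$ to unitary scalars (and $L\circ K_1=\id{I}$ by a triangle identity), leaves exactly $B_{n,m}\circ B_{n,m}^\dag=(n+m)!/(n!\,m!)$. For the converse --- that Conditions~1 and~2, together with well-definedness, imply Clauses~1--3 --- every step above is reversible, so no further work is needed.

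The step I expect to be the main obstacle is the bookkeeping in the Condition~2 computation: decomposing the $n$-fold symmetric $\dag$-subspace of a biproduct, tracking the binomial multiplicities and the way the symmetrizers $s^n_\bullet$ and $(s^n_\bullet)^\dag$ compose inside $F(r_{A,B})\circ R\eta_{Q(A)\times Q(B)}$, and cleanly separating the well-definedness constraints (counit, coassociativity and cocommutativity of $b$; the comonoid-morphism condition on $\kappa$; the two triangle identities) from the genuinely new ones. I would carry this out first in the graphical calculus of Section~\ref{formulation}, where the block structure and the multiplicity count are much more transparent, and then transcribe it to symbols.
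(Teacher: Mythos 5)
The paper states this lemma without proof, so there is nothing to compare against beyond correctness. Your handling of Condition~1 is sound: clause~3 of Definition~\ref{defhoa} is automatic for a ladder-style adjunction, clause~2 gives $\epsilon_A \circ \epsilon_A^\dag = (L \circ L^\dag)\cdot \id{A}$ and hence $L$ unitary (scalars commute), and $C$ unitary follows either from equation~(\ref{ee}) of Lemma~\ref{orthognormlemma} or, as you note, from unitarity of $k_0$.

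The gap is in Condition~2, and it is concrete. First, your claim that ``the well-definedness constraints already pin the $K_n$ to unitary scalars'' is false: the paper's subsequent lemma on the $K_n$ (and a direct check) forces $K_n \circ K_n^\dag = 1/n!$, so the $K_n$ are not unitary for $n \geq 2$; well-definedness only gives relations such as $L \circ K_1 = \id{I}$. Second, the $(n,m)$-block of $F(r_{A,B}) \circ R\eta_{Q(A)\times Q(B)}$ does not carry the scalar $B_{n,m}$: in $R\kappa_{Q(A)\times Q(B)}$ the $(n+m)$-fold comultiplication of the product comonoid is an iterate of $b_A \otimes b_B$, and after $r_{A,B}^{\otimes(n+m)}$ kills everything except one-particle-per-slot terms, what survives is $K_{n+m}$, powers of $L$ and $C$, a sum over distributions, and products of coefficients of the form $B_{1,j}$ and $B_{0,j}$ --- not $B_{n,m}$. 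So unitarity of $k_{A,B}^{-1}$, approached through Lemma~\ref{gettingk}, constrains the $K$'s and the $B_{1,j}$'s, and turning that into the stated constraint on $B_{n,m}$ requires precisely the interplay with the adjunction relations that you defer as ``bookkeeping''; as sketched, the block would not come out as $K_{n+m}\circ B_{n,m}$. A cleaner route avoids $\eta$ altogether and puts $B_{n,m}$ in view at once: since $k_{A,B} = \langle Q(\pi_A), Q(\pi_B)\rangle$ and pairing in $\cat{C}_\times$ is comultiplication followed by tensoring the components, one has $Rk_{A,B} = (F(\pi_A)\otimes F(\pi_B)) \circ b_{A\oplus B}$; substituting Definition~\ref{ladderstyle}, the scalar $B_{n,m}$ multiplies the $(n,m)$-block outright, and computing $Rk_{A,B} \circ Rk_{A,B}^\dag$ blockwise reduces to counting the permutations of $n+m$ tensor factors of $(A\oplus B)^{\otimes(n+m)}$ that respect the $A/B$ splitting --- there are $n!\,m!$ of them out of $(n+m)!$ --- so each diagonal block is $B_{n,m}\circ B_{n,m}^\dag \cdot \frac{n!\,m!}{(n+m)!}\cdot\id{}$, and unitarity is exactly Condition~2, with your binomial multiplicity appearing exactly where you predicted. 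Finally, ``every step above is reversible'' is too quick for the converse direction: $Rk_{A,B}^\dag \circ Rk_{A,B} = \id{F(A\oplus B)}$ needs the blocks $A^{\otimes_s n}\otimes B^{\otimes_s m}$ with $n+m=N$ to exhaust $(A\oplus B)^{\otimes_s N}$, a decomposition you assert by analogy with $\mathrm{Sym}^N(V\oplus W)$ but do not establish in the abstract setting of symmetric $\dag$-subspaces and countable $\dag$-biproducts.
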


\noindent We note that $(n+m)! / n! \,m!$ is always a natural number, and so exists as a scalar in our category as a repeated sum of the scalar identity $\id{I}$.

\begin{lemma} If a ladder-style adjunction $\langle R, P, \kappa, \lambda \rangle$ is a harmonic oscillator adjunction, then the following equations must hold for the coefficients $K_n$\footnote{There is a fourth equation arising from the adjunction equation $P \lambda \circ \kappa P = \id{P}$, but this is not so easy to express in components.}:
\begin{enumerate}
\item $K_0 = C ^\dag$.
\item $K_1 = L ^\dag$.
\item $(n+1) K_{n+1} K_{n+1} ^\dag = K_n ^2 {}^\dag K_n ^2$.
\end{enumerate}
\end{lemma}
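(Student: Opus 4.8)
The plan is to read off the three scalar identities one at a time by composing the defining equations for the unit $\kappa$ and the counit $\lambda$ of the adjunction with the ladder projections $p^n_A:L(A)\to S_n(A)$, which isolate individual coefficients. For the first identity I would use the counit half of the condition that $\kappa_{(A,g,u)_\times}$ is a morphism of comonoids, namely $c_A\circ\kappa_{(A,g,u)_\times}=u$. Substituting $c_A=C\cdot p^0_A$ and the ladder-style formula for $\kappa_{(A,g,u)_\times}$, and using that $p^0_A\circ p^n_A{}^\dag$ equals $\id{S_0(A)}$ when $n=0$ and is zero otherwise, that $S_0(A)=I$ with $s^0_A=\id{I}$, and that $g^{-1}=u$, only the $n=0$ summand survives and one obtains $(C\circ K_0)\cdot u=u$, hence $C\circ K_0=\id{I}$; unitarity of $C$ from the preceding lemma then gives $K_0=C^\dag$. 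For the second identity I would use the triangle identity $\lambda R\circ R\kappa=\id{R}$, which at a comonoid $(A,g,u)_\times$ reads $\lambda_A\circ\kappa_{(A,g,u)_\times}=\id{A}$; substituting $\lambda_A=L\cdot p^1_A$ and using that $p^1_A\circ p^n_A{}^\dag$ is $\id{S_1(A)}$ when $n=1$ and zero otherwise, with $S_1(A)=A$, $s^1_A=\id{A}$ and $g^0=\id{A}$, only the $n=1$ summand survives, so $(L\circ K_1)\cdot\id{A}=\id{A}$, giving $L\circ K_1=\id{I}$ and hence $K_1=L^\dag$ by unitarity of $L$.

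For the recursion I would use the comultiplication half of the comonoid-morphism condition on $\kappa$, namely $b_A\circ\kappa_{(A,g,u)_\times}=(\kappa_{(A,g,u)_\times}\otimes\kappa_{(A,g,u)_\times})\circ g$. On the right, coassociativity (and the counit law) of $(A,g,u)_\times$ gives $(g^{m-1}\otimes g^{n-1})\circ g=g^{m+n-1}$; on the left, cocommutativity makes $g^{m+n-1}$ invariant under the symmetrising swaps, so the projector $(s^{m+n}_A)^\dag\circ s^{m+n}_A$ acts as the identity on it. Composing both sides with $p^m_A\otimes p^n_A$ and cancelling the common map $(s^m_A\otimes s^n_A)\circ g^{m+n-1}$ then yields the coefficient identity $B_{m,n}\circ K_{m+n}=K_m\circ K_n$ for all $m,n\ge 0$. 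Specialising the second index to $1$ gives $B_{n,1}\circ K_{n+1}=K_n\circ K_1$; applying $(-)\circ(-)^\dag$ to both sides, using commutativity of scalars, the preceding lemma's identity $B_{n,1}\circ B_{n,1}^\dag=(n+1)!/(n!\,1!)=n+1$, and $K_1\circ K_1^\dag=\id{I}$ (which follows from $K_1=L^\dag$ and unitarity of $L$), one obtains $(n+1)\,K_{n+1}K_{n+1}^\dag=K_nK_n^\dag=K_n^\dag K_n$, the third identity.

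The step I expect to be delicate is the cancellation in the second paragraph: one has to justify that $(s^m_A\otimes s^n_A)\circ(s^{m+n}_A)^\dag$ is a split monic, that the subspaces $S_m(A)\otimes S_n(A)$ sit as distinct summands of the biproduct $L(A)\otimes L(A)$, and that there exists a test comonoid for which $(s^m_A\otimes s^n_A)\circ g^{m+n-1}\neq 0$ — the terminal comonoid $I_\times$ serves, since its $n$-fold comultiplications are the structural isomorphisms $I\to I^{\otimes n}$. The fourth relation mentioned in the footnote, arising from $P\lambda\circ\kappa P=\id{P}$, is not required: the three stated equations already follow from the counit law, one triangle identity, and the comultiplication law combined with the earlier constraint on the $B_{m,n}$. (Iterating the coefficient identity together with that constraint even pins down the closed form $K_nK_n^\dag=1/n!$ for all $n$.)
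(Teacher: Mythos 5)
The paper states this lemma without giving a proof, so there is nothing to compare line-by-line; judged on its own terms, your derivation is sound and is surely the intended one: compose the counit-preservation condition $c_A\circ\kappa_{(A,g,u)_\times}=u$, the triangle identity $\lambda R\circ R\kappa=\id{R}$, and the comultiplication-preservation condition $b_A\circ\kappa_{(A,g,u)_\times}=(\kappa\otimes\kappa)\circ g$ with the ladder projections, use $p^m_A\circ p^n_A{}^\dag=\delta_{mn}$ to isolate single coefficients, and feed in the unitarity of $C$, $L$ and the constraint $B_{m,n}^{\phantom{\dag}}B_{m,n}^\dag=(m+n)!/(m!\,n!)$ from the preceding lemma. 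Your intermediate identity $B_{m,n}\circ K_{m+n}=K_m\circ K_n$ is correct, and the step you flag as delicate is handled correctly in spirit: since $g^{m+n-1}$ equalises all the swaps (cocommutativity plus coassociativity), it factors through the $\dag$-equaliser $s^{m+n}_A{}^\dag$, whence $s^{m+n}_A{}^\dag\circ s^{m+n}_A\circ g^{m+n-1}=g^{m+n-1}$, and testing at the terminal comonoid $I_\times$ makes the surviving morphism invertible so the scalars can be cancelled. Two small remarks. First, the same test at $I_\times$ (equivalently $A=I$) is also needed in your first two identities, since $(C\circ K_0)\cdot u=u$ and $(L\circ K_1)\cdot\id{A}=\id{A}$ do not by themselves force the scalars to be $\id{I}$ at an arbitrary comonoid; you only invoke the test comonoid for the third identity, so state it for all three. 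Second, what you obtain is $(n+1)\,K_{n+1}^{\phantom{\dag}}K_{n+1}^\dag=K_n^\dag K_n^{\phantom{\dag}}$, which differs from the printed third equation if one reads $K_n^2$ there as $K_n\circ K_n$; but that literal reading is inconsistent with the paper's own conclusion $K_n^{\phantom{\dag}}K_n^\dag=1/n!$ (already at $n=2$ it would give $1/2=1/4$), so the printed form is evidently a typo and your version is the one compatible with the closed form -- worth saying explicitly rather than silently identifying the two.
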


\noindent It is straightforward to see from the equations given that, in a category with inverses to all scalars, we must have
\begin{equation}
K_{n} ^{\phantom{\dag}} K_{n} ^\dag = \frac{1}{n!}.
\end{equation}
It is no surprise that these are the coefficients of the power series of the exponential function, given the role that the natural transformation $\kappa$ plays in defining the morphism exponentials.

We now investigate this structure in the category \cat{Inner}.

\begin{defn} In the $\dag$-category \cat{Inner}, we define the countably-infinite $\dag$-biproduct $\bigoplus_n A_n$ as the inner-product space consisting of sequences $a = (a_1,a_2,\ldots)$ of vectors, with $a_n \in A_n$, such that $\sum _{n=1} ^\infty c^n\, (a_n,a_n)_{A_n}$ is finite for all complex numbers $c$, and $(-,-)_{A_n}$ is the inner product of $A_n$. The inner product on $\bigoplus_n A_n$ is given by $(a,b) = \sum _{n=1} ^\infty (a_n,b_n) _{A_n}$, which will always be finite (just choose $c=1$.)
\end{defn}

We choose the rather strong requirement that  $\sum _{n=1} ^\infty c^n\, (a_n,a_n)_{A_n}$ be finite for all complex numbers $c$ to allow us to define a large class of linear maps on $\bigoplus_n A_n$. If we used the `standard' definition, where we only require the sum to converge for $c=1$, then the only everywhere-defined linear maps with source object $\bigoplus _n A_n$ would be the bounded linear maps. This would be too restrictive for our purposes; we will see that unbounded linear maps form a crucial part of the structure.

\begin{defn}
The harmonic oscillator adjunction $\llangle Q, \eta, \epsilon \rrangle$ on \cat{Inner} is a ladder-style adjunction, defined by coefficients
\begin{align*}
B _{n,m} &= \sqrt{ \frac {(n+m)!} {n!\,m!} },
&
K_n &= \frac{1} {\sqrt{n!}},
\\
C &= 1,
&
L &= 1.
\end{align*}
\end{defn}
\noindent There exist other harmonic oscillator adjunctions, but we choose this one as it is most straightforward.

We conjecture that this harmonic oscillator adjunction is well-defined. The difficulty lies in establishing that the natural transformation $\eta : \id{\cat{Inner} _\times} \dot{\to} Q \circ R$ is defined at every stage; more work is needed here.

Assuming the model is valid, it is simple to demonstrate that we recover the conventional raising and lowering operators from $\llangle Q, \eta, \epsilon \rrangle$:
\begin{align*}
a ^\dag _\phi &= d_A ^\dag \circ \big( \epsilon_A ^\dag \otimes \id{F (A)} \big) \circ \big(\phi \otimes \id{F (A)} \big)
\\
&= \sum_{m,n=0} ^\infty \sqrt \frac {(m+n)!} {m! \, n!} \cdot \big( p ^m _A {}^\dag \otimes p_A ^n {}^\dag \big) \circ \big( s_A ^m \otimes s_A ^n \big) \circ \big( s^{m+n} _A \big) ^\dag \circ p^{m+n}
\\
&= \sum _{m,n=0} ^\infty \sqrt \frac {(m+n)!} {m!\,n!} \cdot p_A ^{m+n} {}^\dag \circ s_A ^{m+n} \circ \big( s_A ^m {}^\dag \otimes s_A ^n {}^\dag \big) \circ \big( p_A ^m \otimes p_A ^n \big) \circ \big( \big( p_A ^1 {}^\dag \circ \phi \big)\otimes \id{F (A)} \big)
\\
&= \sum _{n=0} ^\infty \sqrt {n+1} \cdot p_A ^{n+1} {}^\dag \circ s_A ^{n+1} \circ \big( \id{A} \otimes s_A ^n {}^\dag \big) \circ \big( \phi \otimes p_A ^n \big)
\end{align*}
We also obtain the conventional coherent states:
\begin{align*}
\Coh(\phi) &= F_\H (\phi) \circ R_\H \eta ^{} _{I _\times}
\\
&= \left[ \sum _{n=0} ^\infty p_A ^n {}^\dag \circ s_A ^n \circ \phi ^{\otimes n} \circ s_I ^n {}^\dag \circ p_I ^n \right] \circ \left[ \sum_{m=0} ^\infty \frac 1 {\sqrt{m!}} \cdot p^m _I {}^\dag \circ s^m _I \circ \big(\lambda_I ^\dag \big)^{m-1} \right]
\\
&= \sum_{n=0} ^\infty \frac 1 {\sqrt{m!}} \cdot p_A ^n {}^\dag \circ s_A ^n \circ \phi ^{\otimes n} \circ \big( \lambda _I ^\dag \big) ^{n-1}
\\
&= \left( \id{I}, \phi, \frac 1 {\sqrt{2!}} \, \phi \otimes_s \phi, \frac{1} {\sqrt{3!}} \,\phi \otimes_s \phi \otimes_s \phi, \ldots, \frac{1} {\sqrt{n!}} \phi ^{\otimes _s n}, \ldots \right)
\end{align*}

We now turn to the morphism exponentials. Given a commutative monoid $(A,g,u) _+$ and an element $\phi:I \to A$, we can construct the exponential of $\phi$:
\begin{align*}
\exp_{(A,g,u) _+} (\phi) &= R_+ \eta ^+ _{(A,g,u) _+} \circ F(\phi) \circ R _\times \eta ^{\times} _{I _\times}
\\
&= \big( R \eta _{(A,g ^\dag, u ^\dag) _\times} \big) ^\dag \circ F(\phi) \circ R \eta _{I _\times} ^{}
\\
&= \sum _{m,n,p=0} ^\infty \frac{1} {\sqrt{m! \, p!}} \cdot g ^{m-1} \circ s_{A} ^m {}^\dag \circ p_{A} ^m \circ p ^n _{A} {}^\dag \circ s_{A} ^n \circ \phi ^{\otimes n} \circ s_I ^n {}^\dag
\\
& \hspace{250pt} {} \circ p_I ^n \circ p_I ^p {}^\dag \circ s_I ^p \circ \big( \lambda_I ^\dag \big) ^{p-1}
\\
&= \sum_{m=0} ^\infty \frac{1} {m!} \cdot g ^{m-1} \circ s_{A} ^m {}^\dag \circ s_{A} ^m \circ \phi ^{\otimes m} \circ s_I ^m {}^\dag \circ s_I ^m \circ \big( \lambda _I ^\dag \big) ^{p-1}
\\
&= \sum _{m=0} ^\infty \frac{1} {m!} \cdot g ^{m-1} \circ \phi ^{\otimes m}. \qquad \textrm{(suppressing $\lambda _I ^\dag$ isomorphisms)}
\end{align*}
This is clearly a direct generalisation of the conventional notion of exponential. We do not have duals in the category, so we cannot demonstrate endomorphism exponentials; however, if one allows the duality morphisms to `formally' exist in the category, it is clear the usual operator exponential construction is obtained.

\section{Discussion}
\label{discussion}

In this paper, we presented a categorical description of the quantum harmonic oscillator in any symmetric monoidal $\dag$-category with finite $\dag$-biproducts, giving categorical definitions of the zero-particle state, single-particle state and raising and lowering morphisms, and demonstrating the canonical commutators. We described how coherent states could be defined as the points of certain comonoids in $\cat{C} _\times$, and proved that they can be copied and deleted, and that they are eigenstates of the lowering morphisms. A formalism was developed for constructing exponentials of elements of commutative monoids, and we demonstrated that these exponentials are additive in a familiar way, and that the exponential of the appropriate zero element gives the unit for the monoid.

We then added duals to our category and demonstrated that endomorphism exponentials, generalising operator exponentials in conventional Hilbert space theory, could be defined in general, and that they interacted well with the existing categorical structures, coherent states being defined by the endomorphism exponentials of the raising morphisms.

Finally, we explored these structures in the context of a suitably-defined category of infinite-dimensional inner-product spaces, demonstrating that the conventional mathematical structures are produced by the categorical construction in this case. We conjectured that the construction was well-defined, and described the remaining open problem.

However, there are many issues which are still unclear. Philosophically, perhaps the biggest problem with the existing framework for categorical quantum mechanics is the lack of any nontrivial categorical description of dynamics. For this reason, it is questionable whether the system under study in this paper deserves to be called the harmonic oscillator at all: without a description of dynamics, all that has really been defined is the state space, but many different systems have isomorphic state spaces.

Conventionally, a quantum system is described by a Hilbert space $A$ and a self-adjoint operator $H:A \to A$. This operator describes the energy of different states of the system, and generates time evolution via Schr\"odinger's first-order differential equation. Work has been done on differential structure in categories, as we discuss below, but an elegant categorical description of the other ingredients of the Schr\"odinger equation, such as Planck's constant $h$ and the imaginary unit $i$, is far from apparent. The author is hopeful that the endomorphism exponential construction presented here might somehow prove relevant to this problem, given Stone's theorem in the classical case, which describes how the operator exponential induces an isomorphism between self-adjoint operators and certain unitary one-parameter groups.

We also note that the operator exponential construction, by way of the unit natural transformation $\eta$, calculates infinite sums, which, given that each term of the sum can be given a graphical representation, are strongly reminiscent of the infinite sums of Feynman diagrams which are used to defined an interacting field theory perturbatively. It would be interesting to explore whether a correspondence can be made here, and the formalism described in this paper extended to give a categorical framework for perturbative quantum field theory.

However, lurking behind the theory of the endomorphism exponential is the simple fact that interesting models of it seem impossible to come by. We recall from section \ref{conventional} that if a category of Hilbert spaces contains infinite-dimensional objects, then it cannot be made compact closed, since the trace of the identity on an infinite-dimensional space is infinite. There is no hope of adding an infinite scalar to the category: it is demonstrated in \cite{completionsemirings} that a semiring can only be given countably-infinite sums iff it admits a partial order, compatible with the semiring multiplication and addition, but the complex numbers admit no such ordering.

It seems that there is a fundamental incompatibility between duals, the existence of a right adjoint to the forgetful functor $R : \cat{C} _\times \to \cat{C}$, and non-trivial (by which we mean non-partially-orderable) scalars. The reason is this: if there exists a right adjoint $Q$ to $R$, then it seems that objects in the image of $Q$ must be infinite-dimensional. Duals allow us to take the trace of the identities on these infinite objects, but that will involve an infinite sum of scalars, which cannot be defined if the scalars are non-trivial. Dropping any one of these three properties, models are easy to come by: the category \cat{FdHilb} lacks a right adjoint to $R$, the category \cat{Inner} lacks duals, and the category \cat{Rel} of sets and relations lacks non-trivial scalars, but they all have the other two of the three properties.

A completely different interpretation of the categorical structures described in this paper is as a model for the differential calculus of polynomials in one or more variables (see \cite{bcsdifftl}, and also \cite{differentialfiore} developed in parallel to this work.) In this viewpoint, the raising operator performs multiplication by a homogeneous polynomial of degree 1, and the lowering operator performs differentiation. It is meaningless to discuss whether the mathematics presented here is more `honestly' a model of calculus or of the quantum harmonic oscillator, but there are certainly differences between the viewpoints: it is not conventional to consider multiplication and differentiation as being literally adjoint to each other, for example, as is natural for the quantum raising and lowering operators. The quantum point of view also suggests different avenues for generalisation: there are many other quantum systems, and it would be fascinating to explore the extent to which they too have natural categorical descriptions. Questions concerning the categorical descriptions of the dynamics also seem difficult to ask from the differential calculus point of view.

In fact, rather then being alternatives, the two viewpoints may have an interesting and nontrivial connection. As mentioned in section \ref{conventionalqho}, we expect the position and momentum operators --- that is, the multiplication and differentiation operators --- to arise as linear combinations of the raising and lowering operators, as $x _\phi = (a ^\dag _\phi + a ^{\phantom{\dag}} _\phi) / \sqrt 2$ and \mbox{$p _\phi = i \, (a_\phi ^\dag - a_\phi ^{ \phantom{ \dag}}) / \sqrt 2$} respectively, in a suitable choice of units. Note that these operators are both self-adjoint. They are of crucial importance: for example, the spectrum of $x _\phi$, where $\phi$ ranges over a basis for the single-particle space, represents physical space for the case of a harmonic oscillator which is free to oscillate in all directions.

It is also striking that the framework presented in this paper has so much in common with the model theory for linear logic \cite{differentialfiore,melliespanorama}. However, some aspects which are key to the description of the quantum harmonic oscillator --- such as the raising and lowering operators, and their commutation relations --- seem syntactically absent from linear logic, and are therefore quite mysterious from that point of view. In light of this, it would be interesting to explore whether linear logic might itself admit modification that would render it more suitable for performing deductions about quantum systems, such as the quantum harmonic oscillator considered here.

\section{Acknowledgements}
I would like to thank Bob Coecke, Marcelo Fiore, Robin Houston, Andrea Schalk and especially Chris Isham for useful discussions.

\bibliography{../../jamiebib}

\end{document}